\newcommand{\lan}[1]{\ensuremath{\mathbf{#1}}\xspace}
\newcommand{\stratstyle}[1]{\ensuremath{\mathrm{#1}}}
\newcommand{\ATL}[1][]{\lan{ATL_{\stratstyle{#1}}}}
\newcommand{\CSL}[1][]{\lan{CSL}}
\newcommand{\CSLP}[1][]{\lan{CSLP}}
\newcommand{\MIATL}[1][]{IATL[R]}
\newcommand{\ATLES}[1][]{\lan{ATLES}}
\newcommand{\ATELA}[1][]{\lan{ATELA}}
\newcommand{\coop}[2][]{\langle\!\langle{#2}\rangle\!\rangle_{_{\!\mathit{#1}}}}
\newcommand{\Next}[1][]{\!\raisebox{-.2ex}{ \mbox{\unitlength=0.9ex
            \begin{picture}(2,2)
            \linethickness{0.06ex}
            \put(1,1){\circle{2}}   \end{picture}}}_{{#1}}  \,}
\newcommand{\Sometm}[1][]{\Diamond_{{#1}}}
\newcommand{\Always}[1][]{\Box_{{#1}}}
\renewcommand{\Next}{\mathrm{X}}
\renewcommand{\Sometm}{\mathrm{F}}
\renewcommand{\Always}{\mathrm{G}}
\newcommand{\TransArrow}[1][]{\hookrightarrow....}
                                    \newcommand{\onepath}[1][]{\ensuremath{\lambda\ifthenelse{\equal{#1}{}}{}{[#1]}}}
\newcommand{\prop}[1]{\ensuremath{\mathsf{{#1}}}}
\newcommand{\plaus}[1][]{\ifthenelse{\equal{#1}{}}{\mathbf{P\;\!\!l}\,}{\mathbf{P\;\!\!l}_{#1}\,}}
\newcommand{\phys}[1][]{\ifthenelse{\equal{#1}{}}{\mathbf{P\;\!\!h}\,}{\mathbf{P\;\!\!h}_{#1}\,}}
\newcommand{\plaumodels}[1][]{\ensuremath{\ifthenelse{\equal{#1}{}}{\models_\sPlaupaths}{\models_{#1}}}}
\newcommand{\sPlaupaths}{\ensuremath{P}}
\definecolor{lightgrey}{rgb}{0.8,0.8,0.8}
\definecolor{grey}{rgb}{0.6,0.6,0.6}
\definecolor{darkgrey}{rgb}{0.4,0.4,0.4}
\definecolor{darkgreen}{rgb}{0,0.7,0}
\newcommand{\bend}{20}
\newcommand{\onlabel}[1]{\colorbox{white}{\textit{{#1}}}}
\newcommand{\complexityclass}[1]{\ensuremath{\mathbf{{#1}}}\xspace}
\newcommand{\Ptime}{\complexityclass{P}}
\newcommand{\NP}{\complexityclass{NP}}
\newcommand{\Sigmacomplx}[1]{\complexityclass{\Sigma_{{#1}}^{\Ptime}}}
\newcommand{\Picomplx}[1]{\complexityclass{\Pi_{{#1}}^{\Ptime}}}
\newcommand{\putaway}[1]{}
\newcommand{\para}[1]{\smallskip\noindent\textbf{#1}}
\newenvironment{enumerate2}{\begin{enumerate}\itemsep 0in}{\end{enumerate}}
\newcommand{\finis}{{\scriptsize $\blacksquare$}}
\newcommand{\finisdef}{$\Box$}
\newcommand{\bul}{{\tiny $\blacksquare$}}
\def\itemiremember{\labelitemi}
\def\itemiiremember{\labelitemii}
\renewcommand{\leq}{\leqslant}
\renewcommand{\geq}{\geqslant}
\renewcommand{\phi}{\varphi}
\newcommand{\altbisim}[1]{\Lleftarrow\!\!\!\Rrightarrow_{#1}}
\newcommand{\notaltbisim}[1]{ \Lleftarrow\!\!\not\!\!\!\Rrightarrow_{#1}}
\newcommand{\br}[1]{\overline{#1}}
\newcommand{\ThreeBallot}{\textsf{ThreeBallot}\xspace}
\newtheorem{theorem}{Theorem}
\newtheorem{definition}[theorem]{Definition}
\newtheorem{lemma}[theorem]{Lemma}
\newtheorem{proposition}[theorem]{Proposition}
\newtheorem{corollary}[theorem]{Corollary}
\newtheorem{remark}[theorem]{Remark}
\newtheorem{example}{Example}
\newcommand{\G}{\mathcal G}
\newcommand\altsim[1]{\Rrightarrow_{#1}}
\newcommand{\mkCEGS}{\ensuremath{\text{iCGS}}\xspace}
\newcommand{\mkmrule}{\;|\;}
\newcommand{\mkbis}[1]{\thickapprox_{#1}}
\newcommand{\mkbiss}[1]{\mkbis{#1}^{subj}}
\newcommand{\mkbiso}[1]{\mkbis{#1}^{obj}}
\newcommand{\mkbisx}[1]{\mkbis{#1}^{x}}
\newcommand{\mkbissa}{\mkbiss{A}}
\newcommand{\nmkbissa}{\not \thickapprox_{A}^{subj}}
\newcommand{\nmkbissag}{\not \thickapprox_{Ag}^{subj}}
\newcommand{\mkbisoa}{\mkbiso{A}}
\newcommand{\mkbisxa}{\mkbisx{A}}
\newcommand{\mkpaltsima}{\paltsim{A}}
\newcommand{\mkpaltbisima}{\paltbisim{A}}
\newcommand{\mkstate}{q}
\newcommand{\mkstateb}{r}
\newcommand{\mkstrat}[1]{\sigma_{#1}}
\newcommand{\mkstrata}{\mkstrat{A}}
\newcommand{\mkimg}[2]{\mathit{Img}({#1}, {#2})}
\newcommand{\mkpartstrat}[2]{\mathit{PStr}_{#2}(#1)}
\newcommand{\mkpartstrata}[1]{\mkpartstrat{#1}{A}}
\newcommand{\mkmodelss}{\models_{\mathit{subj}}}
\newcommand\paltsim[1]{\rightsquigarrow_{#1}}
\newcommand\paltbisim[1]{\leftrightsquigarrow_{#1}}
\journal{Information \& Computation}
\begin{document}

\begin{frontmatter}

\title{Bisimulations for Verifying Strategic Abilities
  \\ with an Application to the ThreeBallot Voting Protocol}

\author[London]{Francesco Belardinelli}
\address[London]{Imperial College London, UK, and Universit\'e d'Evry, France}

\author[Creteil]{Rodica Condurache}
\address[Creteil]{LACL, Universit\'e Paris-Est Cr\'{e}teil, France}

\author[Creteil]{C\u{a}t\u{a}lin Dima}

\author[Warsaw]{Wojciech Jamroga}
\address[Warsaw]{Institute of Computer Science, Polish Academy of Sciences, Poland}

\author[Warsaw]{Michal Knapik}

\begin{abstract}
We propose a notion of alternating bisimulation for
strategic abilities under imperfect information. The bisimulation
preserves formulas of ATL$^*$ for both the {\em objective} and {\em
  subjective} variants of the state-based semantics with imperfect
information, which are commonly used in the modeling and verification
of multi-agent systems. Furthermore, we apply the theoretical result
to the verification of coercion-resistance in the ThreeBallot voting
system, a voting protocol that does not use cryptography. In
particular, we show that natural simplifications of an initial model
of the protocol are in fact bisimulations of the original model, and
therefore satisfy the same ATL$^*$ properties, including
coercion-resistance. These simplifications allow the model-checking
tool MCMAS to terminate on models with a larger number of voters and
candidates, compared with the initial model.
\end{abstract}

\begin{keyword}
Alternating-time Temporal Logic \sep
Bisimulations \sep
Voting Protocols \sep
Formal Verification.

\end{keyword}

\end{frontmatter}

\section{Introduction}

\noindent

Formal languages for expressing strategic abilities of rational agents
have witnessed a steady growth in recent years
\cite{BullingDixJamroga10a,BullingJamroga14,Goranko+04a}.  Among the
most significant contributions we mention Alternating-time Temporal
Logic~\cite{AlurHenzingerKupferman97,AlurHenzingerKupferman02},
(possibly enriched with strategy contexts \cite{LaroussinieM15}),
Strategy Logic~\cite{ChatterjeeHenzingerPitterman07,MogaveroMPV14},
and Coalition Logic \cite{Pauly02}. These languages allow to express
that a group of agents has a strategy to enforce a certain outcome,
regardless of the behavior of the other agents.
That provides syntactical and semantic means to characterize winning
conditions in multi-player games, notions of equilibrium (e.g. Nash),
strategy-proofness, and so
on~\cite{ChatterjeeHenzingerPitterman07,Ball06abstraction,Hoek+05a,MogaveroMuranoVardi10a}.

However, if logics for strategies are to be applied to the
specification and verification of multi-agent systems (MAS)
\cite{GammieMeyden04a,Kacprzak+07a,LomuscioQuRaimondi15}, they need to
be coupled with efficient model checking techniques. Unfortunately,
while in contexts of perfect information we benefit from tractable
algorithms for model checking
\cite{AlurHenzingerKupferman02}, the situation is rather different
once we consider imperfect information. In contexts of imperfect
information the complexity of the verification task ranges between
$\Delta^P_2$-completeness~\cite{JamrogaDix06} to
undecidability~\cite{DimaT11}, depending on whether we assume perfect
recall. In this setting, complementary model checking techniques are
being investigated, in order to make the problem feasible in practice,
including semantic \cite{BLMR17,BLMRijcai17} and syntactic
restrictions \cite{CermakLMM18}, as well as
approximations \cite{JamrogaKK17}.

Amongst these, model reduction seems one of the more promising paths.
In particular, reductions by state- and action-space abstraction have
proved to be a valuable tool for efficient
verification \cite{Cousot77abstraction,ClarkeGrumbergLong94,Clarke+00b},
also in the context of strategic
abilities~\cite{Ball06abstraction,Alfaro04three,Lomuscio16abstraction-atlk,Belardinelli17abstractions}. Within
that approach, the ``concrete'' system $S$ to be model-checked is
abstracted into a ``simpler'' model $S^A$, which typically contains
less states and transitions and therefore can be easier to
verify. Then, the verification result is transferred from abstraction
$S^A$ to the concrete $S$ by virtue of some preservation result.
Normally, preservation is guaranteed by proving that abstraction $S^A$
is \emph{similar} or \emph{bisimilar} to $S$. (Bi)simulations are a
powerful tool to analyze the expressiveness of modal languages,
starting with van Benthem's characterisation of modal logic as the
bisimulation-invariant fragment of first-order
logic \cite{Blackburn+01a}. However, (bi)simulations are a lot less
understood in logics for strategies, where they have been studied
mostly in the contexts of perfect information
scenarios~\cite{AlurHKV98,Goltz92equivalences,Agotnes07irrevocable}.

In this paper, we advance the state-of-the-art by introducing
simulations and bisimulations for alternating-time temporal logic
under imperfect information. We prove that these (bi)simulations
preserve the interpretation of formulas for the whole syntax of
ATL$^*$, when interpreted with imperfect information and imperfect
recall, for both the objective and subjective variants of the
semantics \cite{BullingJamroga14,Jamroga15specificationMAS}. Most
interestingly for MAS verification, we apply the (bi)simulations to
the model reduction of a class of
electronic voting protocols without encryption.

Electronic voting is often considered as an attractive alternative to paper-based elections and referendums due to a number of advantages:
increased accessibility, availability, voter turnout, cost-efficiency, and usability, as well as speed and accuracy of the voting, counting and publication processes.
However, electronic voting poses a number of challenges: resistance to
coercion and other types of fraud, secrecy, anonymity, verifiability,
democracy (the right to vote at most once), accountability, etc.
Those threats exist also in paper voting, but the use of technology
magnifies their scope and potential impact.  There are also other
issues, specific to electronic voting, such as limited access to
technology and complex interaction between technology and public
understanding and trust in the
procedure~\cite{computer-ate-my-vote,Ryan15verifiability}.  In this
paper, we focus on the property of \emph{coercion resistance} that
captures the voters' ability to choose freely how they vote: whatever
course of action the coercer adopts, the voter always has a strategy
to vote as they intend while appearing to comply with the coercer's
requirements.

An increasing amount of research has focused recently on the
verification of many of these properties for various types of voting
protocols \cite{Beckertetal2014,Cortier2015}.
The frameworks used for modeling and verifying security properties of
voting protocols include, to mention only a few, process calculi such
as the {\em applied $\pi$-calculus} or \emph{CSP}
\cite{DKR09,Hoare78,SS96}, rewriting-based approaches
\cite{CDLMS99,DM02,BJL12}, approaches based on flat transition systems
etc.

In this paper, we show how our bisimulation can be used to obtain significant model reductions for some voting protocols.
In consequence, we develop a verification procedure for those voting protocols, based on the multi-agent approach.
The main advantage of multi-agent logics is the provision of a unified specification language for a vast array of properties.
Due to that feature, the logics can for instance serve to disambiguate the variety of informal intuitions behind coercion resistance that are found in the
literature, cf.~\cite{Tabatabaei16expressing}.
Moreover, multi-agent logics allow for a more flexible and expressive specification of variants of coercion-resistance, involving explicit references to the dynamics of attacker's knowledge, as well as the (non)existence of strategies suitable for the attacker and/or the voter.
The same applies to specification of other important properties, such as individual verifiability, end-to-end (universal) verifiability, and accountability.

Here, we focus on a particular formalization of coercion resistance, and verify it for a simplified version of the ThreeBallot voting protocol \cite{Rivest06,Rivest07threeProtocols}.
Our first results in this line
already allow for verification of models with a larger number of
voters and candidates compared to the approach based on process
calculi in~\cite{Moran2014,Moran2016}.  Even more importantly, our
experiments show that bisimulation-based reductions can turn
verification from a utterly difficult task to something feasible in
practice.

\subsection{Structure of the Paper and Technical Contributions}

The article is structured as follows.  In Section~\ref{setting}, we
introduce the syntax and semantics of ATL$^*$ interpreted under the
assumption of imperfect information and imperfect recall. In
Section~\ref{bisimulation}, we propose the novel relations of
simulation and bisimulation for the setting. Moreover, we prove the
main theoretical result of this paper, namely that the bisimulation
preserves the interpretation of ATL$^*$ formulas. In
Section~\ref{sec:HM}, we show that our notion of bisimulation does not
enjoy the Hennessy-Milner property, and we discuss some necessary
conditions for two models to be logically equivalent. The results
highlight some interesting features of the current semantics of
ATL$^*$ with imperfect information, and might point to a novel
semantics for logics of strategic ability. Further, in
Section~\ref{3ballot}, we present our case study based on the
ThreeBallot voting protocol. We formalize ThreeBallot as a game
structure, and provide two abstractions of the protocol. Then, we
point out that the abstractions are indeed bisimilar with the original
game structure. Finally, in Section~\ref{exp}, we evaluate the gains
in terms of verification time, by comparing model checking of the
abstractions to model checking of the original model. We conclude in
Section~\ref{conc}, and indicate some future directions of research.

\para{Previous version of the material.}
This work extends and revises the results presented in the conference
paper~\cite{BelardinelliCDJJ17}. The main technical differences are as
follows.  First, our results are now given for the whole language of
ATL$^*$, while \cite{BelardinelliCDJJ17}~considered only its
restricted fragment ATL. Secondly, we correct a mistake in the main
preservation result. In the previous version, we erroneously used a
weaker notion of bisimulation, which actually does not preserve even
formulas of ATL. We discuss the details, and point out some
differences between the two definitions in
Section~\ref{sec:comparison}.  Thirdly, we report completely new
theoretical research in Section~\ref{sec:complexity} (computational
complexity of checking for bisimulation) and Section~\ref{sec:HM}
(Hennessy-Milner property and necessary conditions for logical
equivalence of models).  Finally, we revise the implementation of our
ThreeBallot models, and present entirely new experimental results.  In
particular, we extend our model checking experiments to the imperfect
information semantics of ATL, for which the bisimulation-based
reductions were in fact designed.

Regarding the presentation, we add a running example, and streamline
the introduction of the bisimulation, in order to make the concept
easier to understand.

\subsection{Related Work}

The literature on both logics for strategies and the formal
verification of voting protocols is extensive and rapidly
growing. Hereafter we only consider the works most closely related to
the present contribution.

{\bf Bisimulations for ATL.} An in-depth study of model equivalences
induced by various temporal logics appears
in \cite{Goltz92equivalences}. Bisimulations for ATL$^*$ with perfect
information have been introduced in \cite{AlurHKV98}. Since then there
have been various attempts to extend these to more expressive
languages (including Strategy Logic recently \cite{BelardinelliDM18}),
as well as to contexts of imperfect
information \cite{Agotnes07irrevocable,Dastani10programs-aamas}.
In \cite{Dastani10programs-aamas,Melissen13phd} non-local model
equivalences for ATL with imperfect information have been put
forward. However,
these works do not deal with the
imperfect information/imperfect recall setting here considered, nor do
they provide a local account of bisimulations.  Finally,
in \cite{BelardinelliCDJJ17} we provided a different notion of
bisimulation for ATL only with imperfect information and imperfect
recall. We remarked in the introduction that such definition was
flawed and provided a counterexample in
Section~\ref{sec:comparison}. The present contribution is also aimed at
rectifying this result.

{\bf Verification of Voting Protocols.} Our work is inspired by recent
contributions on the verification of voting protocols, mostly by using
the $\pi$-calculus and \emph{CSP}~\cite{DKR09,Hoare78,SS96}.
Some existing attempts at verification of e-voting protocols follow this approach by using one of the security verifiers~\cite{Ciobaca11,MengHW11}, or even a general equivalence checker for a process algebra~\cite{Moran2014,Moran2016,DelauneKR09}, where privacy-type and anonymity properties of the protocols are verified by using CSP.
An important strand in verification of voting protocols is based on specifications in first-order logic or linear logic. The verification is done by means of theorem proving~\cite{DeYoungS11,BeckertGS13b,PattinsonS15,DrewsenDS17} or bounded model checking (BMC)~\cite{Beckertetal2014,BeckertGS13a,BeckertGSBW14,Schurmann16}. In particular~\cite{Schurmann16} presents a BMC-based analysis of risk-limiting audits and~\cite{DrewsenDS17}
applies theorem proving to automated verification of the Selene voting protocol.
In \cite{Beckertetal2014} the authors define two semantic criteria for
single transferable vote (STV) schemes,
then show how BMC and SMT solvers can be used to check whether these
criteria are met.
In
\cite{Moran2016} the authors construct CSP models of the
\ThreeBallot system and use them to produce an automated formal
analysis of their anonymity properties.  
Finally, some challenges and solutions for verification of end-to-end verifiable systems were addressed in~\cite{Schurmann13,Cortier15}.

An issue that can be identified with many of the above approaches is
that the description of the system, and the property to be verified,
are not clearly distinguished.  We emphasize that multi-agent logics
allow for a clear separation of the two.  Moreover, they provide a
wider variety of properties, including ones that are related to the
existence of the attacker's strategies.  Last but not least, they give
reasonable hope for interesting verification performance. In our
experiments, we have been able to model-check \ThreeBallot models with
5 voters and 2 candidates, or 4 candidates and 3 voters, while
e.g.~the results in \cite{Moran2016} are provided for at most 3 voters
and 2 candidates.  In this respect, the closest work to ours is the
recent paper~\cite{Jamroga18Selene}, presenting some experimental
results for verification of the voting protocol SELENE, based on the
multi-agent logic ATL$_{ir}$.

\section{The Formal Setting} \label{setting}

In this section, we introduce the syntax of the Alternating-time
Temporal Logic ATL$^*$ \cite{AlurHenzingerKupferman02}, and present
its semantics for agents with imperfect information and imperfect
recall, defined on imperfect information concurrent game structures
(iCGS).  The assumption of \emph{imperfect information} means that
agents can only partially observe the global state of the
system. Thus, they have to make their decisions, and execute their
strategies, with only partial knowledge about the current
situation. The assumption of \emph{imperfect recall} means that they
do not necessarily memorize all of their past observations. This does
not mean that the agents have no memory at all. However, an agent's
recall of the past must be encapsulated in the current local state of
the agent.

The formal definitions and notation follow~\cite{DimaT11}.

\subsection{Concurrent Game Structures}

Concurrent game structures have originally been introduced
in \cite{AlurHenzingerKupferman02} in a perfect information
setting. Here we consider their version for contexts of imperfect
information~\cite{Jamroga+04b}.

\begin{definition} \label{iCGS}
A {\em concurrent game structure with imperfect information}, or iCGS,
is a tuple $\G = \langle Ag, AP, S, s_0, Act,  \{\sim_i \}_{i \in Ag},
d,\to,  \pi \rangle$ such that
\begin{itemize}
\item $Ag$ is a nonempty and finite set of \emph{agents}. Subsets $A \subseteq Ag$ of agents are called \emph{coalitions}.
\item $AP$ is a set of \emph{atomic propositions}, or atoms.
\item  $S$ is a non-empty set of \emph{states} and $s_0 \in S$ is the \emph{initial state} of $\G$.
\item $Act$ is a finite non-empty set of {\em actions}. A tuple $\vec a  = ( a_i)_{i \in Ag} \in Act^{Ag}$
is called a \emph{joint action}.

\item For every agent $i \in Ag$, $\sim_i$ is an equivalence relation on $S$,
called the \emph{indistinguishability relation} for $i$.

\item $d: Ag \times S \to (2^{Act} \setminus \{ \emptyset \})$ is the \emph{protocol function},
satisfying the property that, for all states $s, s'\in S$ and any agent $i$, $s \sim_i s'$ implies $d(i, s) = d(i,
s')$. That is, the same (non-empty) set of  actions is available to agent $i$ in indistinguishable states.
\item $\to \subseteq S \times Act^{Ag} \times S$ is the
  {\em transition relation} such that, for every state $s \in S$ and
	joint action $\vec{a} \in Act^{Ag}$, $(s,\vec{a},s') \in \to$
 for some state $s' \in S$ iff $a_i \in d(i, s)$ for every agent $i \in
	Ag$.  We normally write $s \xrightarrow{\vec{a}} r$ for
	$(s,\vec{a},r) \in \to$.
\item
$\pi: S \to 2^{AP}$ is the {\em state-labeling function}.
\end{itemize}
\end{definition}

By Def.~\ref{iCGS} at any given state $s$, every agent $i \in Ag$ can
perform the enabled actions in $d(i,s)$. A joint action $\vec{a}$
fires a transition from state $s$ to some state $s'$ only if each
$a_i$ is enabled for agent $i$ in $s$. Further, every agent $i$ is
equipped with an indistinguishability relation $\sim_i$, with
$s \sim_i s'$ meaning that $i$ cannot tell state $s$ from state $s'$,
i.e., agent $i$ possesses the same information, makes the same
observation in the two states. In particular, the same actions are
enabled in indistinguishable states.

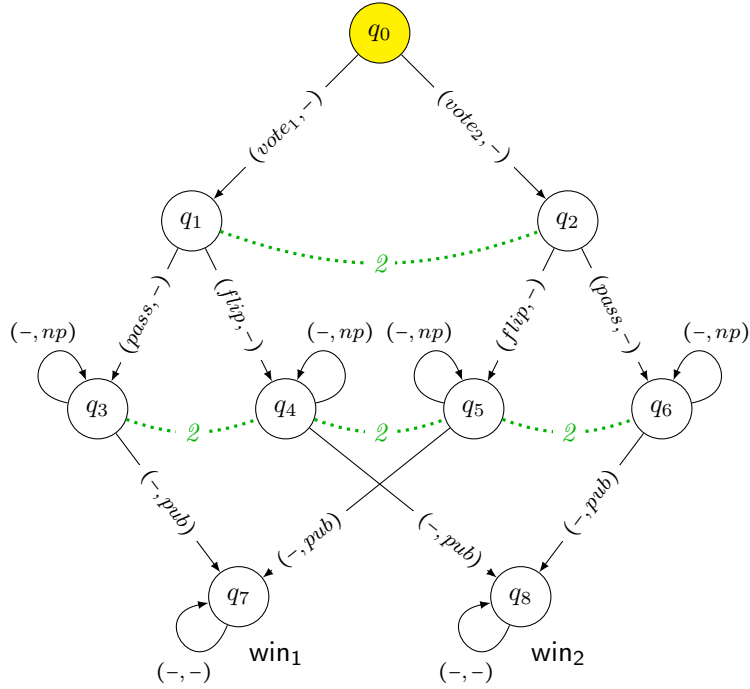
\begin{figure}[t]\centering
\begin{tikzpicture}[>=latex,scale=1.25]
 
    \tikzstyle{state}=[circle,draw,trans, minimum size=8mm]
    \tikzstyle{initstate}=[circle,draw,trans, minimum size=8mm, fill=yellow]
    \tikzstyle{trans}=[font=\footnotesize]
    \tikzstyle{epist}=[-,dotted,darkgreen,very thick,font=\footnotesize]
    \tikzstyle{loop}=[-latex]

    \path (0,0) node[initstate] (qinit) {$q_{0}$}
          (-2,-2) node[state]  (qv1) {$q_{1}$}
          (2,-2) node[state]  (qv2) {$q_{2}$}
          (-3,-4) node[state]  (qv1pass) {$q_{3}$}
          (-1,-4) node[state]  (qv1flip) {$q_{4}$}
          (1,-4) node[state]  (qv2flip) {$q_{5}$}
          (3,-4) node[state]  (qv2pass) {$q_{6}$}
          (-1.5,-6) node[state]  (qv1pub) {$q_{7}$}
									  +(0.4,-0.6) node {\small $\prop{win_1}$}
          (1.5,-6) node[state]  (qv2pub) {$q_{8}$}
									  +(0.4,-0.6) node {\small $\prop{win_2}$}
        ;

    \path[->,font=\scriptsize] (qinit)
          edge
            node[midway,sloped]{\onlabel{$(vote_1,-)$}} (qv1)
          edge
            node[midway,sloped]{\onlabel{$(vote_2,-)$}} (qv2);

    \path[->,font=\scriptsize] (qv1)
          edge
            node[midway,sloped]{\onlabel{$(flip,-)$}} (qv1flip)
          edge
            node[midway,sloped]{\onlabel{$(pass,-)$}} (qv1pass);
    \path[->,font=\scriptsize] (qv2)
          edge
            node[midway,sloped]{\onlabel{$(pass,-)$}} (qv2pass)
          edge
            node[midway,sloped]{\onlabel{$(flip,-)$}} (qv2flip);

    \path[->,font=\scriptsize] (qv1pass)
          edge
            node[midway,sloped]{\onlabel{$(-,pub)$}} (qv1pub);
    \path[->,font=\scriptsize] (qv1flip)
          edge
            node[near end,sloped]{\onlabel{$(-,pub)$}} (qv2pub);
    \path[->,font=\scriptsize] (qv2pass)
          edge
            node[midway,sloped]{\onlabel{$(-,pub)$}} (qv2pub);
    \path[->,font=\scriptsize] (qv2flip)
          edge
            node[near end,sloped]{\onlabel{$(-,pub)$}} (qv1pub);

    \path[epist] (qv1)
          edge[bend right=\bend]
            node[midway,sloped]{\onlabel{\color{darkgreen} 2}} (qv2);
    \path[epist] (qv1pass)
          edge[bend right=\bend]
            node[midway,sloped]{\onlabel{\color{darkgreen} 2}} (qv1flip);
    \path[epist] (qv1flip)
          edge[bend right=\bend]
            node[midway,sloped]{\onlabel{\color{darkgreen} 2}} (qv2flip);
    \path[epist] (qv2flip)
          edge[bend right=\bend]
            node[midway,sloped]{\onlabel{\color{darkgreen} 2}} (qv2pass);

    \draw[loop](qv1pass) ..controls +(-1,0.2) and +(-0.4,1).. (qv1pass) node[midway,above=3pt,font=\scriptsize] {{$(-,np)$}};
    \draw[loop](qv1flip) ..controls +(1,0.2) and +(0.4,1).. (qv1flip) node[midway,above=3pt,font=\scriptsize] {{$(-,np)$}};
    \draw[loop](qv2pass) ..controls +(1,0.2) and +(0.4,1).. (qv2pass) node[midway,above=3pt,font=\scriptsize] {{$(-,np)$}};
    \draw[loop](qv2flip) ..controls +(-1,0.2) and +(-0.4,1).. (qv2flip) node[midway,above=3pt,font=\scriptsize] {{$(-,np)$}};
    \draw[loop](qv1pub) ..controls +(-0.4,-1) and +(-1,-0.2).. (qv1pub) node[midway,below=3pt,font=\scriptsize] {{$(-,-)$}};
    \draw[loop](qv2pub) ..controls +(-0.4,-1) and +(-1,-0.2).. (qv2pub) node[midway,below=3pt,font=\scriptsize] {{$(-,-)$}};
 \end{tikzpicture}
\caption{Simple model of 2-stage voting. Dotted lines represent indistinguishability for agent 2 (the election official).}
\label{fig:voting-model}
\end{figure}

\begin{example}\label{ex:voting-model}
Consider a two-stage voting system with a simple anti-coercion mechanism. In the first stage, the voter casts her vote for one of the candidates; we assume for simplicity that there are only two candidates in the election. In the second stage, the voter has the option to send a signal -- by an independent communication channel -- that flips her vote to the other candidate.
After that, the election official publishes the result of the election (action \emph{pub}). However, a dishonest official can also decide not to publish the outcome (action \emph{np}).
An iCGS modeling the scenario for a single voter (represented by agent $1$) is depicted in Figure~\ref{fig:voting-model}. The election official is represented by agent $2$; the dotted lines indicate that the official does not directly see the choices made by the voter.
Note that this toy model is in fact a turn-based game, i.e., in every state only a single agent has a choice of decision while the other agent is passive.
\end{example}

\para{Runs.}
Given an iCGS $\G$ as above, a {\em run} is a finite or infinite
sequence $\lambda = s_0 \vec{a}_0 s_1 \ldots$ in $((S \cdot
Act^{Ag})^*\cdot S) \cup (S \cdot Act^{Ag})^\omega$ such that for
every $j\geq 0$, $s_j \xrightarrow{\vec{a}_j} s_{j+1}$. Given a run
$\lambda = s_0 \vec{a}_0 s_1 \ldots$ and $j \geq 0$, $\lambda[j]$
denotes the $j+1$-th state $s_j$ in the sequence; while $\lambda_{\geq
j}$ denotes run $s_j \vec{a}_j s_{j+1} \ldots$ starting from
$\lambda[j]$.  We denote by $Run(\G)$ the set of all runs in iCGS
$\G$.
For a coalition $A \subseteq Ag$ of agents, a {\em joint $A$-action}
denotes a tuple $\vec a_A = (a_i)_{i \in A} \in Act^A$ of actions, one
for each agent in $A$.  For coalitions $A\subseteq B \subseteq Ag$ of
agents, a joint $A$-action $\vec a_A$ \emph{is extended} by a joint
$B$-action $\vec b_B$, denoted $\vec a_A \sqsubseteq \vec b_B$, if for
every $i \in A$, $a_i = b_i$.  Also, a joint $A$-action $\vec
a_A$ is \emph{enabled} at state $s\in S$ if for every agent $i \in A$,
$a_i \in d(i,s)$.

\para{Epistemic neighbourhoods.}
Given a coalition $A \subseteq Ag$ of agents,
the \emph{collective
knowledge relation} $\sim^E_A$ is defined as $\bigcup_{i \in
A} \sim_i$, while the \emph{common knowledge relation} $\sim^C_A$ is
the transitive closure $(\bigcup_{i \in A} \sim_i)^+$ of $\sim^E_A$.
Then,
$C_A^{\G}(q) = \{ q' \in S \mid q' \sim_A^C q \}$
is the {\em common knowledge neighbourhoods} (CKN),
of state $q$ for coalition $A$ in the iCGS $\G$.  We will omit the
superscript $\G$ whenever it is clear from the context.

\para{Uniform strategies.}
We now recall a notion of strategy adapted to iCGS with imperfect
information \cite{Jamroga+04b}.
\begin{definition} \label{unstrategy}
A \emph{(uniform)} {\em strategy} for an agent $i \in Ag$ is a
function $\sigma : S \to Act$ that is compatible with $d$ and
$\sim_i$, i.e.,
\begin{itemize}
\item for every state $s \in S$, $\sigma(s) \in d(i, s)$;
\item for all states $s, s' \in S$, $s \sim_i s'$ implies $\sigma(s) = \sigma(s')$.
\end{itemize}
\end{definition}

By Def.~\ref{unstrategy} a strategy has to be uniform in the sense
that in indistinguishable states it must return the same action. Such
strategies are also known as {\em observational} in the literature on
game theory and control theory.  Note that in this paper we use
memoryless strategies, whereby only the current state determines the
action to perform.
This choice is dictated by the application
in hand, namely voting protocols,
where the unbounded recall is not needed.
Namely, the information available to the agents is fully encoded
in global states together with help of indistinguishability relations\footnote{Therefore memoryless strategies
already encode the agent's memory of all her past observations.}.
Note that the finite memory of a given depth of recall can be obtained in
practical setting by introducing an adequate number of special variables,
i.e., dimensions in iCGSs whose states are valuations of variables.

Perfect recall strategies with imperfect information can be defined
similarly, as memoryless strategies on tree unfoldings of iCGS. We
leave this extension for future work.

A strategy for a coalition $A$ of agents is a set $\sigma_A
= \{ \sigma_a \mid a \in A\}$ of strategies, one for each agent in
$A$.
Given coalitions $A \subseteq B \subseteq Ag$, a strategy $\sigma_A$
for coalition $A$, a state $s \in S$, and a joint $B$-action $\vec b_B \in
Act^B$ that is enabled at $s$, we say that $\vec b_B$
is \emph{compatible with} $\sigma_A$ ({\em in} $s$)
whenever $\sigma_A(s) \sqsubseteq \vec
b_B$.
For states $s, s' \in S$ and strategy $\sigma_A$, we write
$s \xrightarrow{\sigma_A(s)} r$ if $s \xrightarrow{\vec a} r$ for
some joint action $\vec a \in Act^{Ag}$ that is compatible with
$\sigma_A$.

There exist two alternative semantics of strategic operators under imperfect information, corresponding to two different notions of success for a strategy. The idea of \emph{subjective} ability~\cite{Schobbens04ATL,Jamroga+04b} requires that a winning strategy must succeed from all the states that the coalition considers possible in the initial state of the play. The alternative notion of \emph{objective} ability~\cite{BullingJamroga14} assumes that it suffices for the strategy to succeed from the initial state alone.
Accordingly, we define two notions of the \emph{outcome} of a strategy $\sigma_A$ at
state $s$, corresponding to the
{objective} and the {subjective} interpretation of alternating-time operators.
Fix a state $s$ and a strategy $\sigma_A$ for coalition $A$.
\begin{enumerate}
\item The set
of \emph{objective outcomes of $\sigma_A$ at $s$}
is defined as
$out^\G_{obj} (s, \sigma_A) = \big \{
\lambda \in Run(\G) \mid \lambda[0] = s \text{ and for all } j \geq 0,
\lambda[j] \xrightarrow{\sigma_A(\lambda[j])} \lambda[j + 1]\big\}
$.

\item The set
of \emph{subjective outcomes of $\sigma_A$ at $s$} is defined as
$
out^\G_{subj} (s, \sigma_A) =
\displaystyle\bigcup_{i \in A, s' \sim_i s} out^\G_{obj}(s', \sigma_A)$.
\end{enumerate}

\subsection{Alternating-Time Temporal Logic.}

We now introduce the alternating-time temporal logic ATL$^*$ to be interpreted on iCGS.
\begin{definition}[ATL$^*$]
The language of ATL$^*$ is formally defined by the following grammar, where $p \in AP$ and $A \subseteq Ag$:
\begin{eqnarray*}
\phi  & ::= &
p \mid \neg \phi \mid \phi \to \phi \mid
  \llangle A \rrangle \psi\\
\psi  & ::= &
\phi \mid \neg \psi \mid \psi \to \psi \mid
   X \psi \mid \psi U \psi
\end{eqnarray*}
Formulas $\phi$ are called \emph{state formulas of ATL$^*$}, or simply \emph{formulas of ATL$^*$}.
Formulas $\psi$ are sometimes called \emph{path formulas of ATL$^*$}. 
\end{definition}

The ATL$^*$ operator $\llangle A \rrangle$ intuitively means that `the
agents in coalition $A$ have a (collective) strategy to achieve \ldots',
where the goals are LTL formulas built by using operators `next' $X$ and
`until' $U$.
We
define $A$-formulas as the formulas in ATL$^*$ in which $A$ is the only
coalition appearing in ATL$^*$ modalities.

Traditionally, ATL$^*$ under imperfect information has been given either
state-based or history-based semantics, and several variations have
been considered on the interpretation of strategy operators. Here we
present both the  objective and subjective
variants of the state-based semantics with imperfect information and
imperfect recall.
\begin{definition} \label{semantics}
Given an iCGS $\G$, a state formula $\phi$, and path formula $\psi$, the \emph{subjective}
(resp. \emph{objective}) satisfaction of $\phi$ at state $s$ and of $\psi$ in path $\lambda$, denoted
$(\G, s) \!\models_{x} \!\phi$ and $(\G, \lambda) \!\models_{x} \!\psi$ for $x \!= \!subj$ (resp.~$x \!= \!obj$), is
defined recursively as follows:
\begin{align*}
& (\G, s)  \models_{x}  p & \text{ iff } &  p \in \pi(s) \\
& (\G, s)  \models_{x}  \lnot \phi & \text{ iff } & (\G, s) \not \models_x \phi\\
&(\G, s)  \models_{x} \phi \to \phi' & \text{ iff } & (\G, s) \not \models_x \phi \text{ or } (\G, s) \models_x \phi'\\
&(\G, s) \models_{x} \llangle A \rrangle \psi & \text{ iff } & \text{ for some } \sigma_{A}, \text{ for all } \lambda \in out^\G_{x}(s, \sigma_A),
(\G, \lambda) \models_{x} \psi\\
&(\G, \lambda) \models_{x} \phi & \text{ iff } & (\G, \lambda[0]) \models_{x} \phi\\
& (\G, \lambda)  \models_{x}  \lnot \psi & \text{ iff } & (\G, \lambda) \not \models_x \psi\\
&(\G, \lambda)  \models_{x} \psi \to \psi' & \text{ iff } & (\G, \lambda) \not \models_x \psi \text{ or } (\G, \lambda) \models_x \psi'\\
&(\G, \lambda) \models_{x}  X \psi & \text{ iff } &  (\G, \lambda_{\geq 1}) \models_{x}   \psi\\
&(\G, \lambda) \models_{x}  \psi U \psi' & \text{ iff } & \text{ for some } j \geq 0, (\G, \lambda_{\geq j}) \models_{x} \psi'\\
 & & & \text{ and for all } k,  0 \leq k <j \text{ implies } (\G, \lambda_{\geq k}) \models_{x} \psi
\end{align*}
\end{definition}

\begin{example}\label{ex:voting-atl}
Consider the simple voting model in Example~\ref{ex:voting-model}.
Clearly, the voter cannot enforce a win of any candidate, since the election official might block the publication of the outcome:
$\G,q_0 \models \neg\coop{1}\Sometm\prop{win_1} \land \neg\coop{1}\Sometm\prop{win_2}$.
On the other hand, she can prevent any given candidate from being elected, by casting her vote for the other candidate:
$\G,q_0 \models \coop{1}\Always\neg\prop{win_1} \land \coop{1}\Always\neg\prop{win_2}$.
Finally, the voter and the official together can get any arbitrary candidate elected:
$\G,q_0 \models \coop{1,2}\Sometm\prop{win_1} \land \coop{1,2}\Sometm\prop{win_2}$.
Note that the truth value of formulae in $q_0$ does not depend on whether we use the subjective or the objective semantics.

The difference between the two semantics can be demonstrated, e.g., in state $q_3$. In that state, the election official has the objective ability to make candidate 1 win
($\G,q_3 \models_{obj} \coop{2}\Sometm\prop{win_1}$) by playing \emph{pub} regardless of anything.
On the other hand, there is no uniform strategy for $2$ that guarantees the same from all the states indistinguishable from $q_3$ (i.e., from $q_3, q_4, q_5, q_6$).
In consequence, $\G,q_3 \models_{subj} \neg\coop{2}\Sometm\prop{win_1}$.
\end{example}

The individual knowledge operator $K_i$ can be added to the syntax
of ATL$^*$ with the following semantics:
\[
 (\G, s) \models_{x} K_i \phi \text{ iff for all } s'\in S,
 s' \sim_i s \text{ implies } (\G, s') \models_{x} \phi
\]

By considering the subjective interpretation of ATL$^*$, this operator can
be derived: $(\G,s) \models_{subj} K_i \phi $ iff
$(\G,s) \models_{subj} \llangle i \rrangle \phi U \phi$.
There exists no such definition for the knowledge operator in ATL$^*$ with the objective semantics.
The latter is easy to see, as the objective semantics of $\coop{i}$ in $(\G,s)$ does not refer in any way to the states epistemically indistinguishable from $s$.

\section{Simulations and Bisimulations} \label{bisimulation}

In this section we introduce a notion of bisimulation for imperfect
information concurrent game structures.  Then, we show that it
preserves the meaning of formulas in ATL$^*$, when interpreted under
the assumptions of imperfect information and imperfect recall,
introduced in Section~\ref{setting}.

\subsection{Bisimulation for ATL$^*$ with Imperfect Information and Imperfect Recall}

We start with some auxiliary notions and definitions.

\para{Partial strategies and outcome states.}
A \emph{partial (uniform) strategy} for agent $i \in Ag$ is a partial
function $\sigma : S \rightarrow Act$ such that for each $s, s' \in
S$, if $s \sim_i s'$ then $\sigma(s) = \sigma(s')$.  We denote
the domain of the partial strategy $\sigma$ as $dom(\sigma)$.  Given a
coalition $A \subseteq Ag$, a \emph{partial strategy for
$A$} is a tuple $(\sigma_i)_{i \in A}$ of partial
strategies, one for each agent $i \in A$.
The set of partial uniform strategies for $A$ is denoted $PStr_A$.
Given set $Q\subseteq S$ of states and coalition $A\subseteq Ag$,
we denote by $PStr_A(Q)$ the set of partial uniform strategies whose domain is
$Q$:
\[
PStr_A(Q) = \big\{(\sigma_i)_{i \in A} \in PStr_A \mid dom(\sigma_i) =
Q \text{ for all } i \in A\big\}
\]
Additionally, given a (total or partial) strategy $\sigma_A$  and a
state $q\in dom(\sigma_A)$, define the set of \emph{successor
states of $q$ by $\sigma$} as $succ(q, \sigma_A) = \{ s \in S \mid
q \xrightarrow{\sigma_A(q)} s \}$, and put 
$succ(\sigma_A) = succ(dom(\sigma_A),\sigma_A) = \bigcup_{s \in dom(\sigma_A)} succ(s,\sigma_A)$.

\para{Strategy simulators.}
Let $\G,\G'$ be two iCGS. A \emph{simulator of partial strategies for
coalition $A$} from $\G$ into $\G'$ is a set $ST$ of functions $ST_{Q,Q'}: PStr_A(Q) \rightarrow PStr_A(Q')$ for some subsets
$Q \subseteq S$ and $Q'\subseteq S'$.  Intuitively, every $ST_{Q,Q'}$ maps
each partial strategy $\sigma_A$ defined on set $Q$ in the iCGS $\G$
into a ``corresponding'' strategy $\sigma'_A$ defined on $Q'$ in
$\G'$.
Typically, we will map strategies between the common knowledge
neighborhoods of ``bisimilar'' states in $\G$ and $\G'$.  We formalize
this idea as follows.  Let $R \subseteq S\times S'$ be some
relation between states in $\G$ and $\G'$.  A \emph{simulator of
partial strategies for coalition $A$ with respect to relation $R$} is
a family $ST$ of functions $ST_{C_A(q),C_A'(q')} :
PStr_A(C_A(q)) \rightarrow PStr_A(C_A'(q'))$ such that $q\in S$,
$q'\in S'$, and $q R q'$.
Note that, since the functions are indexed by equivalence classes of states, the following additional property is automatically guaranteed:
\begin{itemize}
\item for every $r \in S$, $r' \in S'$, if $r \in C_A(q)$, $r' \in C_A'(q')$,
and $r R r'$, then $ST_{C_A(q),C_A'(q')} = ST_{C_A(r),C_A'(r')}$.
\end{itemize}

\para{Simulation and bisimulation.}
We can now present our notions of simulation and bisimulation on iCGS.
\begin{definition}[Simulation] \label{def:bisim}
Let $\G = \langle Ag, AP, S, s_0,  Act, \{\sim_i \}_{i \in Ag}, d, \to, \pi \rangle$
and $\G' = \langle Ag, AP, S', s'_0,  Act', \{\sim'_i \}_{i \in Ag}, d', \to', \pi'
\rangle$
be two iCGS defined on the same sets $Ag$ of agents and $AP$ of atoms.
Let $A\subseteq Ag$ be a coalition of agents.  A relation
$\altsim{A} \subseteq S \times S'$ is a \emph{simulation for $A$} iff
\begin{enumerate}
\item There exists a simulator $ST$ of partial strategies for $A$ w.r.t.~$\altsim{A}$,
such that $q \altsim{A} q'$ implies that:\label{it:original}
\begin{enumerate}
\item\label{it:agree} $\pi(q) = \pi'(q')$;

\item for every $i \in A$ , $r' \in S'$, if $q' \sim'_i r'$ then for some
  $r \in S$, $q \sim_i r$ and $r \altsim{A} r'$.

\item For every states $r \in C_A(q)$, $r' \in C_A'(q')$ such that $r \altsim{A} r'$,
  for every partial strategy $\sigma_A \in PStr_A(C_A(q))$,
  and every state $s' \in succ(r', ST(\sigma_A))$,
  there exists a state $s\in succ(r, \sigma_A)$ such that $s \altsim{A} s'$.
\end{enumerate}
\item If $q_1 \altsim{A} q'$ and $q_2 \altsim{A} q'$, then $C_A(q_1) = C_A(q_2)$.\label{it:inject}
\end{enumerate}

A relation $\altbisim{A}$ is a {\em bisimulation} iff both $\altbisim{A}$
and its converse $\altbisim{A}^{-1} = \{ (q',q) \mid q \altbisim{A} q' \}$ are simulations.
\end{definition}

By Def.~\ref{def:bisim} if state $q'$ {\em simulates} $q$, i.e.,
$q \altsim{A} q'$, then 1.(a) $q$ and $q'$ agree on the interpretation
of atoms; 1.(b) $q$ simulates the epistemic transitions from $q'$,
that is, information encoded in states is preserved by (bi)similarity;
and 1.(c) for every partial strategy $\sigma_A$, defined on the common
knowledge neighborhood $C_A(q)$, we are able to find some partial
strategy $ST(\sigma_A)$ (the same for all states in $C_A(q)$) such
that the transition relations $\xrightarrow{ST(\sigma_A)}$ and
$\xrightarrow{\sigma_A}$ commute with the simulation relation
$\altsim{A}$. Moreover, (2) the simulation relation is injective when
lifted to common knowledge neighborhoods.

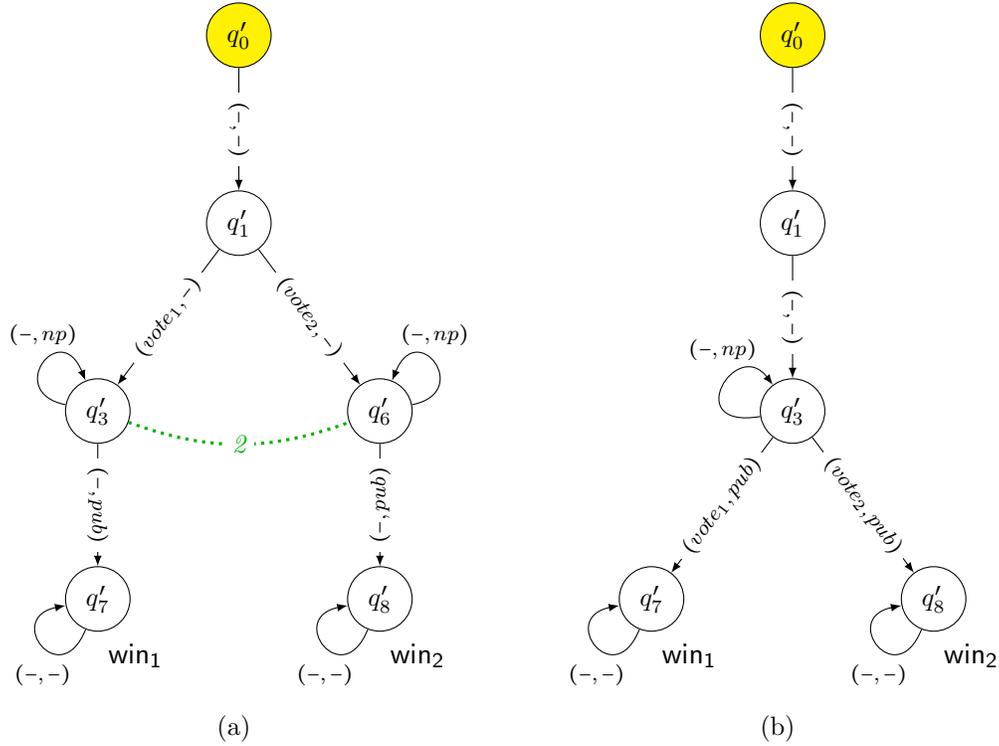
\begin{figure}[t]\centering
\subfigure[]{
\begin{tikzpicture}[>=latex,scale=1.25]
 
    \tikzstyle{state}=[circle,draw,trans, minimum size=8mm]
    \tikzstyle{initstate}=[circle,draw,trans, minimum size=8mm, fill=yellow]
    \tikzstyle{trans}=[font=\footnotesize]
    \tikzstyle{epist}=[-,dotted,darkgreen,very thick,font=\footnotesize]
    \tikzstyle{loop}=[-latex]

    \path (0,0) node[initstate] (qinit) {$q_{0}'$}
          (0,-2) node[state]  (qv1) {$q_{1}'$}
          (-1.5,-4) node[state]  (qv1pass) {$q_{3}'$}
          (1.5,-4) node[state]  (qv2pass) {$q_{6}'$}
          (-1.5,-6) node[state]  (qv1pub) {$q_{7}'$}
									  +(0.4,-0.6) node {\small $\prop{win_1}$}
          (1.5,-6) node[state]  (qv2pub) {$q_{8}'$}
									  +(0.4,-0.6) node {\small $\prop{win_2}$}
        ;

    \path[->,font=\scriptsize] (qinit)
          edge
            node[midway,sloped]{\onlabel{$(-,-)$}} (qv1);

    \path[->,font=\scriptsize] (qv1)
          edge
            node[midway,sloped]{\onlabel{$(vote_1,-)$}} (qv1pass)
          edge
            node[midway,sloped]{\onlabel{$(vote_2,-)$}} (qv2pass);

    \path[->,font=\scriptsize] (qv1pass)
          edge
            node[midway,sloped]{\onlabel{$(-,pub)$}} (qv1pub);
    \path[->,font=\scriptsize] (qv2pass)
          edge
            node[midway,sloped]{\onlabel{$(-,pub)$}} (qv2pub);

    \path[epist] (qv1pass)
          edge[bend right=\bend]
            node[midway,sloped]{\onlabel{\color{darkgreen} 2}} (qv2pass);

    \draw[loop](qv1pass) ..controls +(-1,0.2) and +(-0.4,1).. (qv1pass) node[midway,above=3pt,font=\scriptsize] {{$(-,np)$}};
    \draw[loop](qv2pass) ..controls +(1,0.2) and +(0.4,1).. (qv2pass) node[midway,above=3pt,font=\scriptsize] {{$(-,np)$}};
    \draw[loop](qv1pub) ..controls +(-0.4,-1) and +(-1,-0.2).. (qv1pub) node[midway,below=3pt,font=\scriptsize] {{$(-,-)$}};
    \draw[loop](qv2pub) ..controls +(-0.4,-1) and +(-1,-0.2).. (qv2pub) node[midway,below=3pt,font=\scriptsize] {{$(-,-)$}};
 \end{tikzpicture}
\label{fig:votingmodel-bisim1}
}
\hspace{.5cm}
\subfigure[]{
\begin{tikzpicture}[>=latex,scale=1.25]
 
    \tikzstyle{state}=[circle,draw,trans, minimum size=8mm]
    \tikzstyle{initstate}=[circle,draw,trans, minimum size=8mm, fill=yellow]
    \tikzstyle{trans}=[font=\footnotesize]
    \tikzstyle{epist}=[-,dotted,darkgreen,very thick,font=\footnotesize]
    \tikzstyle{loop}=[-latex]

    \path (0,0) node[initstate] (qinit) {$q_{0}'$}
          (0,-2) node[state]  (qv1) {$q_{1}'$}
          (0,-4) node[state]  (qv1flip) {$q_{3}'$}
          (-1.5,-6) node[state]  (qv1pub) {$q_{7}'$}
									  +(0.4,-0.6) node {\small $\prop{win_1}$}
          (1.5,-6) node[state]  (qv2pub) {$q_{8}'$}
									  +(0.4,-0.6) node {\small $\prop{win_2}$}
        ;

    \path[->,font=\scriptsize] (qinit)
          edge
            node[midway,sloped]{\onlabel{$(-,-)$}} (qv1);

    \path[->,font=\scriptsize] (qv1)
          edge
            node[midway,sloped]{\onlabel{$(-,-)$}} (qv1flip);

    \path[->,font=\scriptsize] (qv1flip)
          edge
            node[midway,sloped]{\onlabel{$(vote_1,pub)$}} (qv1pub)
          edge
            node[midway,sloped]{\onlabel{$(vote_2,pub)$}} (qv2pub);

    \draw[loop](qv1flip) ..controls +(-1.2,-0.2) and +(-0.6,0.8).. (qv1flip) node[midway,above=6pt,font=\scriptsize] {{$(-,np)$}};
    \draw[loop](qv1pub) ..controls +(-0.4,-1) and +(-1,-0.2).. (qv1pub) node[midway,below=3pt,font=\scriptsize] {{$(-,-)$}};
    \draw[loop](qv2pub) ..controls +(-0.4,-1) and +(-1,-0.2).. (qv2pub) node[midway,below=3pt,font=\scriptsize] {{$(-,-)$}};
 \end{tikzpicture}
\label{fig:votingmodel-bisim2}
}
\caption{(a) Even simpler model of 2-stage voting (from the point of view of the election official).
(b) Yet simpler model of 2-stage voting.}
\end{figure}

\begin{example}\label{ex:voting-bisim}
Let us go back to the simple two-stage voting from Example~\ref{ex:voting-model} and Figure~\ref{fig:voting-model}.
We observe that the model is bisimilar for $A=\{2\}$ to the one in Figure~\ref{fig:votingmodel-bisim1}.
The bisimulation connects $q_0$ with $q_0'$; $q_1$ and $q_2$ with $q_1'$; $q_3$ and $q_4$ with $q_3'$; $q_5$ and $q_6$ with $q_6'$; $q_7$ with $q_7'$; and $q_8$ with $q_8'$.
As we will see in Section~\ref{sec:preservation}, this implies that the abilities of the election official in both models must be exactly the same.

The iCGS can be reduced even further, and still retain the same abilities of the singleton coalition $\{2\}$, see Figure~\ref{fig:votingmodel-bisim2}.
We leave it to the interested reader to find the bisimulation for $\{2\}$ between the two models.
\end{example}

\begin{remark}
Technically, condition (2) in Def.~\ref{def:bisim} is required
as is shown in the end of this section by a counterexample
(cf. Subsection~\ref{sec:comparison}).  The extra property (2) corrects the statement and proof
of Theorem 9 in \cite{BelardinelliCDJJ17}.  Notice that, for every
bisimulation $\altbisim{A}$ defined on $\G = \G'$, condition (2) says
that two bisimilar states must lie in the same common knowledge
neighborhood, which is a natural constraint considering that two
bisimilar states should satisfy the same formulas in ATL$^*$.
\end{remark}

\subsection{Preservation Theorem}\label{sec:preservation}

In order to show that bisimilar states satisfy the same formulas in
ATL$^*$, we prove the following auxiliary result.  Hereafter, runs
$\lambda \in S^+$, $\lambda' \in S'^+$ are $A$-{\em bisimilar}, or
$\lambda \altbisim{A} \lambda'$ iff for every $i \geq 0$,
$\lambda[i] \altbisim{A} \lambda'[i]$
\begin{proposition}\label{prop:infinite-strats}
If $q \altsim{A} q'$ then
for every strategy $\sigma_A$, there exists a
strategy $\sigma_A'$ such that
\begin{itemize}
\item[($*$)] for every run $\lambda' \in  out^{\G'}_x(q',\sigma_A')$, for $x \in \{subj,obj\}$,
there exists an infinite run $\lambda \in out^\G_x(q,\sigma_A) $ such
that $\lambda \altsim{A} \lambda'$.
\end{itemize}
\end{proposition}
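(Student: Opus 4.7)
The plan is to construct $\sigma_A'$ in $\G'$ by stitching together, one common knowledge neighborhood (CKN) at a time, the outputs of the strategy simulator $ST$ supplied by $q \altsim{A} q'$, and then to build $\lambda$ from $\lambda'$ inductively using condition~1(c) of Def.~\ref{def:bisim}.

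First, I would define $\sigma_A'$ as follows. The common knowledge relation of $\G'$ partitions $S'$ into CKNs. For each such class $C$ containing some $p'$ with a preimage $p \altsim{A} p'$, pick any such pair $(p,p')$ and set $D_C := C_A(p)$; condition~\ref{it:inject} of Def.~\ref{def:bisim} guarantees that $D_C$ depends only on $C$ and not on the choice of $(p,p')$. Since $D_C$ is itself a CKN of $\G$, the restriction $\sigma_A|_{D_C}$ lies in $PStr_A(D_C)$, and I set $\sigma_A'|_C := ST_{D_C,C}(\sigma_A|_{D_C})$. On any remaining CKN of $\G'$, fix $\sigma_A'$ to be an arbitrary uniform partial strategy, which exists by the assumptions on $d'$. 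The resulting total $\sigma_A'$ is uniform for each $i \in A$, since every $\sim_i'$-class is contained in a single CKN of $\G'$ on which $\sigma_A'$ was defined uniformly.

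Next, given $\lambda' \in out^{\G'}_x(q',\sigma_A')$, I would construct $\lambda$ by induction on the position. For the base case with $x = obj$, set $\lambda[0] := q$, which matches $\lambda'[0] = q'$. For $x = subj$, $\lambda'[0]$ is some $r'$ with $r' \sim_i' q'$ for some $i \in A$; condition~1(b) of Def.~\ref{def:bisim} then supplies an $r \in S$ with $q \sim_i r$ and $r \altsim{A} r'$, and I set $\lambda[0] := r$, which lies in the initial set of $out^\G_{subj}(q,\sigma_A)$. For the inductive step, assuming $\lambda[j] \altsim{A} \lambda'[j]$, the construction of $\sigma_A'$ yields $\sigma_A'(\lambda'[j]) = ST_{C_A(\lambda[j]),\,C_A'(\lambda'[j])}(\sigma_A|_{C_A(\lambda[j])})(\lambda'[j])$, so the transition $\lambda'[j] \to \lambda'[j+1]$ places $\lambda'[j+1]$ in $succ(\lambda'[j], ST(\sigma_A|_{C_A(\lambda[j])}))$. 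Applying condition~1(c) with $r := \lambda[j]$ and $r' := \lambda'[j]$ yields an $s \in succ(\lambda[j],\sigma_A)$ with $s \altsim{A} \lambda'[j+1]$, and I set $\lambda[j+1] := s$. The resulting $\lambda$ belongs to $out^\G_x(q,\sigma_A)$ and satisfies $\lambda \altsim{A} \lambda'$.

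The hardest step is the construction of $\sigma_A'$ itself: $ST$ acts only on partial strategies whose domain is a single CKN of $\G$, whereas $\sigma_A$ is total on $S$, so it must be decomposed CKN by CKN and reassembled. Without the injectivity clause~\ref{it:inject}, two preimages of states inside the same CKN $C$ of $\G'$ could sit in distinct CKNs of $\G$, yielding incompatible candidates for $\sigma_A'|_C$ and breaking both well-definedness and uniformity; that clause is precisely what rules this out. The remaining delicate point is the subjective base case, for which the epistemic matching condition~1(b) is used to seed the induction at a state that belongs to the initial set of $out^\G_{subj}(q,\sigma_A)$.
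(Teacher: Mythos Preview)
Your approach is correct and is essentially the paper's argument, organized more directly: you define $\sigma_A'$ globally in one pass over the CKNs of $\G'$, whereas the paper builds an increasing sequence of partial strategies $\bar{\sigma}_A^n$ along reachability layers from $C_A'(q')$ and then takes a limit. Both rest on the same ingredients---the simulator $ST$ applied CKN-by-CKN, condition~1(c) for the transition step, and condition~1(b) for the subjective base case---and the run-lifting induction is identical.

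One justification needs tightening. You write that condition~\ref{it:inject} alone makes $D_C$ independent of the choice of $(p,p')$, but condition~\ref{it:inject} only forces two preimages of the \emph{same} $p'$ to share a CKN in $\G$. If $p_1'\neq p_2'$ both lie in $C$ with respective preimages $p_1,p_2$, you must also invoke condition~1(b): walking along a $\sim^E_A$-chain from $p_1'$ to $p_2'$ inside $C$ and pulling each step back via~1(b) keeps you inside $C_A(p_1)$, after which condition~\ref{it:inject} applied at $p_2'$ yields $C_A(p_1)=C_A(p_2)$. The paper carries out exactly this combination of~1(b) and~\ref{it:inject} when proving uniformity of its $\bar{\sigma}_A^{n+1}$; your global construction needs the same argument to make $D_C$ (and hence $\sigma_A'$) well defined.
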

\begin{proof}
First of all we define the sequence
$\big(dom^n(\sigma_A)\big)_{n \in \mathbb{N}}$, of sets of states in
$\G$ such that $s \in dom^n(\sigma_A)$ iff $s$ can be reached in at
most $n$ steps from $C_A(q)$ by applying actions compatible with
strategy $\sigma_A$:
\begin{align*}
 dom^0(\sigma_A) & =  C_A(q) \\
 dom^{n+1}(\sigma_A) & =  dom^n(\sigma_A) \cup \bigcup \big\{C_A(r) \mid r \in \bigcup_{s \in dom^n(\sigma_A)} succ (s, \sigma_A)\big\}
\end{align*}

Also, we denote by $\sigma^n_A$ the partial strategy resulting from
restricting $\sigma_A$ to $dom^n(\sigma_A)$.

We now define inductively
a sequence $(\br \sigma^n_A)_{n \in \mathbb{N}}$ of partial strategies
 in $\G'$
such that
$dom(\br \sigma^n_A) \subseteq dom(\br\sigma^{n+1}_A)$ for every
$n \in \mathbb{N}$.  These partial strategies will be constructed
using strategy $\sigma_A$ and mapping $ST$ from point (1) in
Def.~\ref{def:bisim}.
The desired sequence of partial strategies
$\br \sigma^n_A$, for $n \geq 1$, is defined as follows:
\begin{enumerate}
\item
$dom(\br \sigma_A^0) = C'_A(q')$ and $dom(\br \sigma_A^{n+1}) =
dom(\br \sigma_A^n) \cup \{ r' \mid C'_A(r') \cap
succ(\br \sigma_A^n) \neq \emptyset \}$;
\item for all $r' \in dom (\br \sigma_A^0)$,
$\br \sigma_A^0(r') = ST_{C_A\!(q),C_A'\!(q')}(\sigma^0_A)(r')$;
\item for all $r' \in dom(\br \sigma_A^{n+1})$,
\[ \hspace*{-30pt}
\br \sigma^{n+1}_A(r') =
\begin{cases}
\br \sigma^n_A(r') & \text{for } r'\in dom(\br \sigma^n_A) \\
ST_{C_A\!(r),C_A'\!(r')}(\sigma^{n+1}_A)(r') & \text{for }
 r'\not \in dom(\br\sigma^n_A), C_A'\!(r') \cap succ (\!\br\sigma^n_A\!) \neq \emptyset , \\
&
\text{and } r \text{ is any state in $\G$ s.t.~} r \altsim{A} r'
\end{cases}
\]
\end{enumerate}

We give first a number of properties for this sequence of partial strategies.
First note that, in the last line of the definition of $\br\sigma_A^{n+1}(r')$, $r$ can indeed be chosen arbitrarily,
since, by point (2) in Def.~\ref{def:bisim}, whenever $r_1 \altsim{A} r' $ and $r_2 \altsim{A} r'$ we must have
$C_A(r_1) = C_A(r_2)$, which implies that $ST_{C_A(r_1),C_A'(r')} =
ST_{C_A(r_2),C_A'(r')}$.

Further, if $C_A'(r') \cap dom (\br\sigma^n_A) \neq \emptyset$ then $
C_A'(r') \subseteq dom (\br\sigma^n_A) $.  This is trivial for
$n=0$. As for the induction step, if $C_A'(r') \cap dom (\br\sigma^{n+1}_A) \neq \emptyset$, but $C_A'(r') \not \subseteq dom (\br\sigma^{n}_A)$, then by induction hypothesis we have
$C_A'(r') \cap dom (\br\sigma^n_A) = \emptyset$. Further, by
definition of $\br \sigma^{n+1}_A$ we obtain $C_A'(r') \cap succ
(\br\sigma^n_A) \neq \emptyset$, which is the case for all
$r'' \in C_A'(r')$. Hence $C_A'(r') \subseteq
dom(\br \sigma^{n+1}_A)$.

We now show that, whenever we take some $u' \in C_A'(r') \cap succ
(\br\sigma^n_A) \neq \emptyset$ with $r'\not \in dom(\br\sigma^n_A)$,
we have $\{ u \in dom^{n+1}(\sigma_A) \mid u\altsim{A}
u'\} \neq \emptyset$ and, by the induction hypothesis
$\br \sigma^{n+1}_A$ is uniform over $dom(\br\sigma^n_A)$.  To see
the former, whenever $u'\in C_A'(\!r'\!) \cap succ (\!\br\sigma^n_A\!)$, one
can find $v'\!\in \!dom (\br \sigma^n_A)$ s.t.~$u' \in
succ(v',\br\sigma_A^n(v'))$.
By definition of $ \br \sigma_A^n$, there exists $v \altsim{A} v'$
s.t.~$v \in dom(\sigma^n_A)$ and $\br\sigma^n_A(v') =
ST_{C_A(v),C_A'(v')}(\sigma^n_A)(v')$.  From property 1.(c), this
implies the existence of $u$ with $u\altsim{A} u'$ and $u \in
succ(v, \sigma_A(v))$,
which entails $u \in dom(\sigma^{n+1}_A)$.

We then prove by induction on $n$ that $\br \sigma^n_A$ is uniform.
The case for $n = 0$ is immediate by application to $\sigma^0_A$ of
simulator $ST$ of partial strategies.  For $n > 0$, as noted above if
$C_A'(r') \cap dom (\br\sigma^n_A) \neq \emptyset$ for some $r'$, then
$C_A'(r') \subseteq dom (\br\sigma^n_A) $.  Therefore, the induction
step only needs proof for $r_1',r_2'\in
dom(\br \sigma^{n+1}_A) \setminus dom(\br \sigma^n_A)$.  So assume, in
this case, that $r_1' \sim_i r_2'$ for some $i \in A$, hence
$C_A'(r_1') = C_A'(r_2')$. Consider then some $r_1 \altsim{A} r_1'$
and $r_2 \altsim{A} r_2'$, which exist by the previous paragraph.
Then condition 1.(b) in Def.~\ref{def:bisim} implies the existence of
some $r_3$ such that $r_3 \sim_i r_1$ and $r_3 \altsim{A} r_2'$, which
then, by condition (2) implies $C_A(r_1) = C_A(r_3) = C_A(r_2)$, which
in turn
implies that $ST_{C_A(r_1),C_A'(r_1')} = ST_{C_A(r_2),C_A'(r_2')}$.
Then
$\br\sigma_A^{n+1}(r_1') = ST_{C_A(r_1),C_A'(r_1')}(\sigma_A^{n+1})(r_1') = ST_{C_A(r_2),C_A'(r_2')}(\sigma_A^{n+1})(r_2') =
\br \sigma_A^{n+1}(r_2')$ as $ST$ is a simulator of partial strategies which maps uniform strategies for $A$ in $\G$ to uniform
strategies for $A$ in $\G'$.

As a result, the ``limit'' partial strategy $\br \sigma_A =
{\displaystyle\bigcup_{n\in \mathbb{N}}} \br \sigma^n_A$ defined as
$\br\sigma_A(r') = \br \sigma^n_A(r')$ whenever $r' \in
dom(\br\sigma^n_A)$, is also uniform and has $dom(\br \sigma_A) =
succ(\br \sigma_A)$.  We then only need to transform it into a (total)
uniform strategy by imposing a fixed action $a_0 \in Act$ wherever
$\br \sigma^n_A$ is undefined, that is, introducing the following
uniform strategy $\sigma_A'$:
\[
\sigma_A'(r') = \begin{cases}
\br \sigma_A(r') & \text{ for } r' \in dom(\br \sigma_A) \\
a_0 & \text{ otherwise}
\end{cases}
\]

To prove property ($*$) for the subjective semantics, consider a run
$\lambda' \in out^{\G'}_{subj}(q',\sigma_A') $.
We build inductively the run $\lambda$ as follows: for the base
case, by condition 1.(b) in Def.~\ref{def:bisim} (and a short
induction on the length of the indistinguishability path connecting
$q'$ with $\lambda'[0]$), we obtain a state $r \in C_A(q)$ such that
$r \altsim{A} \lambda'[0]$. Then, set $\lambda[0] := r$.  For the
inductive step, assume $\lambda[k]$ has been built, with
$\lambda[j] \altsim{A} \lambda'[j]$ for all $j\leq k$.  Now, we have
$\lambda'[k+1] \in succ(\lambda'[k],\sigma'_A)$.  By definition,
$\sigma'_A(\lambda'[k]) =
ST_{C_A(\lambda[k]),C_A'(\lambda'[k])}(\sigma_A)(\lambda'[k])$ since
$\lambda[k] \altsim{A} \lambda'[k]$.  This, by property 1.(c) implies
the existence of some $u \in succ(\lambda[k],\sigma_A)$ such that
$u \altsim{A} \lambda'[k+1]$.  We then set $\lambda[k+1] := u$, and
obtain the desired result.  The same proof works for the objective
semantics, by simply starting the induction with $\lambda[0] = q$.
Property ($*$) is then proved for both semantics.
\end{proof}

By using Proposition~\ref{prop:infinite-strats} we are finally able to
prove the main preservation result of this paper.
\begin{theorem}\label{thm-bisim}
Let $\G$ and $\G'$ be iCGS, $q \in S$ and $q'\in S'$ be states such that
$q \altbisim{A} q'$, and $\lambda \in S^+$ and  $\lambda' \in S'^+$ be
runs such that $\lambda \altbisim{A} \lambda'$.  Then, for every
state $A$-formula $\phi$ and path $A$-formula $\psi$,
\begin{eqnarray*}
 (\G,q) \models \phi & \text{iff} & (\G',q') \models \phi\\
 (\G,\lambda) \models \psi & \text{iff} & (\G',\lambda') \models \psi
\end{eqnarray*}
\end{theorem}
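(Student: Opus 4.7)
The plan is to proceed by simultaneous induction on the structure of the state $A$-formula $\phi$ and the path $A$-formula $\psi$, with the strategic case handled by invoking Proposition~\ref{prop:infinite-strats}. The atomic case $\phi = p$ is immediate from condition~1.(a) of Definition~\ref{def:bisim}, which gives $\pi(q) = \pi'(q')$; the Boolean connectives reduce directly to the induction hypothesis. For the embedding of a state formula in a path formula, I would use that $(\G,\lambda) \models \phi$ iff $(\G,\lambda[0]) \models \phi$ together with $\lambda[0] \altbisim{A} \lambda'[0]$, and apply the induction hypothesis on $\phi$. For the temporal cases $X\psi$ and $\psi U \psi'$, the key observation is that pointwise bisimilarity of $\lambda$ and $\lambda'$ passes to all suffixes, i.e.\ $\lambda_{\geq j} \altbisim{A} \lambda'_{\geq j}$ for every $j$, so the equivalence follows from the induction hypothesis on $\psi$ and $\psi'$.

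The substantive case is $\phi = \coop{A}\psi$, under either semantics $x \in \{subj,obj\}$. Suppose $(\G,q) \models_x \coop{A}\psi$, witnessed by a uniform strategy $\sigma_A$ all of whose outcomes satisfy $\psi$. I would invoke Proposition~\ref{prop:infinite-strats} with $\altbisim{A}$ to produce a uniform strategy $\sigma_A'$ in $\G'$ such that for every $\lambda' \in out^{\G'}_x(q',\sigma_A')$ there is some $\lambda \in out^{\G}_x(q,\sigma_A)$ with $\lambda \altbisim{A} \lambda'$. Since $(\G,\lambda) \models_x \psi$ by the choice of $\sigma_A$, the induction hypothesis on $\psi$ delivers $(\G',\lambda') \models_x \psi$, and thus $\sigma_A'$ witnesses $(\G',q') \models_x \coop{A}\psi$. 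The converse implication is entirely symmetric: by Definition~\ref{def:bisim}, $\altbisim{A}^{-1}$ is also a simulation, so Proposition~\ref{prop:infinite-strats} applies verbatim with the roles of $\G$ and $\G'$ reversed.

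The hard part is the strategic clause, and essentially all of the difficulty has already been absorbed into Proposition~\ref{prop:infinite-strats}: the staged construction of $\sigma_A'$ from $\sigma_A$ through the simulator $ST$, together with the verification of uniformity across common knowledge neighborhoods (which is precisely what clause~(2) of Definition~\ref{def:bisim} is designed to support), is what makes the strategic case go through. At the level of the theorem itself, one need only check that the subformula $\psi$ remains an $A$-formula (so that the same coalition, and the same relation $\altbisim{A}$, feed back into the induction) and that both $\altbisim{A}$ and its converse support the proposition; both checks are immediate from the definitions.
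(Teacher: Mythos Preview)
Your proposal is correct and matches the paper's proof essentially line for line: mutual structural induction, with the atomic, Boolean, embedding, and temporal cases handled exactly as you describe, and the strategic case reduced to Proposition~\ref{prop:infinite-strats} applied in each direction via the two simulations $\altbisim{A}$ and $\altbisim{A}^{-1}$. Your observation that all the real work has been pushed into Proposition~\ref{prop:infinite-strats} is precisely the paper's design.
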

\begin{proof}
The proof is by mutual induction on the structure of $\phi$ and
$\psi$.

The case for propositional atoms is immediate as $(\G, q) \models p$
iff $p \in \pi(q)$, iff $p \in \pi'(q')$ by definition of
bisimulation, iff $(\G',q') \models p$.  The inductive cases for
propositional connectives are also immediate.

For $\psi = \phi$, suppose that $(\G, \lambda) \models \psi$, that is,
$(\G, \lambda[0]) \models \phi$.  By assumption,
$\lambda[0] \altbisim{A} \lambda'[0]$ as well, and by induction
hypothesis $(\G', \lambda'[0]) \models \phi$. Thus,
$(\G', \lambda') \models \psi$.

For $\psi = X \psi'$, suppose that $(\G, \lambda) \models \psi$, that is,
$(\G, \lambda_{\geq 1}]) \models \psi'$.  By assumption, $\lambda_{\geq 1} \altbisim{A} \lambda'_{\geq 1}$ as well, and by induction hypothesis
$(\G', \lambda'_{\geq 1}) \models \psi'$. Thus,
$(\G', \lambda') \models \psi$.
The inductive cases for $\psi = \psi' U \psi''$ and $\psi = \psi' R \psi''$ are similar.

For $\phi = \llangle A \rrangle \psi$, $(\G,
q) \models \phi$ iff that for some strategy $\sigma_{A}$,
for all $\lambda \in out^{\G}_x(q, \sigma_A)$,
$(\G,\lambda) \models \psi$.
By Prop.~\ref{prop:infinite-strats}, there exists strategy $\sigma'_A$
s.t.~for all $\lambda' \in out^{\G'}_x(q', \sigma'_A)$, there exists
$\lambda \in out^{\G}_x(q, \sigma_A)$
s.t.~$\lambda \altbisim{A} \lambda'$.
By the induction hypothesis,
$(\G, \lambda) \models \psi$ iff $(\G', \lambda') \models \psi$.
Hence,
$(\G', q') \models \phi$.
\end{proof}

\begin{corollary}
Let $\G$ and $\G'$ be iCGS, and $q \in S$, $q'\in S'$ be states such
that $q \altbisim{A} q'$.  Then, for every $A$-formula $\varphi$,
\begin{eqnarray*}
 (\G, q) \models \varphi & \text{if and only if} & (\G', q') \models \varphi
\end{eqnarray*}
\end{corollary}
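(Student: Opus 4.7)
The plan is to derive this corollary as an immediate specialization of Theorem~\ref{thm-bisim} to state formulas. By definition, an $A$-formula $\varphi$ is a formula of ATL$^*$ in which $A$ is the only coalition appearing in the strategic modalities; in particular, the top-level formula $\varphi$ is a state formula, so it falls directly under the first conjunct of the biconditional proved in Theorem~\ref{thm-bisim}. The hypothesis $q \altbisim{A} q'$ is exactly the hypothesis required by that theorem, so one simply instantiates $\phi := \varphi$ and reads off the desired equivalence $(\G, q) \models \varphi \iff (\G', q') \models \varphi$.

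The only point that merits comment is that the satisfaction symbol $\models$ in Theorem~\ref{thm-bisim} is to be understood uniformly for both the subjective and the objective semantics. This is legitimate because Proposition~\ref{prop:infinite-strats}, on which the strategic case of Theorem~\ref{thm-bisim} rests, was established simultaneously for $x \in \{subj, obj\}$ (the only difference between the two cases being the choice of the initial witness state $\lambda[0]$ in the inductive construction of $\lambda$). Consequently, the corollary holds for both semantics, and its proof amounts to a single appeal to Theorem~\ref{thm-bisim}. There is no genuine obstacle: all the technical work — the inductive construction of the simulating strategy, the uniformity argument, the structural induction on formulas — has already been discharged in the proof of the theorem.
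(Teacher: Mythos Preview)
Your proposal is correct and matches the paper's approach: the corollary is stated immediately after Theorem~\ref{thm-bisim} with no separate proof, being an obvious instantiation of the theorem's first clause for state $A$-formulas. Your additional remark about the uniformity over $x \in \{subj, obj\}$ is accurate and makes explicit what the paper leaves implicit.
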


By Theorem~\ref{thm-bisim} we obtain that bisimilar states preserve
the interpretation of ATL$^\star$ formulas. More precisely, if states $q$ and
$q'$ are $A$-bisimilar then they satisfy the same $A$-formulas.
Observe that only $A$-formulas are normally preserved by
$A$-bisimulations. This is already a feature of $A$-bisimulations for
the case of perfect information \cite[Theorem 6]{AlurHKV98}.

\subsection{Discussion
} \label{sec:comparison}

In~\cite{BelardinelliCDJJ17} we introduced a different notion of
bisimulation for the ATL fragment only of ATL$^*$. Specifically, the
original Def.~6 of bisimulation in~\cite{BelardinelliCDJJ17} differs
from Def.~\ref{def:bisim} in that it lacks condition~\ref{it:inject}
(we refer to the paper for further details). Unfortunately, this
weaker notion of bisimulation
makes the preservation theorem false.
In this paper we remedy the problem by requiring the injectiveness of the bisimulation relation w.r.t.~common knowledge neighbourhoods, i.e., by including
condition~\ref{it:inject} in Def.~\ref{def:bisim}.

To illustrate the issue consider the single-agent \mkCEGS with states,
actions, and transitions as illustrated in Fig.~\ref{fig:vietrictex},
and a relation $\faltbisim[A]$ such that $q_{i}\faltbisim[A]q'_{i}$
for $i\in\{0,3\}$ and $q_{i}\faltbisim[A]q'_{ij}$ for $i,j\in\{1,2\}$,
as indicated by the red lines in Fig.~\ref{fig:vietrictex}.  A
case-by-case analysis shows that $\faltbisim[A]$ satisfies
condition~\ref{it:original} in Def.~\ref{def:bisim}, and therefore the
two iCGS are bisimilar according to the notion presented
in \cite{BelardinelliCDJJ17}.  However, they do not satisfy
condition~\ref{it:inject}. Indeed, we have that $q_{1} \faltbisim[A]
q'_{1j}$, for $j\in\{1,2\}$, but $C_{1}(q'_{11}) \neq C_{1}(q'_{12})$,
and similarly for $q_2$, $q'_{21}$, $q'_{22}$. In particular, the iCGS
$\G_1$ and $\G_2$ do not satisfy the same formulas: we have
$(\G_1,q_0) \not \models\coop{1}Fp$ while
$(\G_2,q_0)\models\coop{1}Fp$.  Indeed, in $\G_2$ a memoryless
strategy for agent $1$ to achieve $F p$ is to choose action $a$ in
knowledge set $\{ q'_{1,1}, q'_{2,1} \}$ and action $b$ in $\{ q'_{1,2},
q'_{2,2} \}$. Such a strategy is not transferable as a memoryless
strategy into $\G_1$, as it would require one bit memory, that is, on
the first visit of knowledge set $\{ q_{1}, q_{2} \}$ agent $1$ would
play $a$, then $b$ on the second visit.
Hence, the notion of bisimulation introduced in \cite{BelardinelliCDJJ17} is too weak even to
preserve formulas in the fragment ATL; while the present
Def.~\ref{def:bisim} does preserve the whole ATL$^*$.

\begin{figure}[t]\centering
\begin{tikzpicture}[>=latex, scale=0.75]
 \tikzstyle{state}=[circle, draw, trans, minimum size=3mm]
\tikzstyle{trans}=[font=\footnotesize]

\node[state] (s0) at (0.0, 0.0) {$q_0$};
\node[state] (s1) at (-2.0, -2.0) {$q_1$};
\node[state] (s2) at (2.0, -2.0) {$q_2$};
\node[state,label={p}] (s3) at (0.0, -4.0) {$q_3$};
             
\path [->,style=solid, semithick, shorten >=1pt, auto, node distance=7cm]
(s0) edge (s1)
(s0) edge (s2)
(s1) edge node [below] {b} (s3)
(s2) edge node [above] {a} (s3)
(s3) edge [loop below] (s3)
(s2) edge [loop right] node [below] {b} (s2)
(s1) edge [loop left] node [below] {a} (s1)
;

\path [-,style=dotted, semithick, shorten >=1pt, auto, node distance=7cm]
(s1) edge (s2)
;

\def\mvr{8}

\node[state] (m0) at (0.0+\mvr, 1.0) {$q'_0$};
\node[state] (m11) at (-2.0+\mvr, -1.0) {$q'_{11}$};
\node[state] (m21) at (2.0+\mvr, -1.0) {$q'_{21}$};
\node[state] (m12) at (-2.0+\mvr, -3.0) {$q'_{12}$};
\node[state] (m22) at (2.0+\mvr, -3.0) {$q'_{22}$};
\node[state,label=right:{p}] (m3) at (0.0+\mvr, -5.0) {$q'_{3}$};

\path [->,style=solid, semithick, shorten >=1pt, auto, node distance=7cm]
(m0) edge (m11)
(m0) edge (m21)
(m11) edge [bend left] node {a} (m12)
(m12) edge [bend left] node {a} (m11)
(m21) edge [bend left] node {b} (m22)
(m22) edge [bend left] node {b} (m21)
(m11) edge [bend left] node [left] {b}  ([xshift=-0.4em,yshift=1.2em]m3)
(m21) edge [bend right] node [right] {a} ([xshift=0.4em,yshift=1.2em]m3)
(m12) edge node [left] {b} (m3)
(m22) edge node [right] {a} (m3)
(m3) edge [loop below] (m3)
;

\path [-,style=dotted, semithick, shorten >=1pt, auto, node distance=7cm]
(m11) edge (m21)
(m12) edge (m22)
;

\path [-,style=solid, semithick, shorten >=1pt, auto, node distance=7cm,draw=red]
(s0) edge (m0)
(s1) edge (m11)
(s1) edge (m12)
(s2) edge (m21)
(s2) edge (m22)
(s3) edge (m3)
;

\node[label] (mod0) at (0.0,2.5) {$\G_1$};
\node[label] (mod0) at (0.0+\mvr,2.5) {$\G_2$};

 \end{tikzpicture}
\caption{two iCGS satisfying only condition~1 in Def.~\ref{def:bisim}.}\label{fig:vietrictex}
\end{figure}
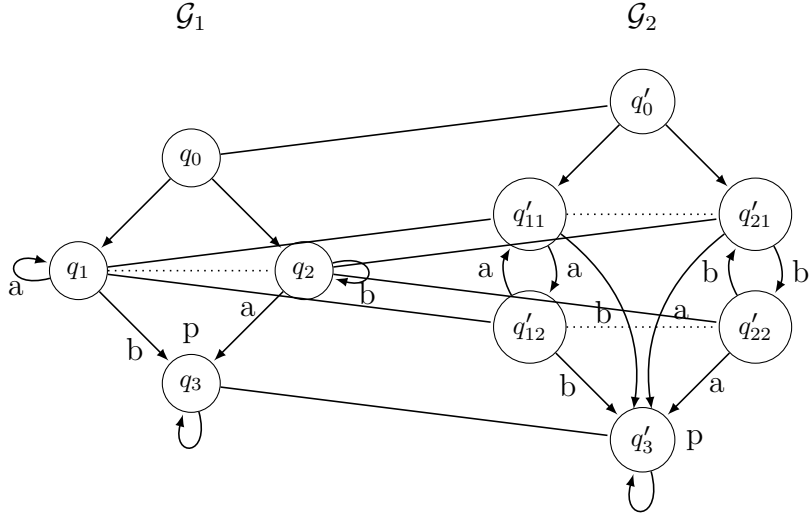

\newcommand{\existsbf}{\pmb{\exists}}
\newcommand{\forallbf}{\pmb{\forall}}

\subsection{Computational Complexity}\label{sec:complexity}

Theorem~\ref{thm-bisim} shows that two $A$-bisimilar iCGS are
equivalent with respect to coalition $A$'s strategic abilities.  This
can be useful when verifying abilities in realistic multi-agent
systems. In such cases, the iCGS typically arises through some kind of
product of local components, and suffers from state-space explosion as
well as transition-space explosion.  If there is a smaller bisimilar
iCGS, one can use the latter as the input to the model checking
procedure.  Since the hardness of the model checking problem for ATL$^*$
with imperfect information is mostly related to the size of the model,
using an equivalent reduced iCGS can turn an impossible task into a
feasible one.  An interesting question is therefore if one can
automatically synthesise such reduced models, or at least check
whether two given iCGS are bisimilar.  We briefly investigate the
issue hereafter. We call a pair $(\G, s)$, for $s$ in $\G$, a {\em
pointed model}.

\begin{theorem}\label{prop:upperbound}
Checking $A$-bisimilarity of two pointed iCGS is in \Sigmacomplx{5}.
\end{theorem}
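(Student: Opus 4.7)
The plan is to place the problem in $\Sigma_5^P$ by designing an alternating polynomial-time algorithm with quantifier prefix $\exists\,\forall\,\exists\,\forall\,\exists$ over polynomial-size witnesses. First, I would existentially guess a candidate bisimulation $R \subseteq S \times S'$ with $(s_0, s_0') \in R$; since $R$ has polynomial size, this is a legitimate NP-style guess. The ``static'' conditions of Definition~\ref{def:bisim}---atom agreement 1(a), epistemic back-and-forth 1(b), and CKN-injectivity (2), together with their analogues for $R^{-1}$---only quantify over polynomial sets and are verifiable in polynomial time. The common-knowledge neighborhoods themselves can be precomputed as the transitive closure of $\bigcup_{i \in A}\sim_i$.

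The delicate part is condition 1(c), which asserts the existence of a simulator $ST$ mapping partial strategies in $\G$ to partial strategies in $\G'$. Since $PStr_A(C_A(q))$ can be exponentially large, I would not guess $ST$ explicitly. Instead, I would exploit the fact that the simulation constraint on $ST(\sigma_A)$ depends only on the pair $\sigma_A$ and its image; hence by pointwise choice, $\exists ST$ satisfying 1(c) is equivalent to $\forall \sigma_A\,\exists \sigma_A'$ satisfying the same pointwise constraint. Compatibility across states in a common CKN is automatic from the indexing by equivalence classes noted right after the simulator definition.

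Concretely, the algorithm I propose is: after guessing $R$, universally quantify over a tuple $(t, q, q', \sigma_A)$, where $t \in \{+,-\}$ selects the direction ($R$ or $R^{-1}$), $(q, q')$ is an $R$-pair (respectively $R^{-1}$-pair), and $\sigma_A \in PStr_A(C_A(q))$; existentially guess $\sigma_A' \in PStr_A(C_A'(q'))$; universally quantify over a triple $(r, r', s')$; and finally existentially guess a state $s$. The matrix accepts iff all static conditions hold and, whenever the hypotheses $r \in C_A(q)$, $r' \in C_A'(q')$, $r R r'$, and $s' \in succ(r', \sigma_A')$ apply, we have $s \in succ(r, \sigma_A)$ with $s R s'$ (branches whose hypotheses fail accept vacuously). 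Each quantified object has polynomial size and the matrix is polynomial-time, so five alternations place the problem in $\Sigma_5^P$.

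The main technical obstacle will be justifying the rewrite $\exists ST \equiv \forall \sigma_A\,\exists \sigma_A'$: one must verify that the constraints on different inputs to $ST$ are genuinely independent, so that pointwise choice produces a valid simulator, and that the uniformity of each $ST_{C_A(q),C_A'(q')}$ across states in the same CKN pair is automatic from the indexing. Once this is in place, the quantifier bookkeeping for handling $R$ and $R^{-1}$ in parallel (via the direction tag $t$) and folding the static checks into the matrix is routine.
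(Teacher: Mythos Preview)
Your proposal is correct and follows essentially the same route as the paper. Both arguments guess the candidate relation $R$ as the outermost existential, replace the exponential-size simulator $ST$ by a per-CKN-pair $\forall\sigma_A\,\exists\sigma_A'$ quantification (the paper phrases this as ``$ST$ can be split into local mappings from mutually disjoint common knowledge neighbourhoods''), and then count the remaining quantifier alternations to land in $\Sigma_5^P$; the only difference is presentational---the paper organizes the count through nested auxiliary predicates ($bisimilar$, $simul$, $simultrans$, $simulepist$), whereas you spell out the alternating machine directly with the direction tag $t$ folding the two simulation checks into a single universal block.
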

\begin{proof}Complexity-wise, the most problematic part of Def.~\ref{def:bisim} is
the quantification over strategy simulators $ST$. Being a mapping from
partial strategies to partial strategies, $ST$ has exponential size
w.r.t.~the size of the model, which suggests at least exponential time
for verifying bisimilarity.  Fortunately, a closer look at the
definition of strategy simulators, and at condition 2 in
Def.~\ref{def:bisim}, reveals that $ST$ can be split into local
mappings from mutually disjoint common knowledge neighbourhoods
$C_A$ in $\G$ to unique common knowledge neighbourhoods
$C_A'$ in $\G'$.  Thus, the conceptual structure of
Def.~\ref{def:bisim} can be (with a slight abuse of notation)
summarized as:
\begin{small}
\begin{eqnarray*}
bisimilar((\G,q),(\G',q')) & \text{iff} & \existsbf(\altbisim{A})\ \ simul((\G,q),(\G',q'),\altbisim{A}) \\
  && \quad \land\ simul((\G',q'),(\G,q),\altbisim{A})\ \land\ q\altbisim{A}q', \\
simul((\G,q),(\G',q'),\altbisim{A}) & \text{iff} & \forallbf(\mathcal{C}_A\in\G')\ \forallbf(\sigma_A\in\mathcal{C}_A)\existsbf(\sigma_A'\in\mathcal{C}_A') \\
  && \forallbf(\hat{q}\in\mathcal{C}_A, \hat{q}'\in\mathcal{C}_A', \hat{q}\altbisim{A}\hat{q}')\ \ match(\hat{q},\hat{q}') \\
  && \quad \land\ simulepist(\hat{q},\hat{q}')\ \land\ simultrans(\mathcal{C}_A,\mathcal{C}_A'), \\
match(\hat{q},\hat{q}') & \text{iff} & \pi(\hat{q}) = \pi(\hat{q}'), \\
simulepist(\hat{q},\hat{q}') & \text{iff} & \forallbf(i\in A)\ \forallbf(\hat{q}'\sim_i r')\ \existsbf(r)\ \ \hat{q}\sim_i r, \ r \altbisim{A} r' \\
simultrans(\mathcal{C}_A,\mathcal{C}_A') & \text{iff} & \forallbf(r\in\mathcal{C}_A, r'\in\mathcal{C}_A', r\altbisim{A}r')\ \forallbf(s'\in succ(r',\sigma_A')) \\
  && \existsbf(s\in succ(r,\sigma_A))\ \ s\altbisim{A}s'.
\end{eqnarray*}
\end{small}
Notice that all the structures quantified in the above expressions are of polynomial size with respect to the number of states, transitions, and epistemic links in $\G$ and $\G'$.
Thus, it is relatively straightforward to observe, by a suitable translation to the QBF problem (i.e., satisfiability of Quantified Boolean Formulae) that:
\begin{enumerate2}
\item checking $simulepist(\hat{q},\hat{q}')$ and $simultrans(\mathcal{C}_A,\mathcal{C}_A')$ is in \Picomplx{2},
\item checking $simul((\G,q),(\G',q'),\altbisim{A})$ is in \Picomplx{4},
\item and thus, finally, checking $bisimilar((\G,q),(\G',q'))$ is in \Sigmacomplx{5}.
\end{enumerate2}
\end{proof}

We suspect that the upper bound is tight, as we do not see how any of the quantifier alternations, included in the above QBF translation, could be collapsed.
For the moment, however, we only show the following, and leave the question of the exact complexity for future work.

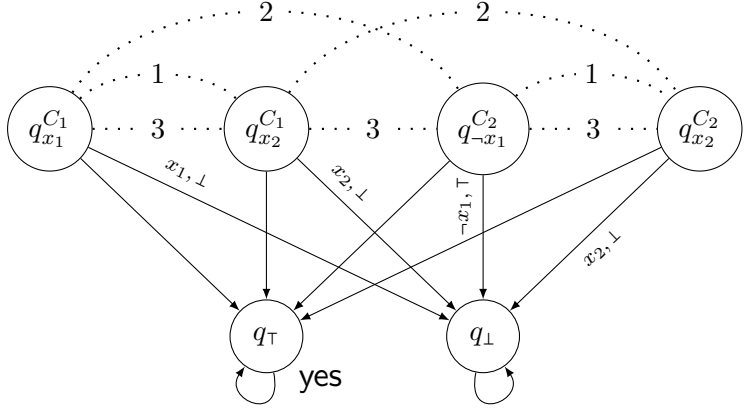
\begin{figure}[t]\centering
\begin{tikzpicture}[scale=1.2]

    \tikzstyle{astate}=[circle,draw,trans]
    \tikzstyle{trans}=[font=\footnotesize]

    \path   (-3.6,0) node[astate] (q11) {\small$q^{C_1}_{x_1}$}
            (-1.2,0) node[astate] (q12) {\small$q^{C_1}_{x_2}$}
            (1.2,0) node[astate] (q21) {\small$q^{C_2}_{\neg x_1}$}
            (3.6,0) node[astate] (q22) {\small$q^{C_2}_{x_2}$}

            (-1.2,-2.3) node[astate] (qt) {\small${\;}q_\top{\;}$} +(+0.6,-0.5) node[transform shape] {\small $\prop{yes}$}
            (1.2,-2.3) node[astate] (qb) {\small${\;}q_\bot{\;}$}

    ;

    \draw[loosely dotted,thick](q11) ..controls +(+0.7,+0.7) and +(-0.7,+0.7).. (q12) node[midway,sloped] {\small\colorbox{white}{$1$}};
    \draw[loosely dotted,thick](q21) ..controls +(+0.7,+0.7) and +(-0.7,+0.7).. (q22) node[midway,sloped] {\small\colorbox{white}{$1$}};

    \draw[loosely dotted,thick](q11) ..controls +(+1,+1.6) and +(-1,+1.6).. (q21) node[midway,sloped] {\small\colorbox{white}{$2$}};
    \draw[loosely dotted,thick](q12) ..controls +(+1,+1.6) and +(-1,+1.6).. (q22) node[midway,sloped] {\small\colorbox{white}{$2$}};

    \draw[loosely dotted,thick](q11)--(q12) node[midway,sloped] {\small\colorbox{white}{$3$}};
    \draw[loosely dotted,thick](q12)--(q21) node[midway,sloped] {\small\colorbox{white}{$3$}};
    \draw[loosely dotted,thick](q21)--(q22) node[midway,sloped] {\small\colorbox{white}{$3$}};

    \draw[-latex](q11)--(qt);
    \draw[-latex](q12)--(qt);
    \draw[-latex](q21)--(qt);
    \draw[-latex](q22)--(qt);

    \draw[-latex](q11)--(qb) node[near start,sloped,above] {\scriptsize $x_1,\bot$};
    \draw[-latex](q12)--(qb) node[near start,sloped,above] {\scriptsize $x_2,\bot$};
    \draw[-latex](q21)--(qb) node[near start,sloped,above] {\scriptsize $\neg x_1,\top$};
    \draw[-latex](q22)--(qb) node[midway,sloped,below] {\scriptsize $x_2,\bot$};

    \draw[-latex](qt) ..controls +(0.2,-1) and +(-0.5,-0.7).. (qt);
    \draw[-latex](qb) ..controls +(-0.2,-1) and +(0.5,-0.7).. (qb);
 \end{tikzpicture}
\caption{Model $M_\Phi$ for $\Phi\equiv C_1\land C_2$ with $C_1\equiv x_1\lor x_2$ and $C_2\equiv\neg x_1\lor x_2$. Only transitions leading to $q_\bot$ are labeled; the other combinations of actions lead to $q_\top$.}\label{fig:reduction}
\end{figure}

\begin{theorem}\label{prop:lowerbound}
Checking $A$-bisimilarity of two pointed iCGS is \NP-hard.
\end{theorem}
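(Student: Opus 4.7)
The plan is a polynomial-time reduction from \SAT. Given a CNF formula $\Phi = C_1 \land \ldots \land C_m$ over Boolean variables $x_1, \ldots, x_n$, I would construct two pointed iCGS and a coalition $A$ such that $\Phi$ is satisfiable iff the two pointed models are $A$-bisimilar.

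The first iCGS is $M_\Phi$ of Figure~\ref{fig:reduction}: one state $q^{C_i}_\ell$ per literal occurrence in $\Phi$, a sink $q_\top$ labeled $\prop{yes}$, and a sink $q_\bot$. The indistinguishability relations would be arranged so that agent~$2$ identifies literal states whose literals concern the same variable, while the combination of all three agents' relations places every literal state into a single common knowledge neighbourhood of the coalition. The joint action set contains one action per (variable, value) pair, and from $q^{C_i}_\ell$ the action assigning a value that falsifies $\ell$ leads to $q_\bot$, while every other action leads to $q_\top$. The second iCGS $N_\Phi$ is obtained from $M_\Phi$ by redirecting every $q_\bot$-transition to $q_\top$, keeping the indistinguishability structure unchanged; the pointed states in the two iCGS are matched literal states.

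For the forward direction, a satisfying assignment $\nu$ for $\Phi$ induces a candidate bisimulation identifying matched literal states and both sinks, together with a strategy simulator $ST$ that maps each partial strategy in $N_\Phi$ to the partial strategy in $M_\Phi$ which, at each $q^{C_i}_\ell$, plays the action encoding $\nu(x_j)$, where $x_j$ is the variable of $\ell$. Uniformity with respect to agent~$2$ is immediate because $\nu$ assigns a single value to each variable, and since $\nu$ satisfies every clause, the chosen action at every literal state avoids $q_\bot$; the successor structures in the two models therefore match, and all conditions of Def.~\ref{def:bisim} are verifiable by inspection. Conversely, any bisimulation together with its strategy simulator must (i) assign the same action to every state in an agent~$2$-equivalence class, yielding a consistent variable assignment, and (ii) avoid $q_\bot$ at every literal state, since $q_\bot$ has no bisimilar counterpart in $N_\Phi$; the induced assignment therefore satisfies every clause of $\Phi$.

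The main obstacle I expect is the careful calibration of the indistinguishability structure so that the common knowledge neighbourhood of the pointed state covers all literal states of $\Phi$, thereby forcing the strategy simulator to commit to one global variable assignment rather than choosing differently per clause. The third agent in the figure plays exactly this role: its equivalence classes chain the clauses together so that their transitive closure spans the whole state space, without constraining the simulator's action choices. Since $M_\Phi$ and $N_\Phi$ are polynomial in $|\Phi|$ and the reduction preserves satisfiability, \NP-hardness of $A$-bisimilarity checking follows.
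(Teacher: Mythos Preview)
Your reduction has a genuine gap in the forward direction. Redirecting every $q_\bot$-transition to $q_\top$ in $N_\Phi$ does not make the two models $A$-bisimilar even when $\Phi$ is satisfiable: it breaks the simulation condition in the direction $M_\Phi \to N_\Phi$.

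Take any uniform partial strategy $\sigma_A \in PStr_A(C_A(q))$ in $M_\Phi$ that, at some literal state $r = q^{C_i}_\ell$, plays the (unique) joint action leading to $q_\bot$. Since $A$ is the grand coalition, the successor is determined: $succ_{M_\Phi}(r,\sigma_A) = \{q_\bot\}$. Now, whatever strategy $ST(\sigma_A)$ you pick in $N_\Phi$, all transitions out of the corresponding literal state $r'$ land in $q'_\top$, so $succ_{N_\Phi}(r',ST(\sigma_A)) = \{q'_\top\}$. Condition~1.(c) of Def.~\ref{def:bisim} then demands some $s\in\{q_\bot\}$ with $s \altbisim{A} q'_\top$, which is impossible because $q_\bot$ and $q'_\top$ disagree on the atom $\prop{yes}$. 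Hence your proposed relation is never a bisimulation, already for the trivially satisfiable one-literal formula $\Phi = x_1$.

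The paper sidesteps exactly this issue by \emph{adding} a fresh action $skip$ for agent~1 (always leading to $q_\top$) rather than redirecting existing transitions. The resulting $\G_\Phi'$ is a conservative extension of $\G_\Phi$: every $\G_\Phi$-strategy is literally a $\G_\Phi'$-strategy with the same successors, so the identity serves as strategy simulator in the $M_\Phi \to N_\Phi$ direction and the above obstruction disappears. Only the opposite direction then hinges on satisfiability (one must replace $skip$ by a genuine winning choice in $\G_\Phi$), which is where the SAT witness enters. Your converse argument is essentially right, but the forward implication needs this ``extension rather than replacement'' idea to go through.
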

\begin{proof}
We adapt the SAT reduction from~\cite[Proposition~11]{Bulling11mu-ijcai}.
Let us first recall the idea of that reduction.  Given a Boolean
formula $\Phi$ in CNF, we build a $3$-agent iCGS $\G_\Phi$ where each
literal $l$ from clause $\psi$ in $\Phi$ is associated with a state
$q^\psi_l$.
Each agent has a different role in the construction: player 1 selects the literals to be satisfied (one literal per clause in $\Phi$), and player 2 chooses the truth values for those literals. Player 3 ``simulates'' the common knowledge neighborhood for $\{1,2\}$, and thus ensures that a successful assignment must work for all the clauses in $\Phi$.

Formally, at state $q^\psi_l$, player $1$ indicates a literal from
$\psi$, and player $2$ decides on the valuation of the underlying
Boolean variable. If $1$ indicated a ``wrong'' literal $l'\neq l$ then
the system proceeds to state $q_\top$ where proposition $\prop{yes}$
holds. The same happens if $1$ indicated the ``right'' literal ($l$)
and $2$ selected the valuation that makes $l$ true. Otherwise the
system proceeds to the ``sink'' state $q_\bot$.
Player $1$ must select literals uniformly within clauses, which is
imposed by fixing $q^\psi_{l}\sim_1 q^{\psi'}_{l'}$ iff
$\psi=\psi'$. Player $2$ is to select uniform valuations of variables,
i.e., $q^\psi_{l}\sim_2 q^{\psi'}_{l'}$ iff $var(l)=var(l')$, where
$var(l)$ is the variable contained in $l$. Finally, all states except
$q_\top,q_\bot$ are indistinguishable for agent $3$. An example of the
construction is presented in Figure~\ref{fig:reduction}.
Let $q_0$ be an arbitrary ``literal'' state, e.g., the one for the
first literal in the first clause.  Then, $\mathbf{SAT}(\Phi)$ iff
$(\G_\Phi,q_0) \models \coop{1,2,3}\Next\prop{yes}$ according to the
subjective semantics of \ATL with imperfect information and imperfect
recall.

Now, we construct iCGS $\G_\Phi'$ by adding action $skip$, available
to agent $1$ at every ``literal'' state, which always enforces a
transition to $q_\top$ regardless of the actions selected by agents 2
and 3.  In particular, $\G_\Phi'$ adds to $\G_\Phi$ a strategy for
coalition $\{1,2,3\}$ whose only successor state from all the
``literal'' states is $q_\top$.  Thus, $(\G_\Phi,q_0) \models
\coop{1,2,3}\Next\prop{yes}$ iff $(\G_\Phi,q_0) \altbisim{\{1,2,3\}}
(\G_\Phi',q_0)$, which completes the reduction.
\end{proof}

In sum, the problem of deciding bisimilarity is easier than exponential time, but still far from practically automatizable.
To obtain a reduction, one must propose the reduced model and the bisimulation according to one's intuition, and verify correctness of the bisimulation by hand.
We will show an extensive example of how this can be done in Section~\ref{3ballot}.

\section{Towards the Hennessy-Milner Property} \label{sec:HM}

In this section we show that the notion of bisimulation we introduced
in Section~\ref{bisimulation} does not enjoy the Hennessy-Milner (HM)
property, that is, some iCGS are logically equivalent (i.e., they
satisfy the same formulas in ATL$^*$) and yet they are not bisimilar.
This is the case for both the subjective and objective variants of our
semantics. More precisely, for $x \in \{sub, obj \}$, we say that
states $q \in \G$ and $q' \in \G'$ are {\em $A$-equivalent}, or
$q \mkbisxa q'$, iff for every $A$-formula $\phi$, $(\G,
q) \models_{x} \phi$ iff $(\G, q') \models_{x} \phi$.  The iCGS
$\G, \G'$ will be omitted as clear from the context.
In other words, relation $\mkbisxa$ connects those states of two iCGS
that satisfy the same formulas in ATL$^*$ that refer to coalition $A$.

In many logics, such as the Hennessy-Milner logic, CTL, CTL$^*$, ATL
and ATL$^*$ interpreted under perfect information \cite{AlurHKV98},
logical equivalence can be characterised in a local and efficient way
by means of bisimulations. By contrast, the bisimulations for
ATL$^*$ introduced in this paper are strictly weaker than logical
equivalence $\mkbisxa$. In particular, in Section~\ref{sec:limitations} we show
that there are iCGS that satisfy the same formulas in ATL$^*$, while
not being bisimilar.
Then, in Section~\ref{sec:tight} we present a necessary local
condition that needs to be satisfied by $\mkbissa$.  As we show, the
condition is not sufficient however. Nevertheless, it is an important
piece of the currently partially known puzzle about the
characterisation power of bisimulations for ATL$^*$ under imperfect
information.

In the rest of the section let $T \subseteq S$ and $R \subseteq S
\times S'$. We define \emph{the image} of $T$ w.r.t.~$R$:
\begin{align*}
  \mkimg{T}{R} & = \{\mkstate \in S' \mkmrule \text{for some } \mkstate' \in T \;(\mkstate',\mkstate)\in R \}.
\end{align*}
For each $\mkstate \in S$, let $E_A(\mkstate)$ denote the set
of all states that are indistinguishable from $\mkstate$ according to
the ``{\em everybody in $A$ knows}'' relation $\sim_A^E$.

\subsection{Failure of the HM Property} \label{sec:limitations}

We immediately state the failure of the HM property for our notion of
bisimulation for ATL$^*$ under imperfect information.
\begin{theorem} \label{HM}
For $x \in \{sub, obj \}$, there exists iCGS $\G, \G'$, and states $q,
q'$, such that $q \mkbisx{Ag} q'$ and $q \notaltbisim{Ag} q'$.
\end{theorem}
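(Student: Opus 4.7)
The plan is to exhibit a small pair of iCGS that are $Ag$-equivalent for both $x\in\{subj,obj\}$ but cannot be related by any bisimulation from Definition~\ref{def:bisim}.

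\emph{Construction.} Let $Ag=\{1\}$, $AP=\{p\}$, and $Act=\{a\}$. In $\G$, take a single state $q_0$ with $\pi(q_0)=\{p\}$ and self-loop $q_0\xrightarrow{a}q_0$; in $\G'$, take two states $q_0',q_1'$ both carrying $p$, with the deterministic two-cycle $q_0'\xrightarrow{a}q_1'\xrightarrow{a}q_0'$. Set $\sim_1$ to the identity in both models, so all common-knowledge neighborhoods are singletons ($C_{Ag}(q_0)=\{q_0\}$, $C_{Ag}(q_0')=\{q_0'\}$, $C_{Ag}(q_1')=\{q_1'\}$) and the subjective and objective outcomes coincide.

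\emph{Step 1: logical equivalence.} We would prove by mutual structural induction on ATL$^*$ state and path formulas the strengthened claim that, for every state formula $\phi$, $\phi$ has the same truth value at $q_0$ in $\G$ and at \emph{both} states $q_0'$ and $q_1'$ in $\G'$. The strengthening is needed because the unique outcome run in $\G'$ alternates between $q_0'$ and $q_1'$. The only non-trivial case is $\coop{Ag}\psi$: since each model admits exactly one strategy, the relevant outcome is the stationary run $q_0^\omega$ in $\G$, or a shift of $(q_0'q_1')^\omega$ in $\G'$. A nested induction on $\psi$ then aligns the truth value of $\psi$ across these runs, using the outer induction hypothesis on state subformulas together with the facts that every state on the runs carries $p$ and that the runs are invariant (up to permutation) under $X$-shifting. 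The argument is uniform in $x\in\{subj,obj\}$ because the two outcome sets coincide.

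\emph{Step 2: non-bisimilarity.} Suppose, toward contradiction, that $\altbisim{Ag}$ is a bisimulation with $q_0\altbisim{Ag}q_0'$. The only partial strategy on $C_{Ag}(q_0)=\{q_0\}$ picks $a$, and any image under the strategy simulator satisfies $succ(q_0',ST(\sigma))=\{q_1'\}$. Hence condition~1(c) of Definition~\ref{def:bisim} forces the unique successor $q_0$ of $q_0$ to be related to $q_1'$, i.e.\ $q_0 \altbisim{Ag} q_1'$. Then $\altbisim{Ag}^{-1}$ maps both $q_0'$ and $q_1'$ to $q_0$, so condition~2 applied to the converse relation demands $C_{Ag}(q_0')=C_{Ag}(q_1')$, contradicting $\{q_0'\}\neq\{q_1'\}$. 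The expected main obstacle lies in Step~1: the induction hypothesis must be strengthened to cover both states of $\G'$ visited by the cycle, and the nested induction on path formulas must be threaded carefully so that each temporal operator on the cyclic run of $\G'$ is aligned with its evaluation on the stationary run of $\G$; Step~2 is then a direct unpacking of Definition~\ref{def:bisim}, where conditions~1(c) and~2 together leave no room for any relation to absorb the extra CKN structure of $\G'$.
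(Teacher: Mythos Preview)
Your argument is correct and considerably more economical than the paper's. The paper treats the two semantics separately, using a two-agent construction (Figure~\ref{fig:indirect-conflict-2}) for the subjective case and a different single-agent construction (Figure~\ref{fig:poorduck}) for the objective case; in both, the contradiction is extracted from condition~1(c) of Definition~\ref{def:bisim}, by showing that no strategy simulator can be defined consistently. By taking indistinguishability to be the identity, you collapse the subjective and objective outcome sets and dispense with the need for two examples; and by choosing a single action and uniform labelling, you make the logical-equivalence step essentially an LTL argument on runs with constant trace. Your non-bisimilarity step instead exploits condition~2 (CKN-injectivity) on the \emph{converse} relation, which is a cleaner route to the contradiction than the simulator analysis in the paper. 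What the paper's examples buy in return is that they exhibit the failure of the Hennessy--Milner property through genuinely strategic phenomena (mismatched winning partial strategies, inconsistent simulator images), whereas your witness is a purely structural one in which the strategic layer is trivial; this may matter if one later wants to argue that no \emph{reasonable} weakening of condition~2 can rescue the HM property, but for the theorem as stated your construction suffices.
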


We prove Theorem~\ref{HM} first for the subjective and then for the
objective semantics.

\begin{figure}[t]\centering
\begin{tabular}{ccc}
 $\G_3$ & \qquad\qquad & $\G_4$ \\ \\
\begin{tikzpicture}[>=latex, transform shape, scale = 0.75]
\tikzstyle{initstate}=[circle,draw,trans, minimum size=8mm, fill=yellow]
\tikzstyle{state}=[circle,draw,trans, minimum size=8mm]
\tikzstyle{trans}=[font=\footnotesize]

\def\stpd{3.5}

\node[initstate] (m0) at (-1.5, -0.5*\stpd) {$q_1$};
\node[state] (m1) at (0.0, -\stpd) {$q_2$};
\node[state] (m2) at (2.5, -\stpd) {$q_3$};
\node[state] (m3) at (4.0, -0.5*\stpd) {$q_4$};

\node[state, label=-90:{$p$}] (fin) at (1.5, -2*\stpd) {$q_\top$};
\node[state, label=-90:{}] (sink) at (-1.5, -2.4*\stpd) {$q_\bot$};

\path [-,style=dotted,shorten >=1pt, auto, node distance=7cm, semithick]

(m0) edge node {1} (m1)
(m1) edge node {2} (m2)
(m2) edge node {1} (m3)
;

\path [->,style=solid,shorten >=1pt, auto, node distance=7cm, semithick]

(m0) edge [bend right] node[left,text width = 0.8cm] {$(a,x)$ $(b,y)$} (fin)
(m1) edge [bend right] node[midway,right,text width = 0.8cm] {$(a,x)$ $(b,y)$} (fin)
(m2) edge [bend left] node[midway,left,text width = 0.8cm] {$(a,x)$ $(b,y)$} (fin)
(m3) edge [bend left] node[right,text width = 0.8cm] {$(a,x)$ $(b,y)$} (fin)

(sink) edge [loop left] (sink)

;
 \end{tikzpicture}
 & &  
\begin{tikzpicture}[>=latex, transform shape, scale = 0.75]
\tikzstyle{initstate}=[circle,draw,trans, minimum size=8mm, fill=yellow]
\tikzstyle{state}=[circle,draw,trans, minimum size=8mm]
\tikzstyle{trans}=[font=\footnotesize]

\def\stpd{3.5}

\node[initstate] (m0) at (-1.5, -0.5*\stpd) {$q'_1$};
\node[state] (m1) at (0.0, -\stpd) {$q'_2$};
\node[state] (m2) at (2.5, -\stpd) {$q'_3$};
\node[state] (m3) at (4.0, -0.5*\stpd) {$q'_4$};

\node[state, label=-90:{$p$}] (fin) at (1.5, -2*\stpd) {$q'_\top$};
\node[state, label=-90:{}] (sink) at (-1.5, -2.4*\stpd) {$q'_\bot$};

\path [-,style=dotted,shorten >=1pt, auto, node distance=7cm, semithick]

(m0) edge node {1} (m1)
(m1) edge node {2} (m2)
(m2) edge node {1} (m3)
;

\path [->,style=solid,shorten >=1pt, auto, node distance=7cm, semithick]

(m0) edge [bend right] node[left] {$(a,x)$} (fin)
(m1) edge [bend right] node[midway,right,text width = 0.8cm] {$(a,x)$ $(b,y)$} (fin)
(m2) edge [bend left] node[midway,left,text width = 0.8cm] {$(a,x)$ $(b,y)$} (fin)
(m3) edge [bend left] node[right] {$(b,y)$} (fin)

(sink) edge [loop left] (sink)

;
 \end{tikzpicture}
\end{tabular}
\caption{A counterexample to Hennessy-Milner property for subjective semantics. Available actions: $\{a,b,c\}$ for agent 1, $\{x,y,z\}$ for agent 2.}
\label{fig:indirect-conflict-2}
\end{figure}
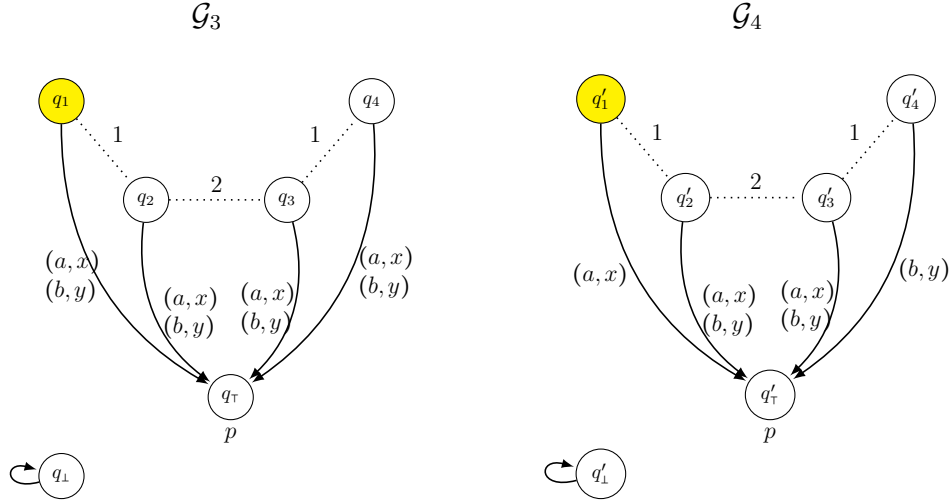

\para{
Subjective Semantics.} Consider the two-agent \mkCEGS\ in
Fig.~\ref{fig:indirect-conflict-2}.  In each state agent 1 can execute
actions $\{a,b,c\}$ while agent 2 can execute $\{x,y,z\}$. The
transitions shown lead to $q_{\top}$ and $q'_{\top}$, while the
omitted transitions lead to $q_{\bot}$ and $q'_{\bot}$, respectively.
Via case-by-case analysis it can be proved that $q_i\mkbiss{Ag} q_j$,
$q'_i\mkbiss{Ag} q'_j$, and $q_i\mkbiss{Ag} q'_j$, for
$i,j\in\{1,2,3,4\}$, and also $q_\bot \mkbiss{Ag} q'_\bot$, $q_\top
\mkbiss{Ag} q'_\top$.  
Therefore, $\G_3$ and $\G_4$ are $Ag$-equivalent, i.e., they satisfy the
same $Ag$-formulas in ATL$^*$.  However, we show that there is no
$Ag$-bisimulation between the two iGCS. In particular, for any $i, j
\in \{1,2,3,4\}$, state $q_i$ cannot be $Ag$-bisimular with any state
$q'_j$.  In the proof we make use of the following lemma.
\begin{lemma}\label{lemma:preserveck}
 Let $A \subseteq Ag$. If $q\altbisim{A}q'$ then
\begin{enumerate}
\item   for all $r'\in C_A(q')$, there exists $r\in C_A(q)$ such that $r \altsim{A} r'$;
\item  for all $r\in C_A(q)$, there exists $r'\in C_A(q')$ such that $r \altsim{A} r'$.  
\end{enumerate}
\end{lemma}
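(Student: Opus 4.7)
The plan is to unfold the common-knowledge relation as a finite chain of single-agent indistinguishability steps, and then iterate condition~1.(b) of Def.~\ref{def:bisim} along that chain. Since both clauses are symmetric and $\altbisim{A}^{-1}$ is itself a simulation (because $\altbisim{A}$ is a bisimulation), it suffices to prove clause~(1); clause~(2) follows by applying the same argument to $(q',q) \in \altbisim{A}^{-1}$.

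For clause~(1), fix $r' \in C_A(q')$. By definition, $\sim^C_A$ is the transitive closure of $\bigcup_{i\in A}\sim'_i$, so there exist states $s'_0,\ldots,s'_n$ in $\G'$ and agents $i_1,\ldots,i_n \in A$ with $s'_0 = q'$, $s'_n = r'$, and $s'_{k-1} \sim'_{i_k} s'_k$ for every $k=1,\ldots,n$. I will construct, by induction on $k$, states $s_k \in S$ such that $s_k \altsim{A} s'_k$ and $s_k \in C_A(q)$. For the base case $k=0$, take $s_0 := q$: we have $s_0 \altsim{A} s'_0$ by the hypothesis $q \altbisim{A} q'$, and $s_0 \in C_A(q)$ trivially. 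For the inductive step, given $s_{k-1} \in C_A(q)$ with $s_{k-1} \altsim{A} s'_{k-1}$, apply condition~1.(b) to the pair $(s_{k-1}, s'_{k-1})$, the agent $i_k \in A$, and the successor $s'_k$ satisfying $s'_{k-1} \sim'_{i_k} s'_k$. This yields a state $s_k \in S$ with $s_{k-1} \sim_{i_k} s_k$ and $s_k \altsim{A} s'_k$. Since $s_{k-1} \in C_A(q)$ and $i_k \in A$, transitivity of $\sim^C_A$ gives $s_k \in C_A(q)$. Setting $r := s_n$ completes the proof.

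There is no real obstacle here: the only point that deserves attention is verifying that each newly constructed $s_k$ stays inside $C_A(q)$, which holds precisely because the indistinguishability step used at each stage is indexed by an agent in the coalition $A$. This is also where condition~1.(b) is applied in the ``correct direction'' —- it is formulated so that an epistemic step from $q'$ in $\G'$ can be matched by an epistemic step from $q$ in $\G$, which is exactly what the induction needs to propagate membership in $C_A(q)$ along the chain.
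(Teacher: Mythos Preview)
Your proof is correct and follows essentially the same approach as the paper: both argue by induction on the length of a $\sim^C_A$-chain from $q'$ to $r'$, applying condition~1.(b) of Def.~\ref{def:bisim} at each step. The paper phrases clause~(1) as the containment $C_A(q')\subseteq\mkimg{C_A(q)}{\altsim{A}}$ and leaves the induction implicit; you spell it out, and you are also more explicit than the paper about deriving clause~(2) from the fact that $\altbisim{A}^{-1}$ is itself a simulation.
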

\begin{proof}
  The thesis of the lemma follows immediately from the observation
  that $C_A(q')\subseteq\mkimg{C_A(q)}{\altsim{A}}$, which in turn can
  be shown via straightforward induction on the length of the path
  w.r.t.~relation $\sim^C_A$ that joins $q'$
with $r'$.
\end{proof}

Now, set $T = \{q_1,q_2,q_3,q_4\}$ and $T' = \{q_1',q_2',q_3',q_4'\}$,
and note that, for the partial strategy $\sigma_{Ag}$ whose
domain is $T$ and is defined by $\sigma_{Ag}(q_i) = (a,x)$ for all
$q_i \in T$, we have $succ(T,\sigma_{Ag}) = \{ q_\top\}$.  On
the other hand, note that for any partial strategy
$\sigma'_{Ag}$ whose domain is $T'$, we have 
$succ(T',\sigma'_{Ag}) = \{q'_\top,q'_\bot\}$.

So, if we assume the existence of a strategy simulator 
$ST : \mkpartstrata{C_A(q_i)} \rightarrow \mkpartstrata{C_A'(q'_i)}$,
then $succ(T',ST(\sigma^1_{Ag})) = \{q'_\bot,q'_\top\}$,
which implies that 
there exists $q'_j \in T'$ such that $succ(q'_j, \sigma'_{Ag}) = q'_\bot$.
On the other hand, by Lemma~\ref{lemma:preserveck}, there exist $q_i \in T$ such that 
$q_i \altsim{Ag} q'_j$.
But $succ(q_i,\sigma^1_{Ag}) = q_\top$, which is in contradiction
with point $1.(c)$ in Def.~\ref{def:bisim}, as it cannot be the case that 
$q_\top \altsim{Ag} q'_\bot$.

\begin{figure}[t]\centering
\begin{tabular}{ccc}
 $\G_5$ & \qquad\qquad & $\G_6$
\\ \\
\begin{tikzpicture}[>=latex,scale=1.2]
 \tikzstyle{initstate}=[circle,draw,trans, minimum size=8mm, fill=yellow]
\tikzstyle{astate}=[circle,draw,trans, minimum size=8mm]
\tikzstyle{trans}=[font=\footnotesize]

    \path   (-1.2,0) node[initstate] (s0) {\small${\;}q_0{\;}$}
            (1.2,0) node[astate] (s1) {\small${\;}q_1{\;}$}

            (-1.2,-2) node[astate] (s2) {\small${\;}q_2{\;}$} +(+0.5,-0.5) node[transform shape] {\small $\prop{win}$}
            (1.2,-2) node[astate] (s3) {\small${\;}q_3{\;}$}

    ;

    \draw[loosely dotted,thick](s0)--(s1) node[midway,sloped] {\small\colorbox{white}{$1$}};

    \draw[-latex](s0)--(s2) node[midway,left] {\scriptsize$L$};
    \draw[-latex](s0)--(s3) node[midway,sloped,below,near start] {\scriptsize$R$};
    \draw[-latex](s1)--(s2) node[midway,sloped,below,near start] {\scriptsize$R$};
    \draw[-latex](s1)--(s3) node[midway,right] {\scriptsize$L$};

    \draw[-latex](s2) ..controls +(0.2,-1) and +(-0.5,-0.7).. (s2);
    \draw[-latex](s3) ..controls +(-0.2,-1) and +(0.5,-0.7).. (s3);
 \end{tikzpicture}
 & &
\begin{tikzpicture}[>=latex,scale=1.2]
 \tikzstyle{initstate}=[circle,draw,trans, minimum size=8mm, fill=yellow]
\tikzstyle{astate}=[circle,draw,trans, minimum size=8mm]
\tikzstyle{trans}=[font=\footnotesize]

    \path   (-1.2,0) node[initstate] (q0) {\small${\;}q'_0{\;}$}

            (-1.2,-2) node[astate] (q2) {\small${\;}q'_2{\;}$} +(+0.5,-0.5) node[transform shape] {\small $\prop{win}$}
            (1.2,-2) node[astate] (q3) {\small${\;}q'_3{\;}$}

    ;

    \draw[-latex](q0)--(q2) node[midway,left] {\scriptsize$L$};
    \draw[-latex](q0)--(q3) node[midway,sloped,below,near start] {\scriptsize$R$};

    \draw[-latex](q2) ..controls +(0.2,-1) and +(-0.5,-0.7).. (q2);
    \draw[-latex](q3) ..controls +(-0.2,-1) and +(0.5,-0.7).. (q3);
 \end{tikzpicture}
\end{tabular}
\caption{A counterexample to the Hennessy-Milner property for objective semantics.}
\label{fig:poorduck}
\end{figure}
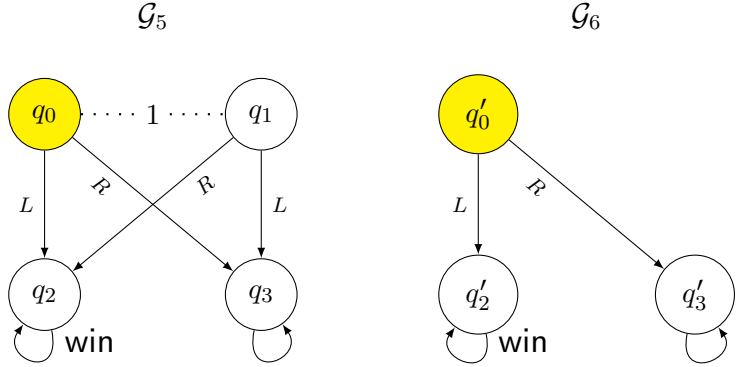

\para{
Objective Semantics.}  Consider the single-agent \mkCEGS\ in
Fig.~\ref{fig:poorduck} in which $q_0 \nmkbissag q'_0$.  Observe first
that, by symmetry, we have $q_0 \mkbis{Ag}^x q_1$, for
$x\in\{\mathit{subj},\mathit{obj}\}$.  Hence, the same two strategies
(play $R$, play $L$) can be executed in $q_0$ and $q_1$.  For each
such strategy, its subtree starting from $q_0$ is isomorphic to the
corresponding tree starting from $q'_0$, we thus have $q_0 \mkbiso{Ag}
q'_0$.  In particular, we can establish that $q_{i} \mkbiso{Ag}
q'_{i}$, for all $i\in\{0,2,3\}$.
On the other hand,
as both $q_{0}\mkbiso{Ag} q'_{0}$ and $q_{1}\mkbiso{Ag} q'_{0}$,
the simulator function $ST_{\{q_0, q_1\},\{q'_0\}}$ should satisfy
$ST_{\{q_0, q_1\},\{q'_0\}}(L) = L$ and $ST_{\{q_0, q_1\},\{q'_0\}}(L) = R$, a contradiction.
Hence, it is not the case that $q_0{\altbisim{Ag}}q'_0$.

~\\

As a consequence of Theorem~\ref{HM}, our notion of bisimulation is
sufficient for the preservation of formulas in ATL$^*$, but it is not
necessary. In particular, it does not enjoy the Hennessy-Milner property. 
In the following section we investigate this problem more closely.

\subsection{Towards a Tight Characterization of ATL$^*$-Equivalence} \label{sec:tight}

In this section we investigate necessary, rather than sufficient,
conditions for logical equivalence.  Specifically, we introduce a
notion of pre-simulation and prove that it is implied by logical
equivalence for the subjective variant of the semantics, while being
strictly weaker.
\begin{definition}[Pre-bisimulation]\label{def:presim}
Let $\G$
and $\G'$
be two iCGS defined on the same sets $Ag$ of agents and $AP$ of atoms.
Let $A\subseteq Ag$ be a coalition of agents.
  A relation $\mkpaltsima \subseteq S \times S'$
  is a \emph{pre-simulation} iff
there exists a simulator $ST$ of partial strategies for $A$ w.r.t.~$\mkpaltsima$,
such that $q \mkpaltsima q'$ implies that conditions (a) and (b) in Def.~\ref{def:bisim} hold and
\begin{enumerate}

\item[(c')] For every states $r \in E_A(q)$, $r' \in E_A'(q')$ such that $r \mkpaltsima r'$,
  for every partial uniform strategy $\sigma_A \in PStr_A(E_A(q))$,
  and every state $s' \in succ(r', ST(\sigma_A))$,
  there exists a state $s\in succ(r, \sigma_A)$ such that $s \mkpaltsima s'$.
\end{enumerate}

A relation $\mkpaltbisima$ is called \emph{pre-bisimulation} if both $\mkpaltbisima$ and
$\mkpaltbisima^{-1}$ are \emph{pre-simulations}.
\end{definition}

The key difference between the pre-bisimulations in
Def.~\ref{def:presim} and the bisimulations in Def.~\ref{def:bisim}
regards the domains of the partial uniform strategies mapped by
strategy simulator $ST$: these are common knowledge neighbourhoods for
bisimulation, and simple collective knowledge neighbourhoods for
pre-bisimulations. Moreover, pre-bisimulation need not to satisfy condition (2) in Def.~\ref{def:bisim}.

We can now prove the following result.
\begin{theorem}[Necessary Conditions on $\mkbissa$]\label{theorem:necondsubj}
\mbox{}

\noindent
If $\mkstate\mkbissa\mkstate'$, then there exists a pre-bisimulation $\mkpaltbisima$ such that $\mkstate\mkpaltbisima\mkstate'$.
\end{theorem}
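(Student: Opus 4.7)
The plan is to build the pre-bisimulation $\mkpaltbisima$ as a carefully chosen subrelation of $\mkbissa$ containing $(\mkstate,\mkstate')$, rather than taking $\mkpaltbisima=\mkbissa$ outright, since condition (b) of Definition~\ref{def:bisim} demands matching of \emph{individual} $\sim_i$-neighborhoods for $i\in A$, and this is not guaranteed by $\mkbissa$ a priori because $K_i$ is not an $A$-formula when $|A|>1$. Clause (a) is then immediate: atoms are $A$-formulas, so $\mkbissa$ entails agreement on the labeling.

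The construction proceeds by a back-and-forth saturation starting from $(\mkstate,\mkstate')$: at each stage, for every pair $(r,r')$ already in $\mkpaltbisima$, I would adjoin the pairs $(t,t')$ required by (b) and the pairs of successors required by (c'), always preserving $t\mkbissa t'$. The saturation is well-defined provided two ``matching lemmas'' can be proved, which I discuss next.

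For clause (b), the key matching lemma to establish is: if $r\mkbissa r'$, then for each $i\in A$ and each $t'\sim'_i r'$ there exists $t\sim_i r$ with $t\mkbissa t'$. Although individual $K_i$ is not available as an $A$-formula, the operator $\coop{A}$ under the subjective semantics enforces uniformity per agent, so $\coop{A}X\phi$-formulas effectively track the $\sim_i$-partition structure of $E_A$-neighborhoods. Assuming finite (or image-finite) iCGS, I would argue by contradiction: if no such $t$ existed, one could combine distinguishing characteristic formulas of $\mkbissa$-classes into a uniform strategic goal that $\G$ can enforce but $\G'$ cannot, yielding a $\coop{A}X$-formula separating $r$ from $r'$ and contradicting $r\mkbissa r'$. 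A Hall-type combinatorial argument then threads the choices for different $i$'s into a single coherent matching.

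For clause (c'), the strategy simulator $ST$ is obtained by transferring partial strategies through next-step formulas. Given $\sigma_A\in PStr_A(E_A(q))$ and characteristic $A$-formulas $\chi_D$ of the $\mkbissa$-classes $D$, I would encode the one-step successor types of $\sigma_A$ as a formula $\phi$, extend $\sigma_A$ to a total uniform strategy witnessing $q\models_{subj}\coop{A}X\phi$, and use $q\mkbissa q'$ to obtain a uniform witness $\tau_A$ in $\G'$; setting $ST(\sigma_A):=\tau_A$ restricted to $E_A'(q')$ then gives the simulator. The hard part --- and the main obstacle of the proof --- is making $\phi$ fine enough that $ST$ matches successors \emph{pointwise}: each successor of $r'$ under $ST(\sigma_A)$ must be matched by a successor of the specific $r\mkpaltbisima r'$, not merely of some state in $E_A(q)$. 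I would resolve this by relativising $\phi$ with the characteristic formulas $\chi_{[r]}$ of source $\mkbissa$-classes and combining several $\coop{A}X$-preservation instances, possibly at the price of further restricting $\mkpaltbisima$ so that the whole saturation closes consistently with both (b) and (c').
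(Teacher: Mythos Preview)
The paper's route is structurally simpler: it takes $\mkpaltbisima=\mkbissa$ outright and argues that logical equivalence itself is already a pre-bisimulation, with no subrelation and no back-and-forth saturation. Clause~(a) is immediate just as you say; clause~(b) the paper declares ``routine'' and does not prove, so your worry that individual $K_i$ is not an $A$-formula is simply not engaged there. For~(c') the paper argues by contradiction rather than building $ST$ constructively: assuming some $\sigma_A\in PStr_A(E_A(q))$ admits no acceptable image, one picks for each $\sigma_A'\in PStr_A(E_A'(q'))$ the witnessing $r,r',s'$, takes separating formulas $\phi_s^{\sigma_A'}$ over $s\in succ(r,\sigma_A)$, sets $\phi=\bigwedge_{\sigma_A'}\bigvee_s\phi_s^{\sigma_A'}$, and concludes $(\G,q)\models_{subj}\coop{A}X\phi$ while $(\G',q')\not\models_{subj}\coop{A}X\phi$, contradicting $q\mkbissa q'$. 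The simulator $ST$ is then obtained by choice, one $\sigma_A$ at a time.

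Your explicit-construction plan and the paper's contradiction argument share the same engine (characteristic formulas packaged under $\coop{A}X$), but the contradiction form avoids having to exhibit a single globally coherent simulator, which is precisely the source of your ``pointwise matching'' obstacle. What you flag as the hard part---that a successor of $r'$ must be matched from the \emph{specific} $r$ with $r\mkpaltbisima r'$, not merely from some state in $E_A(q)$---is a genuine subtlety, and the paper's sketch is light here too (the displayed $\phi$ uses one $r$ per $\sigma_A'$ yet must govern all subjective outcomes from $E_A(q)$). So your caution is not misplaced; but the detour through a restricted subrelation, Hall-type matchings, and relativised characteristic formulas is not the path the paper takes, and it is not clear that machinery closes the gaps you raise any more cleanly than a careful version of the direct contradiction argument on $\mkbissa$ would.
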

\begin{proof}
  We prove that $\mkbissa$ is a pre-bisimulation.  The proof is by
  contradiction. It is routine to prove that $\mkbissa$ satisfies
  points~(a) and~(b) in Def.~\ref{def:bisim}, we therefore focus
  on point~(c').  Let $\mkstate \mkbissa \mkstate'$ and to obtain a contradiction suppose that $\mkstrata\in\mkpartstrata{E_A(q)}$ is such that for every
  $\mkstrata'\in \mkpartstrata{E_A(q')}$ there exists $\mkstateb \in E_A(q)$, $\mkstateb' \in E_A(q')$, and $s' \in succ(r', \mkstrata')$ such that for every $s \in
  succ(r,\mkstrata)$, we have $s \nmkbissa s'$.
This in particular means that for each such $s$ there exists a formula
$\phi_{s}^{\mkstrata'}$ in ATL$^*$ such that $(\G, s)
\mkmodelss\phi_{s}^{\mkstrata'}$ and $(\G', s')
\mkmodelss\neg\phi_{s}^{\mkstrata'}$. Now, define formula
$\phi_r^{\mkstrata'} = \bigvee_{s \in succ(r,
  \mkstrata)}\phi_{s}^{\mkstrata'}$ and observe that
$\phi_r^{\mkstrata'}$ is true in the next step from $r$ by using
$\mkstrata$, whereas this is not the case for $r'$.
Finally, let $\phi =
\bigwedge_{\mkstrata'\in\mkpartstrata{E_A(q')}}\phi_r^{\mkstrata'}$.  To
conclude, it suffices to observe that $(\G, \mkstate)
\mkmodelss\coop{A}X\phi$ and $(\G', \mkstate')
\mkmodelss\neg\coop{A}X\phi$, a contradiction with
$\mkstate\mkbissa\mkstate'$.
\end{proof}

The following corollary to Theorem~\ref{theorem:necondsubj} deals with the case of objective semantics.

\begin{corollary}[Necessary Condition on $\mkbisoa$]\label{theorem:necondobj}
  If $\mkstate\mkbisoa\mkstate'$, then for each uniform strategy $\mkstrata$
  there exists a uniform strategy $\mkstrata'$ such that:
  \[
succ(q',\mkstrata') \subseteq  \mkimg{succ(q,\mkstrata)}{\mkbisoa}
  \]
\end{corollary}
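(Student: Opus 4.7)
The approach is to adapt the argument of Theorem~\ref{theorem:necondsubj} to the objective semantics, where satisfaction of a strategic formula at a state refers only to that state's own successors rather than those in its epistemic neighbourhood; this is precisely why the conclusion can be stated in terms of $succ(q,\sigma_A)$ and $succ(q',\sigma'_A)$ alone. The plan is to argue by contradiction: assume $q \mkbisoa q'$, and suppose that $\sigma_A$ is a uniform strategy such that no uniform strategy $\sigma'_A$ witnesses the required containment. Then for every $\sigma'_A$ one can fix a state $s'_{\sigma'_A} \in succ(q',\sigma'_A)$ with $\neg(s \mkbisoa s'_{\sigma'_A})$ for every $s \in succ(q,\sigma_A)$.

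For each such pair $(s, s'_{\sigma'_A})$, the failure of $\mkbisoa$ yields a state $A$-formula $\phi_s^{\sigma'_A}$ of ATL$^*$ with $(\G,s) \models_{obj} \phi_s^{\sigma'_A}$ and $(\G',s'_{\sigma'_A}) \not\models_{obj} \phi_s^{\sigma'_A}$. I would aggregate these distinguishing formulas in two stages, exactly as in the subjective case: first set $\phi^{\sigma'_A} = \bigvee_{s \in succ(q,\sigma_A)} \phi_s^{\sigma'_A}$, which is true at every successor of $q$ under $\sigma_A$ but false at $s'_{\sigma'_A}$; then form $\phi = \bigwedge_{\sigma'_A} \phi^{\sigma'_A}$, the outer conjunction ranging over all uniform strategies of $A$ in $\G'$.

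The contradiction is obtained by evaluating $\coop{A} X \phi$ on the two sides. Playing $\sigma_A$ from $q$ witnesses $(\G,q) \models_{obj} \coop{A} X \phi$, since every successor of $q$ under $\sigma_A$ satisfies every conjunct $\phi^{\sigma'_A}$. Conversely, whichever uniform strategy $\sigma'_A$ the coalition might play at $q'$, the chosen successor $s'_{\sigma'_A}$ falsifies $\phi^{\sigma'_A}$ and hence $\phi$; therefore $(\G',q') \not\models_{obj} \coop{A} X \phi$, contradicting $q \mkbisoa q'$.

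The main technical point to watch out for is that $\phi$ must be a well-formed ATL$^*$ formula, i.e.\ the indexing sets of the conjunction and disjunction must be finite. Under the implicit assumption that $S$, $S'$ and $Act$ are finite, only finitely many uniform strategies need be quantified over (and for the outer conjunction it is really only the finitely many joint actions available to $A$ at $q'$ that matter), so this is harmless but should be stated explicitly. Apart from that, the proof is a direct specialisation of the subjective argument, made easier by the fact that objective $\coop{A} X$-satisfaction at $q$ disregards the epistemic class of $q$, so no enlargement of $succ(q,\sigma_A)$ along $E_A$ is required.
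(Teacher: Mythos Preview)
Your proposal is correct and matches what the paper intends: the paper gives no explicit proof for this corollary, presenting it as following from Theorem~\ref{theorem:necondsubj} by the same argument specialised to the objective semantics, which is precisely what you do. Your observation that objective satisfaction of $\coop{A}X\phi$ at $q$ looks only at $succ(q,\sigma_A)$ (rather than at successors across $E_A(q)$) is exactly the simplification that turns condition~(c') into the plain inclusion $succ(q',\sigma'_A)\subseteq\mkimg{succ(q,\sigma_A)}{\mkbisoa}$, and your remark about finiteness of the conjunction and disjunction is a welcome addition that the paper leaves implicit.
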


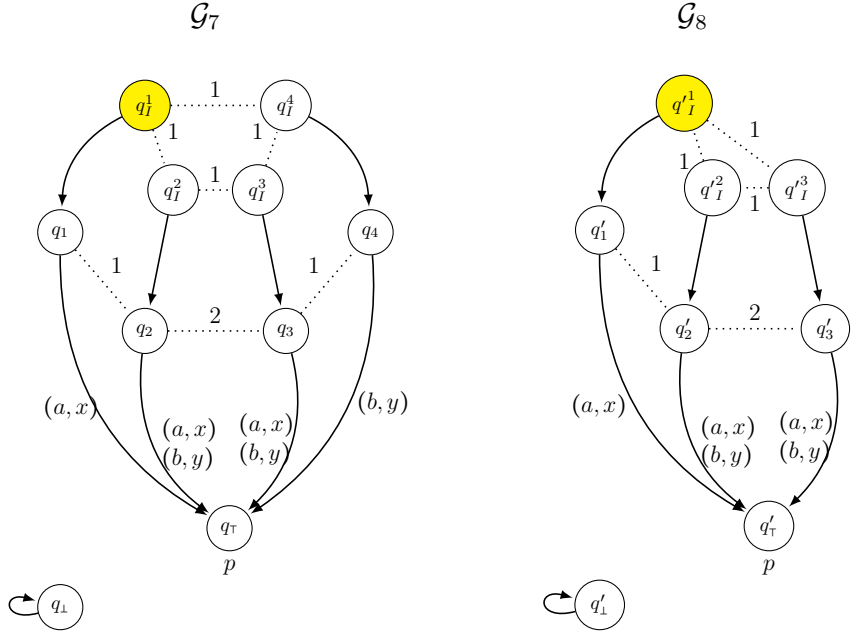
\begin{figure}[t]\centering
\begin{tabular}{ccc}
 $\G_7$ & \qquad\qquad & $\G_8$
\\ \\
\begin{tikzpicture}[>=latex, transform shape, scale=0.75]
 \tikzstyle{initstate}=[circle,draw,trans, minimum size=8mm, fill=yellow]
\tikzstyle{state}=[circle,draw,trans, minimum size=8mm]
\tikzstyle{trans}=[font=\footnotesize]

\def\stpd{3.5}

\node[initstate] (s0) at (0.0, 0.5) {$q_I^1$};
\node[state] (s1) at (2.5, 0.5) {$q_I^4$};
\node[state] (s2) at (0.5, -1.0) {$q_I^2$};
\node[state] (s3) at (2.0, -1.0) {$q_I^3$};

\node[state] (m0) at (-1.5, -0.5*\stpd) {$q_1$};
\node[state] (m1) at (0.0, -\stpd) {$q_2$};
\node[state] (m2) at (2.5, -\stpd) {$q_3$};
\node[state] (m3) at (4.0, -0.5*\stpd) {$q_4$};

\node[state, label=-90:{$p$}] (fin) at (1.5, -2*\stpd) {$q_\top$};
\node[state, label=-90:{}] (sink) at (-1.5, -2.4*\stpd) {$q_\bot$};

\path [-,style=dotted,shorten >=1pt, auto, node distance=7cm, semithick]
(s0) edge node {1} (s1)
(s0) edge node {1} (s2)
(s3) edge node {1} (s1)
(s2) edge node {1} (s3)

(m0) edge node {1} (m1)
(m1) edge node {2} (m2)
(m2) edge node {1} (m3)
;

\path [->,style=solid,shorten >=1pt, auto, node distance=7cm, semithick]
(s0) edge [bend right] (m0)
(s1) edge [bend left] (m3)
(s2) edge (m1)
(s3) edge (m2)

(m0) edge [bend right] node[left,text width = 0.8cm] {$(a,x)$} (fin)
(m1) edge [bend right] node[midway,right,text width = 0.8cm] {$(a,x)$ $(b,y)$} (fin)
(m2) edge [bend left] node[midway,left,text width = 0.8cm] {$(a,x)$ $(b,y)$} (fin)
(m3) edge [bend left] node[right,text width = 0.8cm] {$(b,y)$} (fin)

(sink) edge [loop left] (sink)

;
 \end{tikzpicture}
 & &
\begin{tikzpicture}[>=latex, transform shape, scale=0.75]
 \tikzstyle{initstate}=[circle,draw,trans, minimum size=8mm, fill=yellow]
\tikzstyle{state}=[circle,draw,trans, minimum size=8mm]
\tikzstyle{trans}=[font=\footnotesize]

\def\stpd{3.5}

\node[initstate] (s0) at (0.0, 0.5) {${q'}^1_I$};
\node[state] (s1) at (2.0, -1.0) {${q'}^3_I$};
\node[state] (s2) at (0.5, -1.0) {${q'}^2_I$};

\node[state] (m0) at (-1.5, -0.5*\stpd) {$q'_1$};
\node[state] (m1) at (0.0, -\stpd) {$q'_2$};
\node[state] (m2) at (2.5, -\stpd) {$q'_3$};

\node[state, label=-90:{$p$}] (fin) at (1.5, -2*\stpd) {$q'_\top$};
\node[state, label=-90:{}] (sink) at (-1.5, -2.4*\stpd) {$q'_\bot$};

\path [-,style=dotted,shorten >=1pt, auto, node distance=7cm, semithick]
(s0) edge node {1} (s1)
(s2) edge node {1} (s0)
(s1) edge node {1} (s2)

(m0) edge node {1} (m1)
(m1) edge node {2} (m2)
;

\path [->,style=solid,shorten >=1pt, auto, node distance=7cm, semithick]
(s0) edge [bend right] (m0)
(s2) edge (m1)
(s1) edge (m2)

(m0) edge [bend right] node[left] {$(a,x)$} (fin)
(m1) edge [bend right] node[midway,right,text width = 0.8cm] {$(a,x)$ $(b,y)$} (fin)
(m2) edge [bend left] node[midway,left,text width = 0.8cm] {$(a,x)$ $(b,y)$} (fin)

(sink) edge [loop left] (sink)

;
 \end{tikzpicture}
\end{tabular}
\caption{Pre-bisimulation does not imply logical equivalence.}
\label{fig:prebisimnotsim}
\end{figure}

It should be noted that pre-bisimilarity does not imply logical
equivalence of two models.  Consider the two-agent \mkCEGS\ in
Fig.~\ref{fig:prebisimnotsim}.  In the states that are
$1$-indistinguishable from $q_I^1$ (respectively, ${q'}^1_I$) the
models are equipped with a single deterministic transition.  In each
of the remaining states agent 1 can execute actions $\{a,b,c\}$ and
agent 2 can execute $\{x,y,z\}$. We assume that the transitions that
are not depicted in the figure lead to $q_{\bot}$ or $q_{\bot}'$,
respectively.  Now, observe that $(\G_7, q_I^1)
\mkmodelss\neg\coop{Ag}Fp$, as it is not possible to choose a
strategies for both agents that select the same pair of actions over
the set of successors of $q_I^1$ and leads to $q_\top$.  On the other
hand, such choice is possible for the successors of ${q'}_I^1$, namely
it suffices to assign $(a,x)$ to each ${q'}^i_I$, where
$i\in\{1,2,3\}$, to see that $(\G_8, {q'}_I^1) \mkmodelss\coop{Ag}Fp$.
Finally, we can define a relation $\mkpaltbisima$ such that
$q_i\mkpaltbisima q'_i$ and $q^i_I\mkpaltbisima {q'}^i_I$ for all
$i\in\{1,2,3 \}$, $q_{\top} \mkpaltbisima q'_{\top}$, $q_{\bot}
\mkpaltbisima q'_{\bot}$, and $q_4\mkpaltbisima q'_3$,
$q^4_I\mkpaltbisima {q'}^3_I$.  A case-by-case check shows that
$\mkpaltbisima$ is a pre-bisimulation.  Specifically, conditions (a)
and (b) in Def.~\ref{def:bisim} are immediate. As regards (c'), notice
that for pre-simulations strategy simulators are defined on collective
knowledge neighbourhoods rather than common knowledge ones. More
specifically, $C_{Ag}(q_1) = \{q_1, q_2, q_3, q_4 \}$ and
$C_{Ag}(q'_1) = \{q_1, q_2, q_3 \}$, while $E_{Ag}(q_1) = \{q_1, q_2,
q_3 \}$ and $E_{Ag}(q'_1) = \{q_1, q_2, q_3 \}$. Then, while a strategy simulator can be defined on $E_{Ag}(q_1)$, $E_{Ag}(q'_1)$, so such simulator exists on $C_{Ag}(q_1)$, $C_{Ag}(q'_1)$.

\section{Case Study: the ThreeBallot Voting Protocol} \label{3ballot}

\begin{figure}[t]\centering

  \newcommand{\wcircle}{\tikz\draw[black,fill=white] (0,0) circle (1ex);}
  \newcommand{\bcircle}{\tikz\draw[black,fill=black] (0,0) circle (1ex);}  
  \setlength{\dashlinedash}{2pt}
  \setlength{\dashlinegap}{4pt}
  \renewcommand{\arraystretch}{1.2}  
  \def \spc {5mm}

  \begin{tabular}{|l c:l c:l c|}
    \hline
    \multicolumn{2}{|c:}{BALLOT} &
    \multicolumn{2}{:c}{BALLOT} &
    \multicolumn{2}{:c|}{BALLOT} \\
    & & & & & \\
    
    \hspace{\spc}Andy Jones \hspace{\spc}& \wcircle\hspace{\spc}
    &
    \hspace{\spc}Andy Jones \hspace{\spc}& \wcircle\hspace{\spc}
    &
    \hspace{\spc}Andy Jones \hspace{\spc}& \bcircle\hspace{\spc}{\ }\\

    \hspace{\spc}Bob Smith \hspace{\spc}& \bcircle\hspace{\spc}
    &
    \hspace{\spc}Bob Smith \hspace{\spc}& \bcircle\hspace{\spc}
    &
    \hspace{\spc}Bob Smith \hspace{\spc}& \wcircle\hspace{\spc}{\ }\\

    \hspace{\spc}Carol Wu \hspace{\spc}& \wcircle\hspace{\spc}
    &
    \hspace{\spc}Carol Wu \hspace{\spc}& \bcircle\hspace{\spc}    
    &
    \hspace{\spc}Carol Wu \hspace{\spc}& \wcircle\hspace{\spc}{\ }\\

    & & & & & \\
    \multicolumn{2}{|c:}{3147524} &
    \multicolumn{2}{:c}{7523416} &
    \multicolumn{2}{:c|}{5530219} \\
    \hline      
  \end{tabular}

  \caption{ThreeBallot showing a vote for Bob Smith 
\label{fig:votingmodel}
  \label{fig:votingmodellatex}
}
\end{figure}

\ThreeBallot~\cite{Rivest06,Rivest07threeProtocols} is a voting protocol that strives to achieve some desirable properties, 
such as anonymity and verifiability of voting, without the use of cryptography.
We provide a brief illustration of the protocol hereafter and refer to
\cite{Rivest06,Rivest07threeProtocols} for further details. Each voter identifies herself at
the poll site, and receives a ``multi-ballot'' paper to vote with.
The multi-ballot consists of three vertical ribbons -- identical
except for ID numbers at the bottom (see Fig.~\ref{fig:votingmodel},
presented after~\cite{Rivest06}).  The voter fills in the
multi-ballot, separates the three ribbons, and casts them into the
ballot box.
The ballot box has
the property, as usual, that it effectively scrambles the
ballot order, destroying any indication of which triple of
ballots originally went together, and what order ballots
were cast in.
To vote for a candidate, one must mark exactly two (arbitrary)
bubbles on the row of the candidate.  To not vote for a candidate, one
must mark exactly one of the bubbles on the candidate's row (again,
arbitrarily).  In all the other cases the vote is invalid.  The
ballots are tallied by counting the number of bubbles marked for each
candidate, and then subtracting the number of voters from the count.

While voting, the voter also receives a copy of one of her three ballots, and she can take it home.
After the election has terminated, all the ballots are scanned and
published on a web bulletin board.  In consequence, the voter can
check if her receipt matches a ballot listed on the bulletin board.
If no ballot matches the receipt, the voter can file a complaint.

Since \ThreeBallot is not a cryptographic protocol, it does not
heavily rely on computers and counting can be done directly. Moreover,
voters have no responsibility to ensure the integrity of cryptographic
keys, and the security process in their vote is essentially the same
as with traditional ballots.

\smallskip
\noindent\textbf{Properties.}  \ThreeBallot was proposed to provide
several properties that reduce the possibility of electoral fraud.
\emph{Anonymity} (cf.~e.g.~\cite{Moran2014}) 
requires that no agent should ever know how another voter voted,
except in cases where it is inevitable, such as when all voters voted
for the same candidate.  Anonymity is important as it limits the
opportunities of coercion and vote-buying.  The latter kind of
properties is usually captured by the notions of coercion-resistance
and
receipt-freeness~\cite{benaloh1994receipt,delaune2005receipt,juels2005coercion}.
In particular, \emph{coercion-resistance} requires that the voter
cannot reveal the value of her vote beyond doubt, even if she fully
cooperates with the coercer. As a consequence, the coercer has no way
of deciding whether to execute his threat (or, dually, pay for the
vote).  A preliminary formalization of coercion-resistance and
receipt-freeness in ATL has been presented
in~\cite{Tabatabaei16expressingCR}.

Finally, \emph{end-to-end voter verifiability}~\cite{Ryan15verifiability,computer-ate-my-vote}
provides a way to verify the outcome of the election by allowing voters to audit the information published by the system.
Typically, the focus is on individual verifiability: each voter should be able check if her vote has been taken into account and has not been altered.

\subsection{iCGS Models}

Here we present three iCGS models of the \ThreeBallot voting protocol.
All models have been specified in ISPL (Interpreted System Programming
Language), the input language of the MCMAS model checker for
multi-agent systems \cite{LomuscioQuRaimondi15}. We do not model
several aspects of the voting system: the ID of each ribbon, the copy
of the ribbon which is given back to each voter after casting her
ballot, the possibility for voters to verify the presence of the
ribbon they are given back after voting.
Moreover, we model a single attacker who is also a voter and, as such,
follows the voting protocol and does not interact in any particular
way with the other agents.

In the iCGS below, each agent is represented by means of her local
variables and their evolution.  The vote collector and bulletin board
(BB) are modeled by the Environment agent (called Env).  This agent
posesses local variables modeling the fact that the voting process is
open and the values of ribbons on the BB.  These variables are
observable by all voters, including the attacker.  Env also owns
private variables for collecting ribbons and can perform the
three actions $stop$, $collect$, $nop$ 
to close the voting process, to collect the votes, and finally to loop
after the publication of the BB.

The agents representing voters have each a private variable
representing their vote for each candidate, and they share three
``ballot'' variables with the environment Env.  These variables
represent the ribbons created by the ``voting machine''.  Casting the
vote is modeled by creating the three ribbons, consistently with the
choice for each candidate. We assume that votes are immediately cast
in the initial state.  Being visible by Env, the values of the three
ribbons are copied by Env onto the (variables represented on the) BB
in a random order.
Each agent has two actions: $vote$, by which the voter casts her vote,
and $nop$, a non-voting or idle action. Action $vote$ is enabled only
in the initial state, $nope$ is always enabled.  All agent variables
are never modified during the voting process.

In the first iCGS, denoted $\G_{tot}$, for each agent choice, all
configurations of the three ribbons that are compatible with the
agent's choice may occur.  The communication between each agent and
Env is entirely at Env's charge, who has direct access to agents'
ribbons and copies them onto the BB. Also, copying is done randomly:
Env chooses a not-yet-copied ribbon from some voter who has cast her vote
(boolean variables are provided to help Env identify these situations)
and copies it onto a free position on the BB.

With the second model, denoted $\G_{lex}$, we model a voting machine
which sorts, according to the lexicographic order, the three ribbons
produced for the agent's vote, and places the greatest one in the
first ``ballot'' variable of the voter, the second greatest in the second
variable, and the least in the third variable.  Hence, for each
choice of an agent, there are still several configurations of ribbons
that are produced, but we no longer consider all permutations of a given
configuration, but only a single representative.

Finally, we modify $\G_{lex}$ into a third model, in which Env no
longer copies ribbons on the BB, but rather counts the votes for each
candidate by peeping at the "ballot" variables of each voter.  This
model is denoted $\G_{count}$.

More formally,
in the case of $\G_{tot}$ for $n$ voters and $c$ candidates, each
global state has the form $( vopen, pub, (ribb_\ell)_{1\leq \ell\leq
  3n}, (ch_i, v_i)_{1 \leq i \leq n}, (s_{ij})_{1\leq i\leq n, 1\leq j
  \leq 3} )$ such that
\begin{enumerate}
\item The Boolean variable $vopen$ is true when the vote is open, while
  $pub$ signals that all ribbons of agents that have voted have been
  copied on the BB.
\item The local states for voter $i$ is
$( vopen, pub, ribb_1, ribb_2,...ribb_{3n}, v_i, s_{i1}, s_{i2}, s_{i3})$.
This means that each voter sees the BB during the process of copying ribbons.
This is not harmful for anonymity since indistinguishability is state-based, which means 
agents do not remember their observations. 
\item Integer $1\leq ch_i\leq c$ specifies the choice of voter $i$.
\item Boolean $v_i$ ($1\leq i\leq n$) registers whether agent $i$ has
  voted.
\item Variables $s_{ij}$ ($1 \leq j \leq 3$) represent the three "strips" of the ballot for voter $i$. They
are shared between each agent and Env, who copies them onto the BB.
Value range is $\{0, 2^{c}-1\} \cup \{\bot\} $ with $s_{ij} = \bot$ denoting that
 strip $s_{ij}$ has been copied on the BB.
\item Integer variables $ribb_\ell$ ($1 \leq \ell \leq 3n $) represent the content of the BB.
Value range is $\{0, 2^{c}-1\} \cup \{\bot\} $, with $ribb_\ell = \bot$ denoting that no content of a strip
has been copied into this ribbon on the BB.
\end{enumerate}

Initial states are such that $vopen = true$, $v_i = false$ for all
$i\leq n$, variables $ribb_\ell$ are set to the \emph{undefined} value
$\bot$
and, for variables $s_{ij}$ we have the following rules modeling the
creation of a triple of ribbons compatible with the choice of a
candidate: for each voter $i$, let $b_{jk}=b^i_{jk}$ be the bit
representing the bubble on the line corresponding to candidate $k$ of
the $j$th ballot of $i$'s vote, as represented by the value of $ch_i$.
A tuple $(b_{jk})_{1\leq j \leq 3, 1 \leq k \leq c}$ is \emph{compatible} with choice $ch_i$
if the following holds:
	\begin{enumerate}
 	\item if $ch_i = k$ then for some $p \leq 3$, $b_{pk} = 0$ and
          for all $p' \neq p$, $b_{p'k} = 1$
 	\item if $ch_i \neq k$ then for some $p \leq 3$, $b_{pk} = 1$ and for all $p' \neq p$, $b_{p'k} = 0$
	\end{enumerate}

Denote $B(ch_i)$ the set of bit tuples $(b_{jk})_{1\leq j \leq 3, 1
  \leq k \leq c}$ compatible with $ch_i$.  Denote further by $R(ch_i)$
the transformation of these bit tuples into integer triples modeling
the valid ballots compatible with choice $ch_i$:
\[
R(ch_i) = \{ (st_{j})_{1 \leq j \leq 3} \mid st_{j} = \sum_{1 \leq k \leq c} b_{jk} \cdot 2^{k-1},
(b_{jk})_{1\leq j \leq 3, 1 \leq k \leq c} \in B(ch_i) \}
\]
As an example, valid triples of integers compatible with a vote for
candidate $2$, for $c = 2$, are all permutations of $(3,2,0)$ together with all
permutations of $(2,2,1)$.
Then $(s_{ij})_{1 \leq j \leq 3} \in R(ch_i)$ for every $1 \leq i\leq n, 1\leq j\leq 3$.

Finally, in $\G_{tot}$ the protocol for the \ThreeBallot system is
given as follows:
\begin{enumerate}
\item actions $vote$ and $nop$ are available to all voters when
  $vopen = true$; otherwise, only $nop$ is available.
\item $stop$ and $nop$ are available to Env when $vopen = true$.
\item $collect$ and $publish$ are available to Env when $vopen =
  false$.
\end{enumerate}

Then, transitions are of the form:
\begin{gather*}
( vopen, pub, (ribb_\ell)_{1\leq \ell\leq 3n},  (\!ch_i, v_i\!)_{1 \leq i \leq n}, (\!s_{ij}\!)_{1\leq i\leq n, 1\leq j \leq 3} \!)
\xrightarrow{(a_e, a_1, a_2,..., a_n)}\\
(vopen'\!,pub'\!, (ribb'_\ell)_{1\leq \ell\leq 3n}, (ch_i', v_i')_{1 \leq i \leq n}, (s_{ij}')_{1\leq i\leq n, 1\leq j \leq 3} )
\end{gather*}
where
\begin{enumerate}
\item $vopen' = false$ if $a_e = stop$ or $vopen = false$; $vopen' = true$ otherwise.
\item For $a_i = vote$, $v_i'= true$, and for $a_i = nop$, $v'_i = v_i$.
\item For $a_e = collect$ (and hence $a_i = nop$ for all $i$)
we have the following:
\begin{enumerate}
\item there exists some subset of pairs $A \subseteq \{1,\ldots,n\}
  \times \{1,2,3\}$ and a pair $(i_0,j_0) \not \in A$ such that
\begin{enumerate}
\item $s_{ij}'= s_{ij} = \bot$ for all $(i,j) \in A$
\item $s_{i_0,j_0}' = \bot$, $s_{i_0,j_0} \neq \bot$ 
\item $s_{ij}' =s_{ij}\neq \bot$ for all $(i,j) \not \in A \cup\{(i_0,j_0)\}$;
\end{enumerate}
\item there exists some $B \subseteq \{1,\ldots,3n\}$ with $ card(B) =
  card(A)$ and some integer $k\not \in B$, $1\leq k\leq 3n$ such that
\begin{enumerate}
\item $ribb'_\ell = ribb_\ell \neq \bot$ for all $\ell \in B$
\item $ribb_k = \bot$, $ribb'_k = s_{i_0,j_0}$
\item $ribb_\ell = ribb'_\ell = \bot$ for all $\ell \not \in B\cup \{k\}$.
\end{enumerate}
\end{enumerate}
\item Action $a_e = publish$ can only be executed when, for each $i$,
  either $s_{i1} = s_{i2} = s_{i3} = \bot$ or $v_i=false$, and its
  effect is to modify only $pub'= true$, all other variables remaining
  unchanged.
\end{enumerate}

In the iCGS $\G_{lex}$ transitions are identical to the above, the
only difference being in the initial states, more specifically in the
configuration of variables $s_{ij}$.  Intuitively, for each choice
$ch_i$ of voter $i$, we only keep those initial configurations in
which $(s_{i1},s_{i2},s_{i3})$ is the maximal, in lexicographic order,
among the encodings $R_{ch_i}$ of $ch_i$.  Formally, given a triple of
integers $t = (t_1,t_2,t_3)$, we denote with $Perm(t)$ the set of all
permutations of triple $t$.
Then, the initial states in $\G_{lex}$ are the initial states
of $\G_{tot}$ for which $(s_{i1},s_{i2},s_{i3}) = \max Perm\big(
(s_{i1},s_{i2},s_{i3})\big )$,
the maximum being considered under the lexicographic order.

Finally, the iCGS $\G_{count}$ is similar to $\G_{lex}$ but all
variables $ribb_\ell$ are replaced with $c$ variables $(co_k)_{1\leq
  k\leq c}$.  The local states for agent $i$ are then of the form $(
vopen, pub=0, v_i, s_{i1}, s_{i2}, s_{i3} )$ and $( vopen, pub=1,
co_1,\ldots,co_{c}, v_i, s_{i1}, s_{i2}, s_{i3} )$.
The specification of transitions is then the same as for $\G_{tot}$,
except
for $a_e = collect$ and items 3.(b.i)-3.(b.iii) above (defining the
updates of variables $ribb_\ell$), which are replaced by the
following:
\begin{itemize}
\item[3.(b')] For each $1\leq k \leq c$, $co'_k = co_k +
  d_{i_0j_0k}$, where $d_{i_0j_0k}$ is the $k$-th least significant
  bit of $s_{i_0j_0}$.
\end{itemize}

\subsection{Coercion freeness and anonymity properties for the \ThreeBallot}

In this section we present the formulas that are of interest for the
verification of \ThreeBallot.  We verify two types of formulas: a
variant of coercion resistance \cite{Tabatabaei16expressingCR} and a
variant of anonymity.  The coerction resistance property specifies the
fact that the attacker $att$ has no strategy by which he could know
how agent $i$ has voted ($i \neq att$):
\[
\varphi_i = \llangle att \rrangle F \big((pub \wedge  v_i) \rightarrow \bigvee_{1\leq j\leq nc} K_{att} (j=ch_i)\big)
\]

Recall that, in our model the attacker is also a voter, which
corresponds to situations in which a voter fully cooperates with the
attacker.
Additionnaly, as already noted in Section 2, the knowledge operator is
definable in ATL with the subjective interpretation as $K_{att} \varphi =
\llangle att \rrangle \varphi U \varphi$.

The anonymity property does not require ATL but rather CTLK, the
combination of branching-time temporal logic with epistemic logic of
e.g. \cite{LomuscioRaimondi06b}.  The $AG$ operator from CTLK, whose meaning is that
along all paths the target subformula holds globally, can be defined
as usual in ATL as follows:
\[
AG \varphi \equiv \llangle \emptyset \rrangle G \varphi
\]

Then, the anonymity property of interest for \ThreeBallot{} is that
any of the agents cannot know, at any time instant, the way another
agent has voted:
\[
\varphi_i^{c} = AG ( \bigwedge_{1 \leq j \leq nc} \neg K_{att} p_{ch_i = j} )
\]
Note that anonymity is not ensured when there are at most three
voters, including the attacker (who may vote).

\subsection{Bisimulations for $\G_{tot}$, $\G_{lex}$ and $\G_{count}$}

The three iCGS defined in the previous section appear to be naturally
related, in particular w.r.t.~the properties pertaining to the
attacker modifying the outcome of the vote or breaking anonymity.  The
interest in simplifying the model is that checking the coercion
resistance property can be done faster and with less memory on
$\G_{count}$ than on $\G_{lex}$, which, on its turn, requires less
time and memory than $\G_{tot}$, as we will see in the last section on
experimental results.  In this section we show that the three models
are bisimilar for the attacker, for the set of atomic propositions
that refer only to choices of agents.  Bisimulations formalize the
``natural relation'' between these iCGS and allows us to check
coercion resistance on the smallest iCGS and then transfer the result
to the two others, in particular to the original model $\G_{tot}$.
Note that these bisimulations work because the properties do not refer
to the status of the BB.  For instance, these bisimulations would not
be useful for simplifying systems for verifiability \cite{DKR09}.

Formally, for each choice for an agent $i$ to vote for candidate $j$,
we introduce an atomic proposition $p_{ch_i=j}$, which holds true only
in those states in which $ch_i = j$.  Then, if we denote the attacker
$att = n$ and $AP = \{p_{ch_i=j} \mid 1\leq i\leq n, 1\leq j\leq
c \}$, we prove that the following relation is an
$\{att\}$-bisimulation over $AP$ between $\G_{tot}$ and $\G_{lex}$: 
\begin{eqnarray*}
& ( vopen,pub, (ribb_\ell)_{1\leq \ell\leq 3n}, (ch_i, v_i)_{1 \leq
i \leq n},  (s_{ij},a_{ij})_{1\leq i\leq n, 1\leq j \leq 3} ) \altbisim{\{att\}}^1 & \\ &
( vopen', pub',(ribb'_\ell)_{1\leq \ell\leq 3n},  (ch_i', v_i')_{1 \leq i \leq n}, (s_{ij}',a_{ij}')_{1\leq i\leq n, 1\leq j \leq 3} ) &
\end{eqnarray*}
iff the following holds:
\begin{enumerate}
\item $vopen = vopen'$, $pub=pub'$, $v_i = v_i'$, $ch_i = ch_i'$ for all $1\leq i\leq n$, 
$s_{att,j} = s_{att,j}'$ and $a_{att,j} = a_{att,j}'$ for all $1\leq j\leq 3$ 
and $ribb_\ell = ribb'_\ell$ for all $1\leq \ell \leq 3n$.
\item For every $1\leq i\leq n$, if we denote 
$b_{jk}$ the $k$th least significant bit of $s_{ij}$ and 
$b'_{jk}$ the $k$th bit of $s'_{ij}$,
then both $(b_{jk})_{1\leq j\leq 3, 1\leq k\leq c},(b'_{jk})_{1\leq j\leq 3, 1\leq k\leq c} \in B(ch_i)$.

\item Denote $\rho_i$ the $S_3$-permutation of $(s_{i1},s_{i2},s_{i3})$ into $(s'_{i1}, s'_{i2},s'_{i3})$, 
i.e. $s_{ij} = s'_{i\rho_i(j)}$. Also when $s_{ij} = s'_{ij} = \bot$ we put $\rho_i = id_{\{1,2,3\}}$.
Then $a_{ij} = a'_{i\rho_i(j)}$ for all $1\leq i\leq n-1,1\leq j\leq 3$.
\end{enumerate}
Intuitively, item (3) above says that $(b'_{jk})$ is the largest, in
lexicographic order, among all tuples in $B_{ch_i}$ that are
permutations of $(b_{jk})$.

For $\G_{lex}$ and $\G_{count}$, we consider the following
$\{att\}$-bisimulation over $AP$: 
\begin{eqnarray*}
& ( vopen, pub,
(\!ribb_\ell\!)_{1\leq \ell\leq 3n},\! (\!ch_i, v_i\!)_{1 \leq i \leq
n}, (s_{ij},a_{ij})_{1\leq i\leq n, 1\leq j \leq 3} ) \altbisim{\{att\}}^2 &\\
&( vopen', pub', (co_k)_{1\leq k\leq c}, (ch_i', v_i')_{1 \leq i \leq n}, (s_{ij}',a_{ij}')_{1\leq i\leq n, 1\leq j \leq 3} )&
\end{eqnarray*}
where:
\begin{enumerate}
\item $vopen = vopen'$, $pub=pub'$, $v_i = v_i'$, $ch_i = ch_i'$, 
$s_{ij} = s_{ij}'$ and $a_{ij} = a_{ij}'$ for all $1\leq i\leq n, 1\leq j\leq 3$.
\item For every $1\leq \ell \leq 3n$ and $1 \leq k\leq c$, if we denote $b_{\ell k}$ 
the $k$th least significant bit on the ribbon $ribb_\ell$, then:
\[
co_k = \sum \big\{ b_{\ell k} \mid ribb_\ell \neq \bot, 1\leq \ell \leq 3n \big\} 
\]
\end{enumerate}

To prove that these relations are indeed alternating bisimulations,
note that the condition 1.(a) is immediately satisfied as whenever
$q \altbisim{\{att\}}^{\iota} q'$, for $\iota = 1,2$, we must have that
$(ch_i = j) \in q$ iff $(ch_i = j) \in q'$.

To prove properties 1.(b) and its dual for $\altbisim{\{att\}}^1$,
consider states $q,r$ in $\G_{tot}$ and $r \in \G_{lex}$ such that
$q \altbisim{\{att\}}^1 q'$ and  $q \sim_{att} r$.
Then it is the case that
\begin{align*}
q & = ( \!vopen, \!pub, \!(\!ribb_\ell\!)_{1\leq \ell\leq 3n}, \!(\!ch_i, v_i\!)_{1 \leq i \leq n}, \!(\!s_{ij},\!a_{ij}\!)_{1\leq i\leq n, 1\leq j \leq 3} )\\
q'& = (\! vopen, \!pub, \!(\!ribb_\ell\!)_{1\leq \ell\leq 3n}, \!(\!ch_i, v_i\!)_{1 \leq i \leq n}, \!(s_{ij}'\!,\!a_{ij}'\!)_{1\leq i\leq n, 1\leq j \leq 3} )\\
r & = ( \!vopen, \!pub, \!(\!ribb_\ell\!)_{1\leq \ell\leq 3n}, \!(\!\br {ch}_i, \br v_i\!)_{1 \leq i \leq n}, \!(\!\br s_{ij},\!\br a_{ij}\!)_{1\leq i\leq n, 1\leq j \leq 3} )
\end{align*}
with $q,q'$ related as per the definition of $\altbisim{\{att\}}^1$
above and $\br {ch}_{att}= ch_{att}$, $\br s_{att,j} = s_{att,j}'$ and
$\br a_{att,j} = a_{att,j}'$ for all $1\leq j\leq 3$. Then set
{\footnotesize
\begin{align*}
r' & = ( vopen,pub, (ribb_\ell)_{1\leq \ell\leq 3n}, (ch_i', v_i')_{1 \leq i \leq n-1}, (\br {ch}_{att}, \br v_{att}), 
 (s_{ij}',a_{ij}')_{1\leq i\leq n-1, 1\leq j \leq 3}, (\br s_{att,j},\br a_{att,j})_{1\leq j\leq 3} )
\end{align*}
} and we obtain the desired result, that is,
$q'\altbisim{\{att\}}^1r'$ and $r \sim_{att}r'$.  The mirror argument
works as well: given $q \altbisim{\{att\}}^1 q'$ and $q' \sim_{att}
r'$, we can find $r$ such that $q\altbisim{\{att\}}^1r$ and $r \sim_{att}r'$.

Conditions 1.(b) and its dual for $\altbisim{\{att\}}^2$ can be proved
similarly, by observing that, given $q, r \in \G_{lex}$, $q'\in \G_{count}$, such that $q \altbisim{\{att\}}^2 q'$ and
$q \sim_{att} r$, then it is the case that 
\begin{align*}
q & = (\! vopen,\!pub, \! (\!ribb_\ell\!)_{1\leq \ell\leq 3n}, \!(\!ch_i, \!v_i\!)_{1 \leq i \leq n}, \!(\!s_{ij},\!a_{ij}\!)_{1\leq i\leq n, 1\leq j \leq 3} )\\
q'\!& = ( \!vopen, \!pub,\! (co_k)_{1\leq k \leq c}, \!(\!ch_i,\! v_i\!)_{1 \leq i \leq n}, \!(s_{ij}'\!,\!a_{ij}'\!)_{1\leq i\leq n, 1\leq j \leq 3} ) \\
r & = ( \!vopen, \!pub, \!(ribb_\ell)_{1\leq \ell\leq 3n}, \!(\br {ch}_i, \!\br v_i\!)_{1 \leq i \leq n}, \!(\!\br s_{ij},\!\br a_{ij}\!)_{1\leq i\leq n, 1\leq j \leq 3} )
\end{align*}
with the same relation between variables of $q$ and $r$ as
above. Then set
{\footnotesize
\begin{align*}
r' & = ( vopen, pub, (co_k)_{1\leq k \leq c}, (ch_i', v_i')_{1 \leq i \leq n-1}, (\br {ch}_{att}, \br v_{att}), 
 (s_{ij}',a_{ij}')_{1\leq i\leq n-1, 1\leq j \leq 3}, (\br s_{att,j},\br a_{att,j})_{1\leq j\leq 3} )
\end{align*}
}
and we obtain that $q'\altbisim{\{att\}}^2r'$ and $r \sim_{att}r'$.

Further, for conditions 1.(c) and its dual, notice first that
for any state $q\in \G_{tot}$ or $q \in \G_{lex}$, $C_{att}(q)$ is the
equivalence class of $q$ w.r.t.~$\sim_{att}$, that is, if $q = (
vopen, pub, (ribb_\ell)_{1\leq \ell\leq 3n}, (ch_i, v_i)_{1 \leq
i \leq n}, (s_{ij},a_{ij})_{1\leq i\leq n, 1\leq j \leq 3} )$, then
$C_{att}(q)$ includes all and only states where the local state for
$att$ is of the form $((ribb_\ell)_{1\leq \ell \leq 3n}, ch_{att},
v_{att}, (s_{att,j}, a_{att,j})_{1\leq j\leq 3})$.  Similarly, for
$q'\in \G_{count}$ with $q' \!= ( \!vopen, pub, (\!co'_k\!)_{1\leq
k\leq c}, (\!ch_i'\!,\! v_i'\!)_{1 \leq i \leq n},
(\!s_{ij}'\!,\!a_{ij}'\!)_{1\leq i\leq n, 1\leq j \leq 3} )$,
$C_{att}(q')$ includes all and only states where the local state for
$att$ is of the form $( (co_k)_{1\leq k\leq c}, ch_{att}, v_{att}, \allowbreak
(s_{att,j}, a_{att,j})_{1\leq j\leq 3})$.

Then, in all three iCGS, on each neighbourhood $C_{att}(q)$, only one
or two partial strategies for $att$ can be defined, depending on
whether the vote is open or not.
Therefore, we define the mapping
$ST_{C^{\G_{tot}}_{att}(\cdot),C^{\G_{lex}}_{att}(\cdot)}$ to
associate to each partial strategy prescribing $nop$ to $att$ in some
$C^{\G_{tot}}_{att}(q)$ the quasi-identical strategy prescribing the
same action in $C^{\G_{lex}}_{att}(q')$. Similarly, to the partial
strategy prescribing $vote$ to $att$ in $C^{\G_{tot}}_{att}(q)$,
$ST_{C^{\G_{tot}}_{att}(\cdot),C^{\G_{lex}}_{att}(\cdot)}$ associates
the strategy prescribing $vote$ in $C^{\G_{lex}}_{att}(q')$.  The dual
mapping $ST'$ is defined similarly, and analogous definitions work for
bisimulation $\altbisim{\{att\}}^2$.  Notice also that these definitions
satisfy the constraints on strategy simulators.

To prove condition 1.(c), consider first the strategy $vote_{att}$
prescribing $vote$ for $att$ on $C^{\G_{tot}}_{att}(q)$ and consider
some state $r'$ such that $q' \xrightarrow{vote_{att}} r'$.  Since when
voting is enabled, Env does not collect votes, $r'$ has the same BB as
$q'$ and all booleans variables $a_{ij}$ are false.  Therefore, we can
choose a state $r$ that has the same values as $r'$ for the local
variables of all voters and the same BB as $q$. Then, we obtain 
$q'\xrightarrow{ST(vote_{att})} r'$ and $r \altbisim{\{att\}}^1r'$.  A
similar line of reasoning works for $\altbisim{\{att\}}^2$ as well.
The same argument applies if the strategy for the attacker is $nop_{att}$
and $q$ and $q'$ are states in which voting is open.

Consider now a state $q$ in which voting is closed 
and the strategy $none_{att}$ prescribing action $nop$ for $att$ on $C^{\G_{tot}}_{att}(q)$
and consider again some state $r'$ such that 
$q' \xrightarrow{nop_{att}} r'$.
Then the only agent which executes a non-idle action on the above
transition is Env, who copies one
of the ribbons onto the BB. This transition can then be simulated in
$\G_{tot}$ by copying the same ribbon (but which might be stored at a
different position in $q$ than in $q'$) onto the BB.  A mirror
argument can be used to prove 1.(c) for $ST'$.

Thirdly, for proving condition 1.(c) for $q\altbisim{\{att\}}^2q'$,
note that the same considerations above apply for the case of a
transition from a state $q'$ in which the voting is open.  For the
case of states $q$, $q'$ where the voting is closed and hence only
strategy $nop_{att}$ is available to the attacker, we note that the
only action that is compatible with $q \xrightarrow{nop_{att}}r$ in
both models is Env collecting votes.  This corresponds in
$\G_{count}$ to an action in which Env counts votes, and
hence we can find a state $r$ that is an $nop_{att}$-compatible
successor of $q$, has the same local states for voters as $r'$,
and in which each counter $co_k$ 
keeps the sum of the bullets
on $k$th line on the copied ribbons from $r$.  This will ensure that
$r \altbisim{\{att\}}^2 r'$.

Finally, as regards condition (2), if $q_1 \altbisim{\{att\}}^1 q'$
and $q_2 \altbisim{\{att\}}^1 q'$, then the attacker must have made
the same choice in $q_1$ and $q_2$ (possibly none), and her "ballot stripes" $s_{att,j}$ and "bookkeeping bits" $a_{att,j}$
must be the same in $q_1$ and $q_2$.
But this entails the indistinguishability of $q_1$ and $q_2$ for the attacker, i.e., then $C_{att}(q_1) =
C_{att}(q_2)$.  The proof of (2) for $\altbisim{\{att\}}^2$ is similar.

To conclude, all iCGS $\G_{tot}$, $\G_{lex}$, and $\G_{count}$ are
bisimilar for the attacker. We will make use of this fact in the next
section on experimental results.

\begin{remark}
Note that the two relations $\altbisim{\{att\}}^1$ and
$\altbisim{\{att\}}^2$ are also bisimulations on models
$\G_{tot}, \G_{lex}$ and $\G_{count}$ in the sense of (the symmetric
variant of) Def.~5.1 in \cite{CohenDLR09}.  Therefore, applying
\cite[Lemma 5.2]{CohenDLR09} both ways, any CTLK formula that is satisfied in any of the
models, is satisfied in all of them. This justifies our use of CTLK
operators in specification $\varphi^c_i$ above.

On the other hand, note that CTLK bisimulations in the sense
of \cite{CohenDLR09} do not preserve ATL formulas under imperfect
information.  Figure~\ref{fig-timed-epistemic-counterexample} provides
a counterexample, with two 1-agent models that are timed epistemic
bisimilar, but do not satisfy the same formulas in ATL.

\end{remark}

\newcommand{\mirrormodels}{\mathrel{\reflectbox{\rotatebox[origin=c]{0}{$\models$}}}}

\definecolor{acsi-blue}{RGB}{44,130,202}
\definecolor{acsi-green}{RGB}{128,197,53}
\definecolor{acsi-orange}{RGB}{245,126,27}
\definecolor{postergreen}{HTML}{008000}
\definecolor{posterpurple}{HTML}{800080}

\begin{figure}[ht]
\begin{center}
\begin{tikzpicture}[auto,node distance=2.5cm,->,>=stealth',shorten
>=1pt,semithick]
\tikzstyle{every state}=[fill=white,draw=black,text=black,minimum size=18pt,circle]
\tikzstyle{every initial by arrow}=[initial text=]
\node[state,draw=red,label = {right:{$\not \models \llangle alice \rrangle F p $}}] 	(s0)   	{};
\node[state,label={left:$s_1$}]    (s1) 	[below left of=s0]	{};
\node[state,label={right:$s_2$}]    (s2) 	[below right of=s0]	{};
\node[state,label={left:$s_3$}]    (s3) 	[below of=s1]	{$p$};
\node[state,label={right:$s_4$}]    (s4) 	[below of=s2]	{};
\node[state,draw=red,label = {right:{$ \models\llangle alice \rrangle F p$}}] 	(t0)   	[right of=s0,node distance = 6.5cm]{};
\node[state,label={left:$t_1$}]    (t1) 	[below left of=t0]	{};
\node[state,label={right:$t_2$}]    (t2) 	[below right of=t0]	{};
\node[state,label={left:$t_3$}]    (t3) 	[below of=t1]	{$p$};
\node[state,label={right:$t_4$}]    (t4) 	[below of=t2]	{};
\draw 	(s0) 	edge[draw=red] node[left] {\textcolor{red}{$*$}} (s1);
\draw 	(s0) 	edge[draw=red] node[left] {\textcolor{red}{$*$}} (s2);
\draw 	(s1) 	edge[draw=blue] node[right] {\textcolor{blue}{$a$}} (s3);
\draw 	(s1) 	edge node[above left=0.2] {{$b$}} (s4);
\draw 	(s2) 	edge[draw=blue] node[right] {\textcolor{blue}{$a$}} (s4);
\draw 	(s2) 	edge node[above right=0.2] {{$b$}} (s3);
\draw   (s1) edge[dashed,-] node[above]  {$\textcolor{posterpurple}{alice}$} (s2);
\draw 	(t0) 	edge[draw=red] node[left] {\textcolor{red}{$*$}} (t1);
\draw 	(t0) 	edge[draw=red] node[left] {\textcolor{red}{$*$}} (t2);
\draw 	(t1) 	edge[draw=blue] node[right] {\textcolor{blue}{$a$}} (t3);
\draw 	(t1) 	edge node[above left=0.2] {{$b$}} (t4);
\draw 	(t2) 	edge node[right] {{$b$}} (t4);
\draw 	(t2) 	edge[draw=blue] node[above right=0.2] {{\textcolor{blue}{$a$}}} (t3);
\draw   (t1) edge[dashed,-] node[above]  {$\textcolor{posterpurple}{alice}$} (t2);
\end{tikzpicture}
\end{center}
\caption{\label{fig-timed-epistemic-counterexample}
Two models that are (timed epistemic) bisimilar in the sense
of \cite{CohenDLR09}, but do not satisfy the same formula in ATL.}
\end{figure}
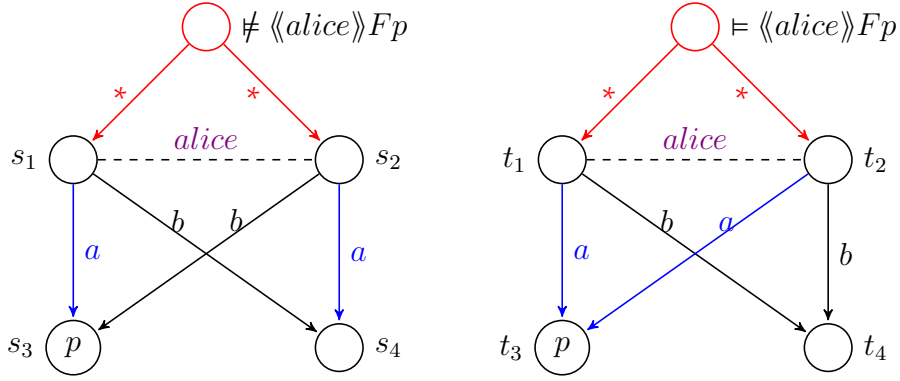

\section{Experimental Results} \label{exp}

In this section, we exhibit the improvements in running time when
checking the same properties over the three bisimilar models. The
three models are checked with growing number of voters and candidates.
For our experiments, we have used the latest version of MCMAS
(1.3.0) \cite{LomuscioQuRaimondi15}.  Tests were made on a virtual
machine running \textsf{Ubuntu 16.04.1 LTS} on a Dell PowerEdge R720
server with two Intel Xeon E5-2650 8 core processors at 2GHz, and 128
GB of RAM. 
In order to investigate certain non-intuitive results 
reported in Tables~\ref{table-atlk-2},~\ref{tab:tot}, and \ref{tab:lex},
we re-ran the experiments on a different machine with a similar 
setup of 16-core Xeon E5620 2.4GHz processor with 64GB RAM.
The purported anomalies have shown to be stable over both the tests.
In all the following tables, NA means a 2 hours timeout has been reached without obtaining any result.
The C programs that were used to generate the ISPL files for the three \ThreeBallot{} models, together with the generated 
ISPL files, are publicly available as \cite{three-ballot-repository}.

MCMAS provides two options, \texttt{-atlk 2} or \texttt{-uniform}, for
checking ATL formulas with uniform strategies, with some differences
in the semantics of ATL formulas (\texttt{-uniform} is similar
with \emph{``irrevocable strategies''}
of \cite{Agotnes07irrevocable}).
We observed that neither of these
options were stable, and lead to a number of experiments ending with
inconsistent results or MCMAS terminating abnormally (segfault on
null-pointer assignment in one of the fixpoint computations related
with ATL satisfiability).
We refer the interested reader to~\cite{HoekLomuscioWooldridge06,BusardPQR15}.

Table~\ref{table-atlk-2} reports the running times for a selection
of $\G_{tot}$ and $\G_{lex}$ models. Note that several surprising data points are present.
Namely, in the case of $(2v,3c)$ the running time is smaller for $\G_{tot}$
than for $\G_{lex}$ despite considerably larger state space (but smaller BDD).
Moreover in case of $(2v,4c)$ the computations for $\G_{lex}$ timed out
at the stage of computing the set of reachable states 
while the verification of $\G_{tot}$ completed successfully in an hour.
Model checking the cases of $(3v,2c)$ and $(4v,2c)$ resulted in segmentation fault
for both $\G_{tot}$ and $\G_{lex}$ on both the machines.
Model checking the cases of $(3v,3c)$ and $(4v,3c)$
did not succeed in reasonable time for both models.

\begin{table}[t]
\footnotesize
\centering
\begin{tabular}{c | c | c | c | }
	  \cline{2-4}
          
	  & \multicolumn{3}{c|}{(\# voters, \# candidates)} \\
	  \cline{2-4}
	  & (2v,2c) & (2v,3c) & (2v,4c)\\
  	  \cline{2-4} \cline{1-4}
\multicolumn{1}{ |r||  }{\multirow{3}{*}{$G_{tot}$} } & 1.5 s & 12.8 s & 3714.0 s  \\
  	  \multicolumn{1}{ |r||  }{ } &   $|S| \approx 5.9e+06$ & $|S| \approx 3.3e+08$ & $|S| \approx 8.8e+09$ \\
  	  \multicolumn{1}{ |r||  }{ } &  $|BDD| \approx 1.4e+07$ & $|BDD| \approx 5.5e+07$ & $|BDD| \approx 4.0e+08$  \\
  	
  	  \hline
\multicolumn{1}{ |r||  }{\multirow{3}{*}{$G_{lex}$} } & 1.3 s & 29.0 s & {\multirow{3}{*}{NA} } \\
  	  \multicolumn{1}{ |r||  }{ } &  $|S| \approx 2.9e+05$ & $|S| \approx 1.1e+07$ & \\
  	  \multicolumn{1}{ |r||  }{ } &  $|BDD| \approx 1.5e+07$ & $|BDD| \approx 6.1e+07$ & \\
  	  \hline
 
\hline
\end{tabular}
\caption{\small MCMAS statistics for coercion freeness, $\G_{tot}$ and $\G_{lex}$ with the \texttt{-atlk 2} flag}
\label{table-atlk-2}
\end{table}

To address the segfault issues,
we also checked the coercion resistance property with \texttt{-atlk 0}
option, which utilizes ATL with perfect information.  This is
nevertheless consistent with our theoretical setting since all tests for more than 3 candidates
show that the formulas are false, and whenever a positive ATL formula
is false under the perfect information semantics, it is also false
under the imperfect information semantics, and hence preserved by
alternating bisimulations.
Tables~\ref{tab:tot} and~\ref{tab:lex} show the details of the
verification of the only configurations for which MCMAS produces
results in reasonable time (timeout = 2 hours) for, respectively,
$\G_{tot}$ and $\G_{lex}$.  Again, while the state-space of $\G_{lex}$
is consistently (and predictably) smaller than the state-space of
$\G_{tot}$, the size of corresponding BDDs can increase. This is
notable in particular in the case of $(2v,3c)$ where model checking is
twice faster for $\G_{tot}$ than for $\G_{lex}$.  The case of
$(2v,4c)$ that terminated successfully for $\G_{tot}$ and timed-out
for $\G_{lex}$ (the entry omitted from the table) is similar.

Our interpretation for the anomalies between state space, BDD size and
running time is that the "ribbon" variables being more constrained in
$\G_{lex}$ than in $\G_{tot}$, the default variable ordering produces
larger BDDs for the more constrained model.  On the other hand, the
anomaly being present for both the \texttt{-atlk 0} and
the \texttt{-atlk 2} flags, it does not seem to come from the
difference in dealing with uniform strategies than with strategies
with perfect information.  Note that the same anomalies are present
for the CTLK model-checking instances below.

\begin{table}[th]
\footnotesize
\centering
\begin{tabular}{r | c || c | c | c |}
	  \cline{3-5}
	  \multicolumn{2}{c|}{} & \multicolumn{3}{c|}{\# candidates} \\
	  \cline{3-5}
	  \multicolumn{2}{c|}{} & 2c & 3c & 4c \\
  	  \cline{3-5} \cline{1-5}
  	  \multicolumn{1}{ |r|  }{\multirow{4}{*}{\rotatebox{90}{\# voters }} } & \multicolumn{1}{ |r||  }{\multirow{2}{*}{2v} } & 1.3 s & 13.3 s  & 3558.5 s \\
  	  \multicolumn{1}{ |r|  }{ } &  &  $|S| \approx 5.9e+06$ & $|S| \approx 3.3e+08$ & $|S| \approx 8.8e+09$
\\
  	  \multicolumn{1}{ |r|  }{ } &  &  $|BDD| \approx 1.2e+07$ & $|BDD| \approx 4.5e+07$ & $|BDD| \approx 2.8e+08$\\          
  	
  	  \cline{2-5}
  	  \multicolumn{1}{ |r|  }{ } & \multicolumn{1}{ |r||  }{\multirow{2}{*}{3v} } & 25.1 s&\multicolumn{1}{ |c|  }{\multirow{3}{*}{NA} } & \multicolumn{1}{ |c|  }{\multirow{3}{*}{NA} } \\
  	  \multicolumn{1}{ |r|  }{ } &  &  $|S| \approx 2.9e+10$ &  & \\
  	  \multicolumn{1}{ |r|  }{ } &  &  $|BDD| \approx 5.9e+07$ &  & \\                    

  	  \cline{2-5}
  	  \multicolumn{1}{ |r|  }{ } & \multicolumn{1}{ |r||  }{\multirow{2}{*}{4v} } & 1291.8 s&\multicolumn{1}{ |c|  }{\multirow{3}{*}{NA} } & \multicolumn{1}{ |c|  }{\multirow{3}{*}{NA} } \\
  	  \multicolumn{1}{ |r|  }{ } &  &  $|S| \approx 1.7e+14$ &  & \\
  	  \multicolumn{1}{ |r|  }{ } &  &  $|BDD| \approx 3.0e+08$ &  & \\                    

  	  \hline
\end{tabular}
\caption{\small MCMAS statistics for coercion freeness, $\G_{tot}$ and \texttt{-atlk 0} flag}
\label{tab:tot}
\end{table}

\begin{table}[th]
\footnotesize
\centering
\begin{tabular}{r | c || c | c |}
	  \cline{3-4}
	  \multicolumn{2}{c|}{} & \multicolumn{2}{c|}{\# candidates} \\
	  \cline{3-4}
	  \multicolumn{2}{c|}{} & 2c & 3c \\
  	  \cline{3-4} \cline{1-4}
  	  \multicolumn{1}{ |r|  }{\multirow{6}{*}{\rotatebox{90}{\# voters}} } & \multicolumn{1}{ |r||  }{\multirow{2}{*}{2v} } & 1.1 s & 26.4 s  \\
  	  \multicolumn{1}{ |r|  }{ } &  &  $|S| \approx 2.9e+05$ & $|S| \approx 1.1e+07$  \\
  	  \multicolumn{1}{ |r|  }{ } &  & $|BDD| \approx 1.5e+07$ & $|BDD| \approx 5.4e+07$ \\
  	  
  	  \cline{2-4}
  	  \multicolumn{1}{ |r|  }{ } & \multicolumn{1}{ |r||  }{\multirow{2}{*}{3v} } & 10.4 s &  \\
  	  \multicolumn{1}{ |c|  }{ } &  &   $|S|\approx 3.1e+08$  & NA \\
  	  \multicolumn{1}{ |r|  }{ } &  &   $|BDD| \approx 5.7e+07$ &  \\
          
  	  \cline{2-4}
  	  \multicolumn{1}{ |r|  }{ } & \multicolumn{1}{ |r||  }{\multirow{2}{*}{4v} } & 297.9 s & \multicolumn{1}{ |c|  }{\multirow{3}{*}{NA} }  \\
  	  \multicolumn{1}{ |c|  }{ } &  &  $|S| \approx 3.9e+11$  & \\
  	  \multicolumn{1}{ |r|  }{ } &  & $|BDD| \approx 8.9e+07$ &   \\          
  	  \hline
\end{tabular}
\caption{\small MCMAS statistics for coercion freeness, $\G_{lex}$ and \texttt{-atlk 0} flag}
\label{tab:lex}
\end{table}

The models $\G_{count}$ can be verified much faster with the \texttt{-atlk 2}, the number of reachable states and the BDD size decreasing
substantially.
Statistics are given in
Table~\ref{tab:count}.  

\begin{landscape}
  \begin{table}[t]
\tiny
\centering
\begin{tabular}{r | c || c | c | c | c | c | c | c |}
	  \cline{3-8}
\multicolumn{2}{c|}{} & \multicolumn{7}{c|}{\# candidates} \\
	  \cline{3-8}
\multicolumn{2}{c|}{} & 2c & 3c & 4c & 5c & 6c & 7c 
\\
  	  \cline{3-8} \cline{1-6}
\multicolumn{1}{ |r|  }{\multirow{18}{*}{\rotatebox{90}{\# voters}} } & \multicolumn{1}{ |r||  }{\multirow{2}{*}{2v} } & 0.1 s & 0.4 s & 1.5 s & 7.1 s & 94.2 s & 
  	       \multicolumn{1}{ |c|  }{\multirow{3}{*}{NA} } 
\\
  	  \multicolumn{1}{ |r|  }{ } &  &  $|S| = 6.3e+03$ & $|S|= 1.7e+05$ & $|S| = 4.6e+06$ & $|S| = 1.2e+08$  & $|S| = 2.9e+09$ & 
\\
  	  \multicolumn{1}{ |r|  }{ } & &  $|BDD| = 1.0e+07  $ & $|BDD| = 1.2e+07$ & $|BDD| = 1.7e+07$ & $|BDD| =3.9e+07$ &  $|BDD| = 9.4e+07$ & 
\\
  	
  	  \cline{2-8}
\multicolumn{1}{ |r|  }{ } & \multicolumn{1}{ |r||  }{\multirow{2}{*}{3v} } & 0.5 s & 4.0 s & 110.0 s & 2917.8 s & 
\multicolumn{1}{ |c|  }{\multirow{3}{*}{NA} } 
\\
  	  \multicolumn{1}{ |c|  }{ } &  &  $|S| = 9.3e+04$ & $|S| = 6.2e+06$ & $|S| = 4.2e+08$ & $|S| = 3.2e+10$ & 
\\
  	  \multicolumn{1}{ |r|  }{ } & &  $|BDD| = 1.3e+07$ & $|BDD| = 2.9e+07$ & $|BDD| = 6.9e+07$ & $|BDD| = 3.1e+08$ & 
\\

\cline{2-7}
  	  \multicolumn{1}{ |r|  }{ } & \multicolumn{1}{ |r||  }{\multirow{2}{*}{4v} } & 1.3 s & 32.7 s & 762.5 s & 
\multicolumn{1}{ |c|  }{\multirow{3}{*}{NA} } 
  	  \\
  	  \multicolumn{1}{ |c|  }{ } &  &  $|S| = 7.8e+06$ & $|S| = 3.6e+08$ & $|S| = 1.6+12$ & 
\\
  	  \multicolumn{1}{ |r|  }{ } & &  $|BDD| = 1.6e+07$ & $|BDD| = 5.6e+07$ & $|BDD| = 1.7e+08$ & 
\\

  	  \cline{2-6}
\multicolumn{1}{ |r|  }{ } & \multicolumn{1}{ |r||  }{\multirow{2}{*}{5v} } & 5.7 s & 341.2 s & 
\multicolumn{1}{ |c|  }{\multirow{3}{*}{NA} }  
  	  \\
  	  \multicolumn{1}{ |c|  }{ } &  &  $|S| = 1.7e+08$ & $|S| = 2.3e+11$ & 
\\
  	  \multicolumn{1}{ |r|  }{ } & &  $|BDD| = 1.9e+07$ & $|BDD| = 1.1e+08$ & 
\\

  	  \cline{2-5}
\multicolumn{1}{ |r|  }{ } & \multicolumn{1}{ |r||  }{\multirow{2}{*}{6v} } & 12.0 s & 
\multicolumn{1}{ |c|  }{\multirow{3}{*}{NA} } \\
  	  \multicolumn{1}{ |c|  }{ } &  &  $|S| = 3.3e+09$ & 
\\
  	  \multicolumn{1}{ |r|  }{ } & &  $|BDD| = 4.2e+07$ & 
\\

  	  \cline{2-4}
\multicolumn{1}{ |r|  }{ } & \multicolumn{1}{ |r||  }{\multirow{2}{*}{7v} } & 38.8 s & 
\multicolumn{1}{ |c|  }{\multirow{3}{*}{NA} } \\
  	  \multicolumn{1}{ |c|  }{ } &  &  $|S| = 7.6e+10$ & 
\\
  	  \multicolumn{1}{ |r|  }{ } & &  $|BDD| = 5.9e_07$ & 
\\

  	  \cline{2-4}
\multicolumn{1}{ |r|  }{ } & \multicolumn{1}{ |r||  }{\multirow{2}{*}{8v} } & 82.0 s & 
\multicolumn{1}{ |c|  }{\multirow{3}{*}{NA} } \\
  	  \multicolumn{1}{ |c|  }{ } &  &  $|S| = 8.7e+12 $ & 
\\
  	  \multicolumn{1}{ |r|  }{ } & &  $|BDD| = 5.9e+07$ & 
\\

  	  \cline{2-4}
\multicolumn{1}{ |r|  }{ } & \multicolumn{1}{ |r||  }{\multirow{2}{*}{9v} } & 139.5 s & 
\multicolumn{1}{ |c|  }{\multirow{3}{*}{NA} } \\
  	  \multicolumn{1}{ |c|  }{ } &  &  $|S| = 2.6e+14$  & 
\\
  	  \multicolumn{1}{ |r|  }{ } & &  $|BDD| = 6.1e+07$ & 
\\
  	  \cline{1-4}

\end{tabular}
\caption{\small MCMAS statistics for coercion freeness, $\G_{count}$ and \texttt{-atlk 2} flag}
\label{tab:count}
\end{table}

\end{landscape}

We also ran the same tests for the anonymity property,
$\varphi_i^{c} = AG ( \bigwedge_{1 \leq j \leq nc} \neg K_{att} p_{ch_i = j} )$
The results are given in Tables \ref{tab:totCTL}, \ref{tab:lexCTL} and \ref{tab:countCTL}.
Note the same anomalies of smaller state space but larger BDD size and running time between 
some $\G_{lex}$ models and the respective $\G_{tot}$ models.

\begin{table}[th]
\footnotesize
\centering
\begin{tabular}{r | c || c | c | c |}
	  \cline{3-5}
	  \multicolumn{2}{c|}{} & \multicolumn{3}{c|}{\# candidates} \\
	  \cline{3-5}
	  \multicolumn{2}{c|}{} & 2c & 3c & 4c \\
 	  \cline{3-5} \cline{1-5}
 	  \multicolumn{1}{ |r|  }{\multirow{4}{*}{\rotatebox{90}{\# voters}} } & \multicolumn{1}{ |r||  }{\multirow{2}{*}{2v} } & 1.1 s & 9.7 s & 3200.2 s \\
 	  \multicolumn{1}{ |r|  }{ } &  & $|S| \approx 5.9e+06$  & $|S| \approx 3.3e+08$ & $|S| \approx 8.8+09$ \\
 	  \multicolumn{1}{ |r|  }{ } &  & $|BDD| \approx 1.2e+07$  & $|BDD| \approx 4.5e+07  	$ & $|BDD| \approx 2.8e+08$ \\

 	  \cline{2-5}
 	  \multicolumn{1}{ |r|  }{ } & \multicolumn{1}{ |r||  }{\multirow{2}{*}{3v} } & 14.0 s & 6455.9 s & \multicolumn{1}{ |c|  }{\multirow{3}{*}{NA} } \\
 	  \multicolumn{1}{ |r|  }{ } &  &  $|S| \approx 2.9e+10$ & $|S| \approx 2.4e+13$ & \\
 	  \multicolumn{1}{ |r|  }{ } &  & $|BDD| \approx 5.4e+07$  & $|BDD| \approx 2.4e+08$ &  \\

 	  \cline{2-5}
 	  \multicolumn{1}{ |r|  }{ } & \multicolumn{1}{ |r||  }{\multirow{2}{*}{4v} } & 440.0 s & \multicolumn{1}{ |c|  }{\multirow{3}{*}{NA} } \\
 	  \multicolumn{1}{ |r|  }{ } &  &  $|S| \approx 1.7e+14$ & \\
 	  \multicolumn{1}{ |r|  }{ } &  & $|BDD| \approx 1.1e+08$  & \\
 	  \cline{1-4}
\end{tabular}
\caption{\small MCMAS statistics for anonymity checking on $\G_{tot}$}
\label{tab:totCTL}
\end{table}

\begin{table}[th]
\footnotesize
\centering
\begin{tabular}{r | c || c | c | c | c |}
	  \cline{3-6}
	  \multicolumn{2}{c|}{} & \multicolumn{3}{c|}{\# candidates} \\
	  \cline{3-6}
	  \multicolumn{2}{c|}{} & 2c & 3c & 4c \\
  	  \cline{3-6} \cline{1-6}
  	  \multicolumn{1}{ |r|  }{\multirow{4}{*}{\rotatebox{90}{\# voters}} } & \multicolumn{1}{ |r||  }{\multirow{2}{*}{2v} } & 0.9 s & 18.9 s & NA   \\
  	  \multicolumn{1}{ |r|  }{ } &  &  $|S| \approx 2.9e+05$ & $|S| \approx 1.1e+07$ &  \\
  	  \multicolumn{1}{ |r|  }{ } &  & $|BDD| \approx 1.3e+07  	$  & $|BDD| \approx 4.9e+07$ & \\

  	  \cline{2-6}
  	  \multicolumn{1}{ |r|  }{ } & \multicolumn{1}{ |r||  }{\multirow{2}{*}{3v} } & 8.3 s & 7100.6 s & \multicolumn{1}{ |c|  }{\multirow{3}{*}{NA} } \\
  	  \multicolumn{1}{ |c|  }{ } &  &  $|S| \approx 3.1e+08$ &  $|S| \approx 1.5e+11$ & \\
  	  \multicolumn{1}{ |r|  }{ } &  & $|BDD| \approx 3.9e+07  	$  & $|BDD| \approx 3.0e+08$ &  \\

  	  \cline{2-5}
  	  \multicolumn{1}{ |r|  }{ } & \multicolumn{1}{ |r||  }{\multirow{2}{*}{4v} } & 323.3 s & \multicolumn{1}{ |c|  }{\multirow{3}{*}{NA} } \\
  	  \multicolumn{1}{ |c|  }{ } &  &  $|S| \approx 4.0e+11$ &  \\
  	  \multicolumn{1}{ |r|  }{ } &  & $|BDD| \approx 6.5e+07 	$  & \\
  	  \cline{1-4}
\end{tabular}
\caption{\small MCMAS statistics for anonymity checking on $\G_{lex}$}
\label{tab:lexCTL}
\end{table}

\begin{landscape}
  \begin{table}[t]
\tiny
\centering
\begin{tabular}{r | c || c | c | c | c | c | c | c |}
	  \cline{3-8}
\multicolumn{2}{c|}{} & \multicolumn{7}{c|}{\# candidates} \\
	  \cline{3-8}
\multicolumn{2}{c|}{} & 2c & 3c & 4c & 5c & 6c & 7c 
\\
  	  \cline{3-8} \cline{1-6}
\multicolumn{1}{ |r|  }{\multirow{18}{*}{\rotatebox{90}{\# voters}} } & \multicolumn{1}{ |r||  }{\multirow{2}{*}{2v} } & 0.1 s & 0.4 s & 1.6 s & 7.6 s & 103.6 s & 
\multicolumn{1}{ |c|  }{\multirow{3}{*}{NA} } 
  	  \\
  	  \multicolumn{1}{ |r|  }{ } &  &  $|S| = 6.3e+03$ & $|S|= 1.6e+05$ & $|S| = 4.6e+06$  & $|S| = 1.2e+08$ & $|S| = 2.9e+09$ & 
\\
  	  \multicolumn{1}{ |r|  }{ } & &  $|BDD| = 1.0e+07$ & $|BDD| = 1.2e+07$ & $|BDD| = 1.9e+07$ & $|BDD| = 5.1e+07$ & $|BDD| = 9.3e+07$ & 
\\
  	
  	  \cline{2-8}
\multicolumn{1}{ |r|  }{ } & \multicolumn{1}{ |r||  }{\multirow{2}{*}{3v} } & 0.5 s & 4.1 s & 114.2 s & 
  	  5617.9 
& \multicolumn{1}{ |c|  }{\multirow{3}{*}{NA} }  
\\
  	  \multicolumn{1}{ |c|  }{ } &  &  $|S| = 9.3e+04$ & $|S| = 6.2e+06$ & $|S| = 4.2e+08$ & 
  	  $|S| = 3.2e+10$ & 
\\
  	  \multicolumn{1}{ |r|  }{ } & &  $|BDD| = 1.2e+07$ & $|BDD| = 2.9e+07$ & $|BDD| = 7.8e+07$ & 
  	  $|BDD| = 4.2e+08$ & 
\\

\cline{2-7}
  	  \multicolumn{1}{ |r|  }{ } & \multicolumn{1}{ |r||  }{\multirow{2}{*}{4v} } & 1.3 s & 29.5 s & 898.8 s & 
\multicolumn{1}{ |c|  }{\multirow{3}{*}{NA} } 
  	  \\
  	  \multicolumn{1}{ |c|  }{ } &  &  $|S| = 7.8e+06$ & $|S| = 3.6e+07$ & $|S| = 1.6+12$ & 
\\
  	  \multicolumn{1}{ |r|  }{ } & &  $|BDD| = 1.6e+07$ & $|BDD| = 5.4e+07$ & $|BDD| = 2.0e+08$ & 
\\

  	  \cline{2-6}
\multicolumn{1}{ |r|  }{ } & \multicolumn{1}{ |r||  }{\multirow{2}{*}{5v} } & 5.7 s & 336.9 s & 
\multicolumn{1}{ |c|  }{\multirow{3}{*}{NA} }  
  	  \\
  	  \multicolumn{1}{ |c|  }{ } &  &  $|S| = 1.7e+08$ & $|S| = 2.3e+11$ & 
\\
  	  \multicolumn{1}{ |r|  }{ } & &  $|BDD| = 1.9e+07$ & $|BDD| = 1.1e+08$ & 
\\

  	  \cline{2-5}
\multicolumn{1}{ |r|  }{ } & \multicolumn{1}{ |r||  }{\multirow{2}{*}{6v} } & 12.1  s & 
\multicolumn{1}{ |c|  }{\multirow{3}{*}{NA} } \\
  	  \multicolumn{1}{ |c|  }{ } &  &  $|S| = 3.2e+09$ & 
\\
  	  \multicolumn{1}{ |r|  }{ } & &  $|BDD| = 4.2e+07$ & 
\\

  	  \cline{2-4}
\multicolumn{1}{ |r|  }{ } & \multicolumn{1}{ |r||  }{\multirow{2}{*}{7v} } & 42.6 s & 
\multicolumn{1}{ |c|  }{\multirow{3}{*}{NA} } \\
  	  \multicolumn{1}{ |c|  }{ } &  &  $|S| = 7.6e+10$ & 
\\
  	  \multicolumn{1}{ |r|  }{ } & &  $|BDD| = 5.3e+07$ & 
\\

  	  \cline{2-4}
\multicolumn{1}{ |r|  }{ } & \multicolumn{1}{ |r||  }{\multirow{2}{*}{8v} } & 73.2 s &
\multicolumn{1}{ |c|  }{\multirow{3}{*}{NA} } \\
  	  \multicolumn{1}{ |c|  }{ } &  &  $|S| = 8.7e+12$ & 
\\
  	  \multicolumn{1}{ |r|  }{ } & &  $|BDD| = 5.9e+07$ & 
\\

  	  \cline{2-4}
\multicolumn{1}{ |r|  }{ } & \multicolumn{1}{ |r||  }{\multirow{2}{*}{9v} } & 140.0 s & 
\multicolumn{1}{ |c|  }{\multirow{3}{*}{NA} } \\
  	  \multicolumn{1}{ |c|  }{ } &  &  $|S| = 2.6e+14$  & 
\\
  	  \multicolumn{1}{ |r|  }{ } & &  $|BDD| = 6.0e+07$ & 
\\
  	  \cline{1-4}
\end{tabular}
\caption{\small MCMAS statistics for anonymity checking on $\G_{count}$}
\label{tab:countCTL}
\end{table}

\end{landscape}

\section{Conclusions} \label{conc}

In this paper we advance the state-of-the-art in model theory and verification for
the strategy logic ATL$^*$ under imperfect information and imperfect
recall. Specifically, we introduce a novel notion of (bi)simulation
on imperfect information concurrent game structures
that preserves the interpretation of formulas in ATL$^*$ both under
the objective and subjective variants of the semantics
(Theorem~\ref{thm-bisim}).  Then, we apply this theoretical result
to the verification of the \ThreeBallot voting system, a relevant
voting protocol without cryptography.  We show how the
\ThreeBallot protocol can be captured within the framework of iCGS, and then provide successive abstractions that are provably bisimilar to the original iCGS for the \ThreeBallot system.
In particular, we have been able to model check the ``smaller'' bisimilar
reductions of the \ThreeBallot model, and then transfer the result
to the original model in virtue of Theorem~\ref{thm-bisim}. As
reported in the experimental results, the gains in terms of both time
and memory resources are significant.

{\bf Future Work.}  We envisage several extensions of the present
contribution. First, it is of interest to develop bisimulations for
strategic properties of agents with perfect recall and bounded recall, as in many application domains
agents do have some memory of past states and actions. Also for the
verification of voting protocols, it is key to extend ATL$^*$ with
epistemic modalities to express naturally properties of anonymity and
confidentiality. We remarked that individual knowledge is expressible
in the subjective semantics. However, no such result holds for the
objective interpretation, nor common knowledge happens to be
definable. Finally, we aim at automating and implementing the
procedures described in this paper in a model checking tool for
the formal verification of (electronic) voting protocols.
 
\medskip\noindent\textbf{Acknowledgements.}
F.~Belardinelli acknowledges the support of the ANR JCJC Project SVeDaS (ANR-16-CE40-0021).
W.~Jamroga acknowledges the support of the National Centre for Research and Development (NCBR),
Po\-land, under the project VoteVerif
(POL\-LUX\--IV/1/2016) and PolLux/FNR-CORE project STV (POL\-LUX\--VII/1/2019).

\bibliographystyle{elsarticle-num}
\newcommand{\hoek}[1]{}

\nocite{EijckO07,KustersTV11,ChadhaKS06,NeuhausserK07}

\end{document}